\documentclass[11pt]{article}

\usepackage{amsmath}
\usepackage{amssymb}
\usepackage{amsthm}
\usepackage{mathtools}
\usepackage{graphicx}
\usepackage{caption}

\usepackage{natbib}

\usepackage{xcolor}
\allowdisplaybreaks

\newtheorem{theorem}{Theorem}[section]

\newtheorem{corollary}[theorem]{Corollary}

\newtheorem{lemma}[theorem]{Lemma}

\newtheorem{proposition}[theorem]{Proposition}
\newtheorem{remark}[theorem]{Remark}

\usepackage{anysize}
\marginsize{20mm}{20mm}{15mm}{30mm} % left right top bottom
\newcommand{\myfootnote}[1]{\linespread{1.0}\selectfont\footnote{#1}\linespread{1.2}\selectfont}

\DeclareMathOperator{\Var}{Var}
\DeclareMathOperator{\Cov}{Cov}

\DeclareMathOperator{\tr}{tr}
\newcommand{\smallpar}{\par\vspace{\baselineskip}}
\begin{document}

%\begin{frontmatter}
\title{
Joint survival annuity derivative valuation in the linear-rational Wishart mortality model
}
\author{Jos\'{e} Da Fonseca\thanks{Auckland University of Technology, Business School, Department of Finance, Private Bag 92006, 1142 Auckland, New Zealand. Phone: +64 9 9219999 extn 5063. Email: jose.dafonseca@aut.ac.nz and PRISM Sorbonne EA 4101, Universit\'e Paris 1 Panth\'eon - Sorbonne, 17 rue de la Sorbonne, 75005 Paris, France. ORCID: 0000-0002-6882-4511} 
\and Patrick Wong\thanks{Monash University, Business School, Department of Econometrics and Business Statistics, Melbourne, VIC, 3800, Australia. Email: patrick.wong@monash.edu. ORCID: 0000-0002-6164-901X}
}

\date{\today}
%\fontsize{12pt}{15pt}\selectfont

\maketitle

\begin{abstract}
This study proposes a linear-rational joint survival mortality model based on the Wishart process. The Wishart process, which is a stochastic continuous matrix affine process, allows for a general dependency between the mortality intensities that are constructed to be positive. Using the linear-rational framework along with the Wishart process as state variable, we derive a closed-form expression for the joint survival annuity, as well as the guaranteed joint survival annuity option. Exploiting our parameterisation of the Wishart process, we explicit the distribution of the mortality intensities and their dependency. We provide the distribution (density and cumulative distribution) of the joint survival annuity. We also develop some polynomial expansions for the underlying state variable that lead to fast and accurate approximations for the guaranteed joint survival annuity option. These polynomial expansions also significantly simplify the implementation of the model. Overall, the linear-rational Wishart mortality model provides a flexible and unified framework for modelling and managing joint mortality risk. 
\end{abstract}

\vspace{6cm}
\begin{bf}JEL Classification\end{bf}: G12; G13; C61\\
{\bf Keywords}: Dependent lives, Joint survival annuity,  Mortality risk, Linear-rational Wishart model, Guaranteed annuity option

\textbf{Acknowledgements:} We thank Qihe Tang and Francesco Ungolo from UNSW Sydney for their useful suggestions and remarks. We also thank the participants at IME 2023, Edinburgh, United Kingdom, and Longevity 19, Amsterdam, Netherlands for their remarks. The usual caveat applies.

%\vspace{1cm}
%\linespread{1}
%\selectfont

\clearpage
\linespread{1.2}
\selectfont
\section{Introduction}

In an era of increasing life expectancies, longevity products such as life annuities have become vital income sources for individuals in their retirement years. The complexities involved in modeling these products cannot be overstated, especially when dealing with multi-life contracts that introduce intricate dependencies among the insured parties. This paper presents a novel framework for modeling dependent mortality intensities and pricing longevity products, including joint survival annuities and options on these annuities. Taking into account our objectives, we use the stochastic differential equations (SDEs) framework  outlined by \citet{Dahl2004}. This method, further explored by \citet{Biffis2005}, is built on the vector affine process originally introduced by \citet{DuffieKan1996}. This framework, initially introduced for modelling a single mortality intensity, has been extended to the case of multiple annuitants/cohorts/populations in several works, see for example \citet{JevticLucianoVigna2013}, \citet{JevticHurd2017}, \citet{JevticRegis2019},  \citet{ZeddoukDevolder2020}, \citet{XuSherrisZiveyi2020b}, \citet{HuangSherrisVillegasZiveyi2022}, \citet{UngoloGarcesSherrisZhou2023}, \citet{Hainaut2023} or \citet{XuChenYang2024} to name a few. However, when it comes to pricing options on multi-annuitant products, such as a joint survival annuity, there is, to the best of our knowledge, no work available. This is largely due to an inherent difficulty in pricing from the exponential vector affine framework used in the aforementioned works.
To overcome this difficulty, we depart from the exponential vector affine literature by first using the Wishart process, an SDE with values in the space of positive definite matrices, a natural space for capturing dependencies. We also depart from this literature by using the potential approach of \citet{Rogers1997} to build a multi-annuitant mortality model for which the joint survival bond is a linear-rational function of the Wishart process. Thanks to this linear-rational structure, the density of the joint survival annuity can be computed using a one-dimensional integration and thus also provides us with key risk management indicators such as value at risk and expected shortfall. It also allows the exact pricing of the guaranteed joint survival annuity option (GAO), as well as fast and efficient GAO price approximations. We provide a complete model implementation that illustrates the analytical results and shows the impact of mortality dependence on the GAO.\\

In the framework introduced in \citet{Dahl2004}, the (joint) survival bond is an exponential affine function of the process and, as a consequence, the future annuity value, which is a sum of (joint) survival bonds, does not admit a known density. This presents two main problems. First, it makes exact pricing of GAOs impossible. The absence of an exact option pricing formula in mortality products and interest-rate products is well known and explained in detail in \citet{BiffisMillossovich2006}. The second problem of the conventional approach stems from a property inherent to the vector affine process of \citet{DuffieKan1996}. Namely, the diffusion of this process must adhere to specific constraints, as detailed in \citet{DuffieFilipovicSchachermayer2003}. When applied to model dependencies, this can result in certain correlation limitations, as illustrated by \citet{DaFonsecaGrasselliTebaldi2007} in the context of an equity derivative. These are the two main problems with the conventional approach when it comes to pricing options on a (joint) survival annuity and modelling the dependence between mortality intensities, and our model will address both of them.\\

Following the potential approach of \citet{Rogers1997}, we build a multi-annuitant mortality model for which the joint survival bond is a linear-rational function in the state variable. For the state variable, we use the Wishart process, which is a stochastic process with values in the space of positive definite matrices (\textit{i.e.}, covariances) that has strong analytical properties. \citet{GrasselliTebaldi2008} show how to compute its moment generating function in closed form, it depends on a matrix Riccati differential equation that can be explicitly integrated (it is solvable), making the Wishart process as tractable as the Cox-Ingersoll-Ross process (\textit{i.e.,} the scalar square root process).  The Wishart process is used in \citet{DaFonsecaGrasselliTebaldi2007} to build a multi-asset equity model with dependent equity volatilities; in \citet{ChiarellaDaFonsecaGrasselli2014} to build a multi-factor interest rate model with correlated factors to price range notes derivatives; in \citet{DeelstraGrasselliVanWeverberg2016} to model the dependence between interest rates and a (single) mortality intensity to price a GAO in the exponential affine framework; and in \citet{DaFonseca2024} for modelling the dependence between interest rates and a (single) mortality intensity to price a GAO in a linear-rational framework. In all these works, the remarkable analytical properties of the Wishart process are crucial for obtaining the results. The Wishart process allows us to specify a very general dependence form between the different mortality intensities and the linear-rational structure of the model ensures that the joint survival annuity is a linear-rational function of the Wishart process. This property, combined with the analytical properties of the Wishart process, allows the calculation (up to numerical integration) of the density of the joint survival annuity and therefore the development of various risk management tools, such as value at risk or expected shortfall, which depend on this density. Furthermore, these properties also allow the calculation of the GAO in closed form, making the linear-rational Wishart model much more tractable than models based on the exponential affine modelling  approach. Even if GAO pricing can be done efficiently, the properties of the Wishart process easily yield some interesting quantities, such as the moments of the joint survival annuity, which can be used to derive GAO price approximations in the spirit of \citet{CollinDufresneGoldstein2002} or \citet{FilipovicMayerhoferSchneider2013}. We think these quantities could also be used to estimate the model using a method of moments. Beyond these analytical results, and to gain a better understanding of the model, we illustrate how the model works through several numerical experiments. One of them confirms how important the correlation between mortality intensities is when pricing a GAO.\\

The remainder of the paper is organised as follows. In Section~\ref{Framework} we introduce the instruments we intend to value, which includes the joint survival annuity and an option on the joint survival annuity, the guaranteed joint survival annuity option. Section~\ref{LRWishart} introduces the linear-rational Wishart mortality model. Here, we develop the necessary theory and background to work with our model and derive different properties of the model, in particular its characteristic function owing to the affine nature of the Wishart process. We also introduce the potential approach here, which is then used in Section~\ref{PricingAnnuity} where we price the joint survival annuity and explicitly derive the positive mortality intensities. Section~\ref{PricingAnnuityDerivative} prices the guaranteed joint survival annuity option, where we also develop three approximation approaches based on a Gaussian approximation, a spectral decomposition approach and a gamma approximation. An implementation and numerical experiments of our model are conducted in Section~\ref{Section:num_experiments}. %Section~\ref{Section:Extensions} contains some remarks on how to extend the model. 
Finally, we conclude and provide suggestions on further directions of research in Section~\ref{Conclusion}. {\color{black}The proofs of all the propositions in this work are presented in the Supplementary Appendix.}

\section{The modelling framework}\label{Framework}

Let \((\Omega, \mathcal{G}, (\mathcal{G}_t )_t, \mathbb{P})\) denote a filtered probability space where \(\mathbb{P}\) is the historical probability measure. We define \(\tau_x\) and \(\tau_y\) as future lifetimes for annuitants aged \(x\) and \(y\) respectively. These lifetimes have corresponding stochastic intensities \((\mu_x(t))_{t\geq 0}\) and \((\mu_y(t))_{t\geq 0}\). Referring to the annuitant aged \(x\) as \(x\) and aged \(y\) as \(y\), and inspired by \citet{Dahl2004} and \citet{Biffis2005}, \(\tau_x\) marks the initial jump-time of a \(\mathcal{G}\)-counting process \((N_t^x)_{t\geq 0}\), indicating if individual \(x\) has passed away \((N_t^x \neq 0)\) by time \(t\). Similarly, \(\tau_y\) indicates the same for individual \(y\). \\

Drawing from \citet{DeelstraGrasselliVanWeverberg2016}, \(\mathcal{R}_t\) and \(\mathcal{M}_t\) are filtrations produced by the interest rate and joint mortality intensity processes respectively. Let \(\mathcal{F}_t := \mathcal{R}_t \vee \mathcal{M}_t\) be the minimal \(\sigma\)-algebra that contains $\mathcal{R}_t \cup \mathcal{M}_t$. The filtration \((\mathcal{G}_t)_{t\geq 0}\) embodies the progressive information available, capturing the evolution of both state variables and whether the annuitants \(x\) or \(y\) have died at time \(t\). Hence, we set \(\mathcal{G}_t := \mathcal{F}_t \vee \mathcal{H}_t\), where \(\mathcal{H}_t:= \sigma(\{\{\tau_x \leq s\},\, \{\tau_y \leq s\}  ,\,0 \leq s \leq t\})\) is the minimum filtration wherein \(\tau_x\) and \(\tau_y\) are established stopping times. \\
% \(\mathcal{H}_t:= \sigma(\mathbf{1}_{\lbrace \tau_x\leq s;\; 0\leq s \leq  t\rbrace}) \vee \sigma(\mathbf{1}_{\lbrace \tau_y\leq s;\; 0\leq s \leq  t\rbrace})\)

In line with \citet{DeelstraGrasselliVanWeverberg2016}, we posit that \((N_t^x,N_t^y)_{t\geq 0}\) operates as a two-dimensional Cox process, driven by an \(\mathcal{F}\) sub-filtration of \(\mathcal{G}\). This implies the following survival probability on the set \(\lbrace \tau_x>t, \tau_y>t \rbrace \):
\begin{align}
\mathbb{Q}(\tau_x>T, \tau_y>T|\mathcal{G}_t)=\mathbb{E}^{\mathbb{Q}}\left[\left. e^{-\int_t^T\mu_x(s) + \mu_y(s)ds} \right\vert \mathcal{G}_t\right], \label{EQ:JointDeath}
\end{align}
where $\mathbb{Q}$ is a risk-neutral pricing measure equivalent to $\mathbb{P}$, and $\mathbb{E}^{\mathbb{Q}}\left[\cdot|\mathcal{G}_t\right]$ denotes the conditional expectation under $\mathbb{Q}$ given $\mathcal{G}_t$. Regarding the choice of $\mathbb{Q}$, we follow \citet{DeelstraGrasselliVanWeverberg2016} and references therein and assume that it can be selected on the basis of available market data. For literature addressing the non-diversifiable aspect of mortality risk, which results in an incomplete market, refer to studies such as \citet{DahlMoller2006}, \citet{BayraktarMilevskyDavidPromislowYoung2009}, \citet{HuangTsaiYangCheng2014}, and \citet{CeciColaneriCretarola2015}.\\ 

For completeness, we restate some well known equations in the literature. We note that since $(N_t^x, N_t^y)$ is a two-dimensional Cox process, then it follows that conditional on a particular trajectory $(\mu_x(t),\mu_y(t))_{t \geq 0}$ the process $(N_t^x,N_t^y)_{t \geq 0}$ is a two-dimensional independent Poisson-inhomogeneous process with parameter $(\int_0^t\mu_x(s)ds)_{t \geq 0}$ for $(N_t^x)_{t \geq 0}$ and $(\int_0^t\mu_y(s)ds)_{t \geq 0}$ for $(N_t^y)_{t \geq 0}$. As such, for all $t_1 \geq  t  \geq 0 $  and $t_2 \geq  t  \geq 0 $ and nonnegative integers $k_1$, $k_2$, the following equality holds
\begin{align*}
I(k_1,k_2)	:=& \,\mathbb{Q}( (N_{t_1}^x - N_t^x = k_1) (N_{t_2}^y- N_t^y = k_2)| \mathcal{H}_{t} \vee \mathcal{F}_{\infty})\\
			=& \,\mathbb{Q}( (N_{t_1}^x - N_t^x = k_1) | \mathcal{H}_{t} \vee \mathcal{F}_{\infty}) \mathbb{Q}( (N_{t_2}^y- N_t^y = k_2)| \mathcal{H}_{t} \vee \mathcal{F}_{\infty})\\
			=& \,\frac{\left(\int_t^{t_1}\mu_x(s)ds\right)^{k_1} }{k_1!}e^{-\int_t^{t_1}\mu_x(s)ds}\frac{\left(\int_t^{t_2}\mu_y(s)ds\right)^{k_2} }{k_2!}e^{-\int_t^{t_2}\mu_y(s)ds},
\end{align*}
where we used the conditional independence between $(N_t^{x})_{t\geq 0}$ and $(N_t^{y})_{t\geq 0}$. Taking $k_1=0$, $k_2=0$ and integrating the intensities lead to \eqref{EQ:JointDeath}.\\

Still following the literature, see for example \citet[Sec. (3.4)]{Dahl2004} and \citet[Eq.~(10)]{Biffis2005}, the $\mathcal{G}_t$-conditional density of $(\tau_x,\tau_y)$ on the set $\lbrace \tau_x>t,  \tau_y>t\rbrace$, denoted $f_t(t_1,t_2)$, is given by $f_t(t_1,t_2)  = \partial^2_{t_1t_2}\mathbb{Q}( (\tau_x>t_1)(\tau_y>t_2) | \mathcal{G}_t)$. 
%\begin{align}
%f_t(t_1,t_2)dt_1dt_2 &:=\mathbb{Q}( \tau_x\in [t_1, t_1+dt_1], \tau_y\in [t_2, t_2+dt_2]| \mathcal{G}_t)\nonumber \\
					 %&= \mathbb{E}^{\mathbb{Q}}\left[\mu_x(t_1)\mu_y(t_2) \left. e^{-\int_t^{t_1}\mu_x(s)ds}e^{-\int_t^{t_2}\mu_y(s)ds}\right\vert\mathcal{G}_t\right]dt_1dt_2.\label{EQ:MortalityDensity}
%\end{align}
Note that nothing so far was said on the intensities $(\mu_x(t),\mu_y(t))_{t \geq 0}$ apart from the fact that they have to be positive. If they are dependent, then there is a dependency between the mortality intensities even if conditionally on the intensities $(N_t^x)_{t \geq 0}$ and $(N_t^y)_{t \geq 0}$ are independent. It is through the intensities that we can incorporate a dependency between the mortality events $\tau_x$ and $\tau_y$.\\

Assuming $0 \leq t \leq T$, then the pricing at time $t$ of an insurance contingent claim payable at $T$ paying $1$ on the survival of annuitants aged $x$ and $y$ at time 0 is given on the set \(\lbrace \tau_x>t, \tau_y>t \rbrace \)  by
\begin{align}
\mathrm{SB}(t, T):=\mathbb{E}^{\mathbb{Q}}\left[\left. e^{-\int_t^T r(s)ds}\mathbf{1}_{\lbrace \tau_x>T \rbrace}\mathbf{1}_{\lbrace \tau_y>T \rbrace}\right\vert \mathcal{G}_t \right], \label{EQ:SB1}
\end{align}
where $(r(t))_{t\geq 0}$ is the risk free rate which is adapted to the filtration $\left(\mathcal{F}_t\right)_{t\geq 0}$. This instrument is commonly called the joint survival zero coupon bond. At this stage it is required to specify whether the risk free rate $(r(t))_{t\geq 0}$ depends on the mortality intensities $(\mu_x(t),\mu_y(t))_{t \geq 0}$. We follow the literature and assume that interest rates and the mortality risks are independent from each other, see for example \citet{Biffis2005}.\myfootnote{We acknowledge the strand in the SDE framework literature that correlates interest rates and mortality risk. See for example \citet{DeelstraGrasselliVanWeverberg2016} and \citet{DaFonseca2024} for works that use the Wishart process to handle that dependence.} \\

Furthermore, using \eqref{EQ:JointDeath} as well as the fact that the conditioning on $\mathcal{G}_t$ can be reduced to that on $\mathcal{F}_t$ as shown in \citet[Appendix C]{Biffis2005}, we obtain 
\begin{align}
\mathrm{SB}(t, T)=\mathbf{1}_{\lbrace \tau_x>t,\tau_y>t \rbrace}P(t,T)\mathbb{E}^{\mathbb{Q}}\left[\left. e^{-\int_t^T \mu_x(s) + \mu_y(s)ds}\right\vert \mathcal{F}_t \right], \label{EQ:SB2b}
\end{align}
%\begin{align}
%\mathrm{SB}(t, T)=P(t,T)\mathbb{E}^{\mathbb{Q}}\left[\left. \mathbf{1}_{\lbrace \tau_x>T \rbrace}\mathbf{1}_{\lbrace \tau_y>T \rbrace}\right\vert  \mathcal{G}_t \right], \label{EQ:SB2a}
%\end{align}
with $P(t,T)$ the time $t$-value of a zero-coupon bond with maturity $T$. Hereafter, $\mathbb{E}^{\mathbb{Q}}\left[\cdot| \mathcal{F}_t \right]$ will be denoted $\mathbb{E}_t^{\mathbb{Q}}\left[\cdot\right]$, so that the joint survival (zero-coupon) bond, denoted by $\mathrm{SB}(t,T)$, is eventually given on the set \(\lbrace \tau_x>t, \tau_y>t \rbrace \) by 
\begin{align}
\mathrm{SB}(t,T) &= P(t,T)\mathbb{E}_t^{\mathbb{Q}}\left[e^{-\int_t^T \mu_x(s)+ \mu_y(s) ds}\right] \nonumber\\
				 &= P(t,T)\mathrm{SB}_0(t,T), \label{EQ:SBZC}
\end{align}
where we use $\mathrm{SB}_0(\cdot, \cdot)$ to denote the joint survival bond that contains no interest rate risk. The equivalent actuarial notation is given by ${}_Tp_{xy}$,  however, due to the similarities of our work with \citet{Biffis2005} and \citet{DeelstraGrasselliVanWeverberg2016}, we instead follow their notation.\\

Using the framework introduced so far, the joint life annuity is therefore the sum of joint survival bonds, \textit{i.e.}, a product that pays at certain pre-specified dates an amount (possibly date dependent), conditional on the survival of both annuitants at those dates. Let $T_1,\ldots,T_N$ be a set of yearly spaced dates such that $T_N=\min(x^*-x-1,y^*-y-1)$ with $x^*$ and $y^*$ being the maximum attainable age for the annuitants $x$ and $y$ respectively. Consider the product that pays at times $T_i$ the amount $1$ conditional on the survival of both annuitants, the value of the product at time $t$ is given on the set \(\lbrace \tau_x>t, \tau_y>t \rbrace \) by
\begin{align}
\sum_{i=1}^N \mathrm{SB}(t,T_i), \label{EQ:SB3}
\end{align}
with $\mathrm{SB}(t,T)$ given by \eqref{EQ:SBZC}.\\

Another product of interest is the guaranteed joint survival annuity option, \textit{i.e.}, an option on \eqref{EQ:SB3}. We follow \citet[Section 3.2]{DeelstraGrasselliVanWeverberg2016}, who analyse the guaranteed survival annuity option (for a single annuitant), but consider the case of multiple annuitants. Suppose a guaranteed joint survival annuity option with maturity $T$ and consider a set of yearly spaced dates $T_1,\ldots, T_N$ such that $T_1=T + 1$ (where the units are years) and $T_N=\min(x^*-(x+T)-1,y^*-(y+T)-1)$ with $x^*$ and $y^*$ again being the maximum attainable age for annuitants $x$ and $y$ respectively. Then we denote the value of a guaranteed joint survival annuity option at time $t$, exercisable at $T$, on a joint survival annuity paying at times $T_1,...,T_N$ as $C(t,T,T_N)$. Thus at $t=T$, the payoff of the option is given on the set \(\lbrace \tau_x>T, \tau_y>T \rbrace \)  by
\begin{align}
C(t=T,T,T_N) &:= \max\left(g \sum_{i=1}^N \mathrm{SB}(T,T_i) ,1 \right) =1 + g\bar{C}(T,T,T_N), \label{EQ:Payoff}
		 %&=1+ g\left(\sum_{i=1}^N \mathrm{SB}(T,T_i) -1/g\right)_+ \nonumber \\ 
\end{align}
where $g$ is the fixed rate called the guaranteed joint survival annuity rate and $(z)_+=\max(z,0)$ and $\bar{C}(T,T,T_N) = \left(\sum_{i=1}^N \mathrm{SB}(T,T_i) -1/g\right)_+$. Therefore, the guaranteed joint survival annuity option value at time $t$ is given on the set \(\lbrace \tau_x>t, \tau_y>t \rbrace \) by
%\begin{align*}
%C(0,T,T_N) 	&= \mathbb{E}^{\mathbb{Q}}\left[ e^{-\int_{0}^{T}(r(s)+\mu_x(s)+\mu_y(s))ds} C(T,T,T_N) \right] \nonumber \\ 
			%&= P(0,T)\mathrm{SB}_0(0,T) + g \bar{C}(0,T,T_N),\nonumber 
%\end{align*}
\begin{align*}
C(t,T,T_N) 	&=P(t,T)\mathrm{SB}_0(t,T) + g \mathbb{E}_t^{\mathbb{Q}}\left[ e^{-\int_{t}^{T}(r(s) + \mu_x(s)+\mu_y(s))ds} \bar{C}(T,T,T_N) \right],\nonumber 
\end{align*}
with $\bar{C}(t,T,T_N) $ the expectation on the right hand side above.
%\begin{align}
%\bar{C}(0,T,T_N) := \mathbb{E}^{\mathbb{Q}}\left[ e^{-\int_{0}^{T}({\color{black}r(s)} + \mu_x(s)+\mu_y(s))ds} \bar{C}(T,T,T_N) \right].   \label{EQ:GAO} 
%\end{align}
It is worth having a closer look at the expression of $\bar{C}(t,T,T_N)$ as it highlights some modelling issues. Using \eqref{EQ:SBZC},  we obtain on the set \(\lbrace \tau_x>t, \tau_y>t \rbrace \) the equality
\begin{align}
\bar{C}(t,T,T_N) = \mathbb{E}_t^{\mathbb{Q}}\left[ e^{-\int_{t}^{T}(r(s) + \mu_x(s)+\mu_y(s))ds}\left( \sum_{i=1}^NP(T,T_i)\mathrm{SB}_0(T,T_i)-1/g \right)_+  \right],  \label{EQ:GAO2} 
\end{align}
where we can clearly see that even if the interest rate is independent with respect to the mortality intensities, its distribution impacts the option price. Given the difficulty of treating stochastic interest rates, independent or dependent of the mortality intensities, we take the interest rate to be deterministic.\\
% We elaborate further on the treatment of stochastic interest rates in our model in Section~\ref{Subsection:interest_rates}.\\

Let us stress the fact that all of the equations stated above are model independent, and are not contingent on using any specific model or process. \\

As seen from \eqref{EQ:GAO2}, the pricing of a guaranteed (joint) annuity option is a difficult task, as it requires the law of the sum of joint survival (zero coupon) bonds. If the joint survival bond is an exponential function of the state variable, then this amounts to requiring the law of the sum of exponentials of random variables that is in general unknown. This difficulty is well explained in \citet{BiffisMillossovich2006} and is one of the main difficulties we need to overcome.

%%%%%%%%%%%%%%%%%%%%%%%%%%%%%%%%%%%%%%%%%%%%%%%%%%%%%%%%%%
%
% The model
%
%%%%%%%%%%%%%%%%%%%%%%%%%%%%%%%%%%%%%%%%%%%%%%%%%%%%%%%%%%
\section{The linear-rational Wishart model}\label{LRWishart}

To build the joint mortality model we use the Wishart process defined in \citet{Bru1991} and introduce its main properties. In the following section, we provide a concise review of the essential techniques and findings required to construct our model. This presentation is prompted by the relatively novel application of the Wishart process in mortality modeling. Given a filtered probability space $(\Omega,\mathcal{F}, (\mathcal{F}_t)_t, \mathbb{P})$ we denote by $\mathbb{E}\left[\,\cdot\,\right]$ (resp. $\mathbb{E}_t\left[\,\cdot\,\right]:=\mathbb{E}\left[\,\cdot\,|\mathcal{F}_t \right]$) the expectation (resp. conditional expectation) under the historical probability measure $\mathbb{P}$. The Wishart process satisfies the matrix SDE
\begin{equation}
dv_t = (\omega + m v_t + v_t m^\top )dt + \sqrt{v_t}dw_t \sigma +  \sigma^\top dw_t^\top \sqrt{v_t}\,, \label{EQ:Wishart}
\end{equation}
where $v_t$ is an $n\times n$ matrix that belongs to the set of positive definite matrices denoted $\mathbb{S}_n^{++}$, $m,\sigma$ belong to the set of $n\times n$ real  matrices denoted $\mathsf{M}(n)$, $\lbrace w_t;t\geq 0 \rbrace$ is a matrix Brownian motion of dimension $n\times n$ (\textit{i.e.}, a matrix of $n^2$ independent scalar Brownian motions) under the probability measure $\mathbb{P}$ and $\cdot^\top$ stands for the matrix transposition. The matrix $\omega \in \mathbb{S}_n^{++}$ satisfies certain constraints involving $\sigma^{\top}\sigma$ to ensure the positive-definiteness of the matrix process $v_t$. The quantity $\sqrt{v_t}$ is well defined since $v_t \in \mathbb{S}_n^{++}$. 
The matrix $m$ is such that $\lbrace \Re(\lambda_i^{m})<0; i=1,\ldots,n \rbrace $ where $\lambda_i^{m}\in \mathsf{Spec}(m)$ for $i=1,\ldots,n$ and $\mathsf{Spec}(m)$ is the spectrum of the matrix $m$, while $\Re(\,\cdot\,)$ denotes the real component.
 The matrix $\sigma$ belongs to $\mathsf{GL}_n(\mathbb{R})$ the general linear group over $\mathbb{R}$ (\textit{i.e.}, the set of real invertible matrices). Due to the invariance of the law of the Brownian motion with respect to rotations and the polar decomposition of $\sigma$, we can assume that $\sigma \in \mathbb{S}_n^{++}$. We denote $e_{ij}$ to be the basis of $\mathsf{M}(n)$, \textit{i.e.}, it is the $n\times n$ matrix with $1$ in the $(i,j)$ place and zero elsewhere. Lastly $I_n$ stands for the $n\times n$ identity matrix, $0_n$ is the $n\times n$ null matrix while $\textup{diag}(z)$ with $z\in \mathbb{R}^n$ is the $n\times n$ matrix with $z$ on its diagonal. The infinitesimal generator of the Wishart process is given by \citet{Bru1991}:
\begin{align}
\mathcal{L} = \textup{tr}[(\omega + m v + vm^\top)D + 2 v D \sigma^2 D]\,, \label{EQ:InfinitesimalGenerator}
\end{align}
where $\textup{tr}[\,\cdot\,]$ is the trace of a matrix and $D$ is the $n\times n$ matrix operator $D_{ij}:=\partial_{v_{ij}}$.\\

The Wishart process was initially defined and analysed in \citet{Bru1991} under the assumption that 
\begin{equation}
	\omega=\beta \sigma^2 \label{EQ:BruOmegaSpecification}
\end{equation}
with $\beta \in \mathbb{R}_+$ such that $\beta \geq n+1$  to ensure that $v_t \in \mathbb{S}_n^{++}$. Hereafter, this specification will be referred to as the Bru case. It was later extended in \citet{MayerhoferPfaffelStelzer2011} \citep[see also][]{CuchieroFilipovicMayerhoferTeichmann2011} to the case $\omega \in \mathbb{S}_n^{++}$ and proved that if 
\begin{align}
	\omega - \beta \sigma^2 \in \mathbb{S}_n^{++}, \label{EQ:PositiveConstraint}
\end{align}
with $\beta\geq n+1$ then $v_t\in \mathbb{S}_n^{++}$.\\

In our work, we deliberately restrict ourselves to the Bru case, (\textit{i.e.}, $\omega=\beta \sigma^2$).
The advantage of the Bru case is that moment generating function can be explicitly integrated making it fully explicit. This allows us to derive analytic formulae surrounding the guaranteed joint survival annuity option and the individual mortalities in contrast to the work of \cite{DaFonseca2024} which is not in the Bru case. The below result is well known. We refer to \citet{Bru1991} or \citet{GrasselliTebaldi2008} \citep[see also][Chap. 5]{Alfonsi2015}.\myfootnote{ Following the introduction of the Wishart process in finance by \citet{GourierouxSufana2010}, applications have been developed in equity and foreign exchange derivatives (see \citet{DaFonsecaGrasselliTebaldi2007}, \citet{LeungWongNg2013}, \citet{GnoattoGrasselli2014}, \citet{LaBuaMarazzina2022}, and \citet{FarazBehzadHusseinArianEscobarAnel2025}); in interest rate derivatives (see \citet{GourierouxSufana2011}, \citet{Gnoatto2012}, and \citet{CuchieroFontanaGnoatto2019}); and in optimal portfolio choice (see \citet{BauerleLi2013} and \citet{BrangerMuckSeifriedWeisheit2017}), to name just a few works across these fields. In all these studies, the general dependency structure among the components of the Wishart process, along with its strong analytical properties, plays a crucial modelling role.}

\begin{proposition}\label{Prop:MGFBru}
	Suppose that $\omega$ in  \eqref{EQ:Wishart} is such that $\omega=\beta \sigma^2$ with $\beta \in \mathbb{R}$ and $\beta \geq n+1$ then the moment generating function of $v_t$ for $\theta_1\in \mathbb{S}_n$ is given by
	\begin{align}
		\Phi(t,\theta_1,v_0) &=\mathbb{E}\left[ \exp\left( \textup{tr}[\theta_1 v_t]  \right)\right]\nonumber \\
		&=\frac{\textup{etr}\left(\frac{\vartheta_t^\top}{2}(2\varsigma_t \theta_1)(I_n-2\varsigma_t\theta_1)^{-1}  \right)}{\det(I_n -2\varsigma_t\theta_1)^{\beta/2}}, \label{EQ:mgfBru}
	\end{align}
	with $\varsigma_t :=\int_0^te^{(t-s)m}\sigma^2 e^{(t-s)m^\top}ds$, which is given by $\textup{vec}(\varsigma_t)=\mathsf{A}^{-1}(e^{\mathsf{A}t}-I_{n^2})\textup{vec}(\sigma^2)$,  $\vartheta_t := \varsigma_t^{-1}e^{mt}v_0e^{m^\top t}$, $\textup{etr}(\cdot)=\exp(\textup{tr}[\cdot])$ and $\det(\cdot)$ the determinant of a matrix. Here $\textup{vec}\left(\,\cdot\,\right)$ denotes the vec operator that transforms a $n\times n$ matrix into a $n^2$ vector by stacking the columns, $\mathsf{A}:=I_{n}\otimes m+m\otimes I_{n}$, and $\otimes$ denotes the Kronecker product.
\end{proposition}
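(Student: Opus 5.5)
The plan is to use the affine structure of the Wishart process. We guess an exponential-affine form for the conditional moment generating function, reduce the problem to matrix Riccati ODEs, and solve these explicitly using the Bru restriction $\omega=\beta\sigma^2$.

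\textbf{Affine ansatz.} Fix $t$ and set $u(s,v)=\mathbb{E}[\textup{etr}(\theta_1 v_t)\mid v_s=v]$. This function solves the backward Kolmogorov equation $\partial_s u+\mathcal{L}u=0$ with terminal condition $u(t,v)=\textup{etr}(\theta_1 v)$, where $\mathcal{L}$ is the generator \eqref{EQ:InfinitesimalGenerator}. We look for a solution
\[
u(s,v)=\exp\bigl(\textup{tr}[a(t-s)v]+b(t-s)\bigr), \qquad a(0)=\theta_1,\quad b(0)=0 .
\]
Since $D\,\textup{etr}(av)=a\,\textup{etr}(av)$ for symmetric $a$, applying $\mathcal{L}$ gives $\textup{tr}[(\omega+mv+vm^\top)a+2va\sigma^2a]$. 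Matching the terms linear in $v$ and the constant terms yields
\[
\dot a=a m+m^\top a+2a\sigma^2 a,\qquad \dot b=\textup{tr}[\omega a]=\beta\,\textup{tr}[\sigma^2 a].
\]

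\textbf{Solving the Riccati equation.} We linearise by setting $a=-\tfrac12 G^{-1}F$, or equivalently invert the equation. A neater route is to set $a(\tau)=e^{m^\top\tau}\theta_1(I_n-2\varsigma_\tau\theta_1)^{-1}e^{m\tau}$ and verify it directly. Differentiating, and using $\dot\varsigma_\tau=m\varsigma_\tau+\varsigma_\tau m^\top+\sigma^2$ (which follows from the integral definition, since $\varsigma_\tau=\int_0^\tau e^{um}\sigma^2e^{um^\top}du$), reproduces the Riccati ODE. The cleaner check is that $\tilde a:=e^{-m^\top\tau}a\,e^{-m\tau}$ satisfies $\dot{\tilde a}=2\tilde a\,e^{-m\tau}\sigma^2e^{-m^\top\tau}\tilde a$. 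Hence $\frac{d}{d\tau}(\tilde a^{-1})=-2e^{-m\tau}\sigma^2e^{-m^\top\tau}$, which gives $\tilde a^{-1}=\theta_1^{-1}-2e^{-m\tau}\varsigma_\tau e^{-m^\top\tau}$. Where $\theta_1$ is singular, one argues by continuity or uses the form with $(I-2\varsigma\theta_1)^{-1}$ directly.

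\textbf{Assembling the formula.} Then $\textup{tr}[a(t)v_0]=\textup{tr}[\theta_1(I-2\varsigma_t\theta_1)^{-1}e^{mt}v_0e^{m^\top t}]$. Writing $e^{mt}v_0e^{m^\top t}=\varsigma_t\vartheta_t$, this equals $\textup{tr}[\tfrac{\vartheta_t^\top}{2}(2\varsigma_t\theta_1)(I-2\varsigma_t\theta_1)^{-1}]$ after cyclic permutation and transposition; the symmetry of $\varsigma_t$ and $v_0$ is used here. For the $b$-part, Jacobi's formula gives
\[
\frac{d}{d\tau}\log\det(I-2\varsigma_\tau\theta_1)=-2\,\textup{tr}[(I-2\varsigma_\tau\theta_1)^{-1}\dot\varsigma_\tau\theta_1].
\]
Substituting $\dot\varsigma_\tau=m\varsigma_\tau+\varsigma_\tau m^\top+\sigma^2$ and comparing with $\beta\,\textup{tr}[\sigma^2 a]$ shows that $b(t)=-\tfrac\beta2\log\det(I-2\varsigma_t\theta_1)$. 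The $m$-terms must cancel through a trace identity, obtained by differentiating $\det(e^{m\tau}(\cdot)e^{m^\top\tau})$ consistently. This is the step I expect to be most delicate, and I would verify it by passing to the tilde variables as above. The vec formula for $\varsigma_t$ follows from vectorising the Lyapunov ODE, $\textup{vec}(\dot\varsigma)=\mathsf{A}\,\textup{vec}(\varsigma)+\textup{vec}(\sigma^2)$, whose solution is $\mathsf{A}^{-1}(e^{\mathsf{A}t}-I_{n^2})\textup{vec}(\sigma^2)$; here $\mathsf{A}$ is invertible because $\Re(\lambda^m)<0$.

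\textbf{Justification.} Finally, a verification argument shows the conditional expectation equals this candidate. By It\^o's formula, $u(s,v_s)$ is a local martingale on $[0,t]$. It is bounded, hence a true martingale, on the domain where $I-2\varsigma_\tau\theta_1$ stays invertible, for example for $\theta_1$ negative semidefinite. The formula then extends to the full domain of finiteness by analytic continuation.
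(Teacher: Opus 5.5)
Your argument is correct. It is the standard affine/Riccati derivation found in the references the paper cites in place of a proof (Bru; Grasselli--Tebaldi; Alfonsi), so there is no genuinely different route to compare. Two local slips should be fixed, though.

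\textbf{Sign error in the conjugation check.} With $a=e^{m^\top\tau}\tilde a\,e^{m\tau}$ one has $e^{-m^\top\tau}(2a\sigma^2a)e^{-m\tau}=2\tilde a\,e^{m\tau}\sigma^2e^{m^\top\tau}\tilde a$. So the correct equation is $\dot{\tilde a}=2\tilde a\,e^{m\tau}\sigma^2e^{m^\top\tau}\tilde a$, not the version with $e^{-m\tau}\sigma^2e^{-m^\top\tau}$. Integrating gives $\tilde a^{-1}=\theta_1^{-1}-2\varsigma_\tau$, i.e.\ $\tilde a=\theta_1(I_n-2\varsigma_\tau\theta_1)^{-1}$, which is exactly your closed form. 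The expression you wrote would instead lead to a different, incorrect $a$, inconsistent with the formula you then use.

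\textbf{The $b$-part is not delicate.} No $m$-terms need to cancel. Since $\dot\varsigma_\tau=e^{m\tau}\sigma^2e^{m^\top\tau}$ (the same matrix as $m\varsigma_\tau+\varsigma_\tau m^\top+\sigma^2$), Jacobi's formula and cyclicity give $-\frac{\beta}{2}\frac{d}{d\tau}\log\det(I_n-2\varsigma_\tau\theta_1)=\beta\,\textup{tr}[e^{m\tau}\sigma^2e^{m^\top\tau}\theta_1(I_n-2\varsigma_\tau\theta_1)^{-1}]=\beta\,\textup{tr}[\sigma^2a(\tau)]$ at once. This is exactly where $\omega=\beta\sigma^2$ enters. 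The same form of $\dot\varsigma_\tau$ also makes your direct verification of the Riccati equation immediate.

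\textbf{Minor rigour points.} Boundedness of $u(s,v_s)$ does not follow from invertibility of $I_n-2\varsigma_\tau\theta_1$ alone. It follows because, for $\theta_1\preceq 0$, one has $a(\tau)\preceq 0$ and $b(\tau)\le 0$, so $u\le 1$. Your analytic-continuation step is at the paper's level of rigour. To make it complete, note that the moment generating function is finite on $\{\theta_1: I_n-2\varsigma_t^{1/2}\theta_1\varsigma_t^{1/2}\succ 0\}$. This follows by applying, along segments starting in the negative definite cone, the fact that a real Laplace transform of a positive measure is singular at the edge of its convergence interval, whereas the right-hand side is analytic there.
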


The Wishart process is affine, that is the moment generating function is exponentially affine in the state variable. The equation \eqref{EQ:mgfBru} is the moment generating function of a non-central Wishart distribution, hence the name of the process, and the corresponding density is given in \citet[Eq.~3.5.1 p.~114]{GuptaNagar2000}. Note that even if the density is known, computing an expectation requires to perform a $n(n+1)/2$ dimensional numerical integration.

\begin{remark}
	Let us stress the fact that when $\omega \neq \beta \sigma^2$, then the moment generating function of $v_t$ is unknown, and in particular it is not a Wishart distribution.  In this case the name of the process is misleading. The analytic results derived in Section~\ref{PricingAnnuity} and Section~\ref{PricingAnnuityDerivative} makes use of this explicit moment generating function, which are not possible to be derived in the setting of \cite{DaFonseca2024}.
\end{remark}

The results developed in this work rely on the following result, which arises from solving Lyapunov equations after rewriting the dynamics of the Wishart process. This fact is well known in the matrix Riccati literature, see for example \cite{Abou‐Kandil2003MatrixRiccati}.

\begin{lemma}\label{LEM:TraceMean}
Let $v_t$ be a Wishart process described by \eqref{EQ:Wishart}. If $u_0 \in \mathsf{M}(n)$, then there exist a function $b_0(t)$ and a matrix function $a_0(t) \in \mathsf{M}(n)$ such that
\begin{align}
\mathbb{E}\left[\textup{tr}[u_0v_{t}]\right]&=\textup{tr}[a_0(t)v_{0}] + b_0(t), \label{EQ:OdeMeanWishartIntegratedMatrix}
\end{align}
with
\begin{align}
	\textup{vec}\left(a_0(t)^{\top}\right) &= e^{\mathsf{A}^{\top}t}\textup{vec}(u_0^\top),\\
  b_0(t) &= \textup{vec}(u_0^\top)^\top \mathsf{A}^{-1}\left(e^{\mathsf{A}t}-I_{n^2} \right)\mathsf{b}. 
\end{align}
Here $\mathsf{b}:=\textup{vec}\left(\omega\right)$, $\mathsf{A}:=I_{n}\otimes m+m\otimes I_{n}$, and $\otimes$ denotes the Kronecker product.

\end{lemma}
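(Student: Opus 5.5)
Taking expectations in the SDE \eqref{EQ:Wishart} gives a linear ODE for the mean of the Wishart process. The stochastic integrals $\sqrt{v_t}dw_t\sigma$ and $\sigma^\top dw_t^\top\sqrt{v_t}$ are true martingales: the Bru case gives finite moments of all orders via Proposition~\ref{Prop:MGFBru}, so $\mathbb{E}[\int_0^t \textup{tr}[v_s]ds]<\infty$. Writing $M_t:=\mathbb{E}[v_t]$, we therefore get the matrix Lyapunov-type equation
\begin{align*}
\dot M_t = \omega + m M_t + M_t m^\top, \qquad M_0=v_0 .
\end{align*}
Vectorising with $\textup{vec}(mM)=(I_n\otimes m)\textup{vec}(M)$ and $\textup{vec}(Mm^\top)=(m\otimes I_n)\textup{vec}(M)$ turns this into $\frac{d}{dt}\textup{vec}(M_t)=\mathsf{A}\,\textup{vec}(M_t)+\mathsf{b}$, with $\mathsf{A}=I_n\otimes m+m\otimes I_n$ and $\mathsf{b}=\textup{vec}(\omega)$.

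The eigenvalues of $\mathsf{A}$ are $\lambda_i^m+\lambda_j^m$, whose real parts are strictly negative by the assumption on $\mathsf{Spec}(m)$. Hence $\mathsf{A}$ is invertible, and the variation-of-constants formula gives
\begin{align*}
\textup{vec}(M_t)=e^{\mathsf{A}t}\textup{vec}(v_0)+\mathsf{A}^{-1}\left(e^{\mathsf{A}t}-I_{n^2}\right)\mathsf{b},
\end{align*}
using that $\mathsf{A}^{-1}$ commutes with $e^{\mathsf{A}t}$.

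Next we take the trace against $u_0$, using the identity $\textup{tr}[u_0M]=\textup{vec}(u_0^\top)^\top\textup{vec}(M)$. This gives
\begin{align*}
\mathbb{E}[\textup{tr}[u_0v_t]]=\textup{vec}(u_0^\top)^\top e^{\mathsf{A}t}\textup{vec}(v_0)+\textup{vec}(u_0^\top)^\top\mathsf{A}^{-1}\left(e^{\mathsf{A}t}-I_{n^2}\right)\mathsf{b}.
\end{align*}
The second term is exactly $b_0(t)$. For the first term we rewrite $\textup{vec}(u_0^\top)^\top e^{\mathsf{A}t}\textup{vec}(v_0)=\left(e^{\mathsf{A}^\top t}\textup{vec}(u_0^\top)\right)^\top\textup{vec}(v_0)$. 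Defining $a_0(t)$ by $\textup{vec}(a_0(t)^\top)=e^{\mathsf{A}^\top t}\textup{vec}(u_0^\top)$ and applying the trace–vec identity in reverse yields $\textup{tr}[a_0(t)v_0]$, which is the claim.

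An alternative route, closer to the Lyapunov/Riccati viewpoint, is backward. Seek $\mathbb{E}[\textup{tr}[u_0v_t]]=\textup{tr}[a(t)v_0]+b(t)$ and apply the generator \eqref{EQ:InfinitesimalGenerator} to the affine function $\textup{tr}[a v]$. The second-order term vanishes, which gives $\dot a = a m + m^\top a$ with $a(0)=u_0$, and $\dot b=\textup{tr}[a\omega]$. Vectorising gives the same formulas; one only has to check that $\textup{vec}(am+m^\top a)^{\top}$ matches $\mathsf{A}^\top$ acting on $\textup{vec}(a^\top)$. This is just a consistency check.

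The main obstacle is bookkeeping rather than anything conceptual. First, the Kronecker and transpose conventions must be handled carefully. The identity $\textup{vec}(AXB)=(B^\top\otimes A)\textup{vec}(X)$ applied to $mM$ and $Mm^\top$ gives exactly $\mathsf{A}$. The trace pairing requires $u_0^\top$ rather than $u_0$, which explains the transposes in the statement. Second, justifying the martingale property of the local-martingale part is standard but should be mentioned. If $u_0$ is not symmetric nothing changes, since only linearity is used.
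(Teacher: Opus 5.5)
Your proposal is correct and follows the route the paper indicates: take expectations in \eqref{EQ:Wishart} to get the Lyapunov equation for $\mathbb{E}[v_t]$, vectorise it with $\mathsf{A}=I_n\otimes m+m\otimes I_n$, solve by variation of constants, and pair with $\textup{vec}(u_0^\top)$; the paper offers only this sketch and a citation, not a written proof. One small point: your martingale justification uses the Bru-case moment generating function, but the lemma is stated for general $\omega$, and there you can instead localise and apply Gronwall and Fatou, using $\textup{tr}[(m+m^\top)v]\le C\,\textup{tr}[v]$ on $\mathbb{S}_n^{+}$, to get $\sup_{s\le t}\mathbb{E}[\textup{tr}[v_s]]<\infty$ and hence the same conclusion.
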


%If the process \eqref{EQ:Wishart} is stationary, which requires the eigenvalues of $m$ to be negative and if they are then the eigenvalues of $\mathsf{A}$ are also negative following the Kronecker product property, it leads to $\bar{v}_{\infty}:=\lim_{t\to+\infty}\mathbb{E}\left[v_{t}\right]$
%that solves the equation 
%\begin{align*}
%	\textup{vec}\left(\bar{v}_{\infty}\right)=-\mathsf{A}^{-1}\mathsf{b}\,,
%\end{align*}
%which corresponds to  the matrix equation
%\begin{align}
%m\bar{v}_{\infty}  + \bar{v}_{\infty} m^\top = -\omega. \label{EQ:Xinfty}
%\end{align}

%\begin{remark}\label{Rem:Trace}
%Notice that if $a\in \mathbb{S}_n$ and $b \in \mathsf{M}(n)$ then $\textup{tr}[ab]=\textup{tr}[a(b+b^\top)]/2$.
%\end{remark}

The following proposition is well known in the literature and is useful for obtaining the quadratic variation of linear functions of the Wishart process, which in turn will be utilised to derive the covariation (or instantaneous correlation) between mortality intensities. A proof of the proposition appears in p. 214 of \cite{DeelstraGrasselliVanWeverberg2016} and reads as follows.
\begin{proposition}\label{Prop:QuadraticVariation}
Given  $h_1, h_2\in \mathsf{M}(n)$ the quadratic variation between $(\textup{tr}[h_1v_t])_{t \geq 0 }$ and $(\textup{tr}[h_2v_t])_{t \geq 0 }$ is  
\begin{align*}
d\langle \textup{tr}[h_1v_.],\textup{tr}[h_2v_.]\rangle_t&=\textup{tr}[(h_1+h_1^\top) v_t(h_2+h_2^\top)\sigma^2] dt\,.
\end{align*}
\end{proposition}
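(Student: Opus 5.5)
The plan is to read the quadratic covariation directly off the martingale part of the SDE \eqref{EQ:Wishart}, since the drift contributes nothing to $\langle\cdot,\cdot\rangle$. For $h\in\mathsf{M}(n)$, linearity of the trace gives
\begin{align*}
d\,\textup{tr}[h v_t] = \textup{tr}[h(\omega+mv_t+v_tm^\top)]dt + \textup{tr}[h\sqrt{v_t}dw_t\sigma] + \textup{tr}[h\sigma^\top dw_t^\top\sqrt{v_t}].
\end{align*}
Using $\textup{tr}[a^\top]=\textup{tr}[a]$, cyclicity, and the symmetry of $\sqrt{v_t}$ and $\sigma$ (we may take $\sigma\in\mathbb{S}_n^{++}$), the second stochastic term equals $\textup{tr}[\sqrt{v_t}dw_t\sigma h^\top]=\textup{tr}[h^\top\sqrt{v_t}dw_t\sigma]$. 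The martingale part therefore collapses to the single expression
\begin{align*}
dM^h_t=\textup{tr}[(h+h^\top)\sqrt{v_t}\,dw_t\,\sigma]=\textup{tr}[\sigma(h+h^\top)\sqrt{v_t}\,dw_t].
\end{align*}
This symmetrisation step is the heart of the argument, because it explains why only $h+h^\top$ appears in the result.

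Next, I will write the martingale part in coordinates. Setting $G_k:=\sigma(h_k+h_k^\top)\sqrt{v_t}$ gives $dM^{h_k}_t=\sum_{i,j}(G_k)_{ji}\,dw_{ij}$. Since the entries $w_{ij}$ are independent scalar Brownian motions, $d\langle w_{ij},w_{kl}\rangle=\delta_{ik}\delta_{jl}dt$. It follows that
\begin{align*}
d\langle M^{h_1},M^{h_2}\rangle_t=\sum_{i,j}(G_1)_{ji}(G_2)_{ji}\,dt=\textup{tr}[G_1^\top G_2]\,dt.
\end{align*}

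The last step is to simplify this trace. Substituting, we get $\textup{tr}[\sqrt{v_t}(h_1+h_1^\top)\sigma\sigma(h_2+h_2^\top)\sqrt{v_t}]$. Cyclicity then moves the two factors $\sqrt{v_t}$ together, which gives $\textup{tr}[(h_1+h_1^\top)\sigma^2(h_2+h_2^\top)v_t]$. To reach the stated ordering, I take the transpose inside the trace, which leaves the value unchanged because all four factors are symmetric, and apply cyclicity once more. This yields $\textup{tr}[(h_1+h_1^\top)v_t(h_2+h_2^\top)\sigma^2]$, as claimed.

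The only real care needed is in the symmetrisation step: one must check that the transposed Brownian term is correctly absorbed, relying on the symmetry of $\sigma$ (justified by the polar decomposition remark) and of $\sqrt{v_t}$. Everything else is routine index bookkeeping.
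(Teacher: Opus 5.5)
Your proof is correct, and it is the standard It\^o computation that the paper does not reproduce but defers to \citet[p.~214]{DeelstraGrasselliVanWeverberg2016}: you isolate the martingale part $\textup{tr}[(h+h^\top)\sqrt{v_t}\,dw_t\,\sigma]$ of $\textup{tr}[hv_t]$ and read off the covariation from $d\langle w_{ij},w_{kl}\rangle_t=\delta_{ik}\delta_{jl}\,dt$. One small correction on where the symmetry of $\sigma$ enters: the symmetrisation step only uses the symmetry of $\sqrt{v_t}$, since transposing $\textup{tr}[h\sigma^\top dw_t^\top\sqrt{v_t}]$ needs nothing about $\sigma$; the symmetry of $\sigma$ is used only in the last step to write $\sigma^\top\sigma=\sigma^2$, and for a general $\sigma\in\mathsf{GL}_n(\mathbb{R})$ the same computation gives $\sigma^\top\sigma$ in place of $\sigma^2$.
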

When $n=2$, using Proposition \ref{Prop:QuadraticVariation}, one can in particular show that the quadratic covariation between $v_{11,t}$ and $v_{22,t}$ is given by
\begin{align}
d\langle v_{11,.},v_{22,.}\rangle_t &= 4v_{12,t}(\sigma^2)_{12}dt \label{EQ:v11v22},
\end{align}  
where $(\sigma^2)_{ij}$ is the $(i,j)$ element of the matrix $\sigma^2$. Note that $(\sigma^2)_{12}=0$ if $\sigma_{12}=0$, which means the dependence between $v_{11,t}$ and $v_{22,t}$ is primarily dependent on $\sigma_{12}$.\\

To obtain the derivative of the moment generating function with respect to a parameter, the next proposition provides the details in the Bru case. This result will be particularly helpful for obtaining the densities of mortality intensities, as well as risk measures such as expected shortfall and value at risk for joint survival annuities.
\begin{proposition}\label{Prop:MGFBruDerivative}
Suppose that $\omega$ in  \eqref{EQ:Wishart} is such that $\omega=\beta \sigma^2$ with $\beta \in \mathbb{R}$ and $\beta \geq n+1$ and consider the moment generating function of $v_t$ given by \eqref{EQ:mgfBru}. For $\theta_1\in \mathbb{S}_n$ and $\theta_2\in \mathbb{S}_n$ consider the function for $\nu \in \mathbb{R}$, $ \nu \to \Phi(t,\theta_1 + \nu \theta_2,v_0) = \mathbb{E}\left[\textup{etr}\left( (\theta_1+\nu \theta_2) v_t  \right)\right]$ then its derivative with respect to $\nu$  is given by
\begin{align}
\Phi_\nu(t,\theta_1,\theta_2,\nu, v_0)	&:=\frac{d\Phi(t,\theta_1+\nu \theta_2, v_0)}{d\nu}\nonumber \\
											& =(g_1+g_2+g_3)\Phi(t,\theta_1+\nu \theta_2, v_0), \label{EQ:mgfBruDerivative}
\end{align}

with 
\begin{align*}
g_1(t,\theta_1,\theta_2,\nu) &= \textup{tr}\left[\frac{\vartheta_t^\top}{2} 2\varsigma_t\theta_2(I_n -2\varsigma_t(\theta_1+\nu \theta_2))^{-1} \right], \\
g_2(t,\theta_1,\theta_2,\nu) &= \textup{tr}\left[\frac{\vartheta_t^\top}{2} 2\varsigma_t(\theta_1+ \nu \theta_2)(I_n -2\varsigma_t(\theta_1+\nu \theta_2))^{-1}2\varsigma_t \theta_2(I_n -2\varsigma_t(\theta_1+\nu \theta_2))^{-1}\right], \\
g_3(t,\theta_1,\theta_2,\nu) &= \frac{\beta}{2}\textup{tr}\left[2\varsigma_t\theta_2(I_n -2\varsigma_t(\theta_1+\nu \theta_2))^{-1} \right],
\end{align*}
and $\varsigma_t$ and $\vartheta_t$ as in Proposition \ref{Prop:MGFBru}.
\end{proposition}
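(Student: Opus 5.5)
The plan is to differentiate the closed form \eqref{EQ:mgfBru} of Proposition~\ref{Prop:MGFBru} directly. Set $\theta(\nu):=\theta_1+\nu\theta_2$, so that $\theta'(\nu)=\theta_2$, and put $M(\nu):=I_n-2\varsigma_t\theta(\nu)$. Then
\begin{align*}
\Phi(t,\theta(\nu),v_0)=\exp\Big(\textup{tr}\Big[\tfrac{\vartheta_t^\top}{2}\,2\varsigma_t\theta(\nu)M(\nu)^{-1}\Big]\Big)\det(M(\nu))^{-\beta/2}.
\end{align*}
The logarithmic derivative of $\Phi$ is the derivative of the trace term minus $\tfrac{\beta}{2}$ times the derivative of $\log\det M(\nu)$. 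Multiplying back by $\Phi$ gives the stated factorised form. So the whole proof reduces to two matrix-calculus identities.

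For the exponent, I apply the product rule to $\theta(\nu)M(\nu)^{-1}$. The factor $\theta$ contributes $\theta_2M^{-1}$, which after multiplication by $2\varsigma_t$ and $\vartheta_t^\top/2$ inside the trace is exactly $g_1$. The factor $M^{-1}$ is handled with $\frac{d}{d\nu}M^{-1}=-M^{-1}M'M^{-1}$ and $M'=-2\varsigma_t\theta_2$, so its derivative is $M^{-1}2\varsigma_t\theta_2M^{-1}$. This gives the term
\[
\textup{tr}\big[\tfrac{\vartheta_t^\top}{2}2\varsigma_t\theta(\nu)M^{-1}2\varsigma_t\theta_2M^{-1}\big],
\]
which is $g_2$. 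Linearity of the trace lets the derivative pass inside.

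For the determinant, Jacobi's formula gives $\frac{d}{d\nu}\log\det M=\textup{tr}[M^{-1}M']=-\textup{tr}[M^{-1}2\varsigma_t\theta_2]$. Hence $-\tfrac{\beta}{2}\frac{d}{d\nu}\log\det M=\tfrac{\beta}{2}\textup{tr}[2\varsigma_t\theta_2M^{-1}]$ by cyclicity of the trace, and this is $g_3$.

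The main point needing care is where these manipulations are legitimate. We need $M(\nu)$ invertible with $\det M(\nu)>0$, so that the real power $\det^{-\beta/2}$ is differentiable, and we need the MGF finite. Both hold on the open set of $\nu$ where $\Phi$ is finite, i.e.\ where $I_n-2\varsigma_t\theta(\nu)$ has eigenvalues with positive real part. This set is connected and contains the values where $\theta$ is small, so the closed form holds and is smooth there. Beyond that, the argument only uses the product rule for matrix-valued functions, the derivative of the inverse, and Jacobi's formula.
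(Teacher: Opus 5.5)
Your proposal is correct. The paper does not write out a separate proof of this proposition, and your direct differentiation of \eqref{EQ:mgfBru} (product rule on $\theta(\nu)M(\nu)^{-1}$, $\frac{d}{d\nu}M^{-1}=-M^{-1}M'M^{-1}$ with $M'=-2\varsigma_t\theta_2$, and Jacobi's formula for $\det M(\nu)^{-\beta/2}$) is exactly the computation that produces $g_1$, $g_2$ and $g_3$. One loose remark: the line $\theta_1+\nu\theta_2$ need not pass near $\theta=0$, but nothing depends on it, since on the open interval where $I_n-2\varsigma_t\theta(\nu)$ has positive eigenvalues (real, because $\varsigma_t\theta(\nu)$ is similar to the symmetric matrix $\varsigma_t^{1/2}\theta(\nu)\varsigma_t^{1/2}$) the closed form holds, $\det M(\nu)>0$, and everything is smooth.
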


%%%%%%%%%%%%%%%%%%%%%%%%%%%%%%%%%%%%%%%%%%%%%%%%%%%%%%%%%%%%%%%%%%%%%%%%%%%%%%%%%%%%%%%%%%%%%%%%%%%%%%%%%%%%%%%%%%%%%%%%%%%%%%%%%%%%%%
%
% 
%
%%%%%%%%%%%%%%%%%%%%%%%%%%%%%%%%%%%%%%%%%%%%%%%%%%%%%%%%%%%%%%%%%%%%%%%%%%%%%%%%%%%%%%%%%%%%%%%%%%%%%%%%%%%%%%%%%%%%%%%%%%%%%%%%%%%%%%

We build on the potential approach proposed by \citet{Rogers1997} whose main idea is to directly specify the state-price density in \eqref{EQ:SBZC} so that the corresponding instantaneous mortality intensities are positive.\myfootnote{The potential approach of \citet{Rogers1997} is well documented in standard textbooks (see, for example, \citet{Bjorck2009} and \citet{Cairns2004book}) and has been used in several interest rate derivatives studies (see, for example, \citet{CrepeyMacrinaNguyenSkovmand2015}, \citet{NguyenSeifried2015}, { \citet{FilipovicLarssonTrolle2017}} or \citet{NguyenSeifried2021}). When combined with a Wishart process, it has proved particularly effective in the multi-curve setting of \citet{DaFonsecaDawuiMalevergne2022} and in a cross-asset context in \citet{DaFonseca2024}. We believe the framework still offers many modelling possibilities, including for mortality derivatives and credit risk derivatives. {For example, \citet{DeGiovanniPirraViviano2025} recently proposed a model that departs from the traditional modelling approach of \citet{Dahl2004} and \citet{Biffis2005}. However, their model differs from ours primarily in the way the state-price density is defined.}  } The standard approach proceeds the other way around, \textit{i.e.,} to specify the mortality intensities and deduce the state-price density.  We briefly summarise the mathematical framework as well as develop our main results on mortality modelling.\\

First, we define the state-price density as
\begin{align}
\zeta_t&:= e^{-\alpha t} (1+ \textup{tr}[u_0 v_t])\,, \label{EQ:PricingKernel}
\end{align}
with $\alpha \in \mathbb{R}_+$,  $u_0 = u_1 + u_2$ with $u_1 \in \mathbb{S}_n^{+}$ (the set of positive semi-definite matrices) and $u_2 \in \mathbb{S}_n^{+}$ (so that $u_0 \in \mathbb{S}_n^{+}$) and $(v_t)_{t \geq 0}$ a Wishart process.  The state-price density can be rewritten as follows. Define the positive function $f: \mathbb{S}_n^{++} \to \mathbb{R}^+$ such that $f(v) := 1+\textup{tr}[u_0v]$ then the state-price density writes as $\zeta_t=e^{-\alpha t}f(v_t)$ and is positive as $f$ is. Define  $g(v):=(\alpha - \mathcal{L})f(v)$ and assume that it is a positive function for sufficiently large $\alpha$. Using \citet[Eqs.~(1.2),(1.3),(1.4)]{Rogers1997}, the state-price density allows us to compute $\mathrm{SB}_{0}(t,T)$ given by \eqref{EQ:SBZC}, or any other mortality related product, if we assume the following change of probability measure
\begin{align}
\zeta_t = e^{-\int_0^t (\mu_x(s) + \mu_y(s))ds }\left.\frac{d\mathbb{Q}}{d\mathbb{P}}\right\vert_{\mathcal{F}_t}. \label{EQ:Radon}
\end{align}

Indeed, suppose that one needs to compute the value at time $t$ of $\Pi_t$ that is equal to $1$ at time $T$ conditional on the annuitants $x$ and $y$ being alive at that time, then its value is given by
\begin{align*}
\Pi_t 	&= \mathbb{E}_t^{\mathbb{Q}}\left[e^{-\int_t^T (\mu_x(s) + \mu_y(s))ds } \right],\\
		&= \mathbb{E}_t\left[\frac{\zeta_T}{\zeta_t}\right],
\end{align*} 
which is known  explicitly for a suitable choice of the state-price density $(\zeta_t)_{t\geq 0}$.\myfootnote{Note that the pricing is done under the historical probability measure. This contrasts with the standard approach, where the dynamics of the state variable are specified under the risk-neutral measure. See \citet{DaFonseca2024} for some implications.} The following section shows how the expectations presented in Section~\ref{Framework} can be handled.\footnote{ Since the interest rates are deterministic, the pricing kernel is about the mortality risk.}\\

Our choice of the positive function \( f(v) = 1 + \textup{tr}[u_0v] \) is a deliberate aspect of the model specification. The potential approach proposed by \cite{Rogers1997} necessitates \( f \) to be a positive function of the state variable. We selected this particular form of \( f \) to ensure that the joint survival annuity price remains a linear-rational function of the state variable \( v_t \). It is important to note that alternative choices for \( f \) could impact the results discussed later. However, we leave the exploration of these alternatives for future research.

\begin{remark}\label{REM:alt_f}
	The results in this work rely upon the linear-rational structure to obtain the price of the joint survival annuity and the guaranteed joint survival annuity option. It is straightforward to see the set of admissible $f: \mathbb{S}_n^{+} \to \mathbb{R}^+$ that retains the linear-rational structure must take the form
			\[
			f(v) = \phi + \tr[\theta_1 v]
			\]
			where $\phi \in \mathbb{R}$ and $\theta_1$ is a positive semi-definite matrix. Since $\phi$ shifts the level of $f(v)$, it could be difficult to disentangle with the parameters of the Wishart process that control its level. To avoid identification issues, we set $\phi = 1$ and adopt the specification  $f(v) := 1 + \textup{tr}[u_0v]$.
\end{remark}

%%%%%%%%%%%%%%%%%%%%%%%%%%%%%%%%%%%%%%%%%%%%%%%%%%%%%%%%%%
%
% The joint survival annuity and related quantities.
%
%%%%%%%%%%%%%%%%%%%%%%%%%%%%%%%%%%%%%%%%%%%%%%%%%%%%%%%%%%
\section{Joint survival annuity valuation}\label{PricingAnnuity}

To illustrate the methodology on the linear-rational Wishart mortality model, let us start with the joint survival bond that can be explicitly computed as the following proposition shows. 
\begin{proposition}\label{Prop:SBZCPricing}
Let $u_1 \in \mathbb{S}_n^+$ and $u_2 \in \mathbb{S}_n^+$ and define $u_0 = u_1 + u_2$. Assume further that \eqref{EQ:PricingKernel} holds. Then on the set \(\lbrace \tau_x>t, \tau_y>t \rbrace \) the joint survival bond $\mathrm{SB}(t,T)$ given by \eqref{EQ:SBZC} is known explicitly as we have
\begin{align}
\mathrm{SB}(t,T)&=P(t,T) e^{-\alpha(T-t)}\frac{1 + b_0(T-t) +\textup{tr}[a_0(T-t) v_{t}]}{1+\textup{tr}[u_0 v_{t}]}, \label{EQ:SBZC3}
\end{align}
where $ b_0(t):=b_1(t)+b_2(t)$ and $a_0(t):=a_1(t)+a_2(t)$ with $(b_1(t),a_1(t))$ and $(b_2(t),a_2(t))$ two sets of functions obtained using Lemma \ref{LEM:TraceMean} for $u_1$ and $u_2$, respectively.
\end{proposition}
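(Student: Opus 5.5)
The plan is to go from \eqref{EQ:SBZC}, through the change of measure \eqref{EQ:Radon}, down to the affine first-moment formula of Lemma~\ref{LEM:TraceMean}. On the set $\lbrace \tau_x>t, \tau_y>t\rbrace$, \eqref{EQ:SBZC} gives $\mathrm{SB}(t,T)=P(t,T)\,\mathrm{SB}_0(t,T)$ with $\mathrm{SB}_0(t,T)=\mathbb{E}_t^{\mathbb{Q}}\bigl[e^{-\int_t^T \mu_x(s)+\mu_y(s)ds}\bigr]$. The interest rate is deterministic and, by assumption, independent of mortality, so only $\mathrm{SB}_0$ needs work.

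First I will apply the abstract Bayes formula with the density process $\frac{d\mathbb{Q}}{d\mathbb{P}}\big\vert_{\mathcal{F}_t}=\zeta_t e^{\int_0^t(\mu_x+\mu_y)ds}$ read off from \eqref{EQ:Radon}. The exponential factors cancel, leaving
\[
\mathrm{SB}_0(t,T)=\mathbb{E}_t\!\left[\frac{\zeta_T}{\zeta_t}\right]=e^{-\alpha(T-t)}\,\frac{1+\mathbb{E}_t\bigl[\textup{tr}[u_0 v_T]\bigr]}{1+\textup{tr}[u_0 v_t]},
\]
using \eqref{EQ:PricingKernel}. The denominator is strictly positive because $u_0\in\mathbb{S}_n^+$ and $v_t\in\mathbb{S}_n^{++}$.

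Next I will compute the conditional expectation. Since the Wishart process is a time-homogeneous Markov process, $\mathbb{E}_t[\textup{tr}[u_0 v_T]]$ equals the unconditional expectation over horizon $T-t$ started at $v_t$. Lemma~\ref{LEM:TraceMean} then gives $\textup{tr}[a_0(T-t)v_t]+b_0(T-t)$. The trace is linear in $u_0$, and the formulas for $a_0$ and $b_0$ in the lemma are linear in $\textup{vec}(u_0^\top)$. Applying the lemma separately to $u_1$ and $u_2$ and summing therefore yields $a_0=a_1+a_2$ and $b_0=b_1+b_2$, which is the decomposition in the statement. Substituting back gives \eqref{EQ:SBZC3}.

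The main obstacle is justifying the measure-change step rather than the algebra. One must check that $\zeta_t e^{\int_0^t(\mu_x+\mu_y)ds}$ is a genuine positive $\mathbb{P}$-martingale with unit initial value. Here $\zeta_0=1+\textup{tr}[u_0v_0]$, so a normalization by $\zeta_0$ is implicit. The intensities themselves must also be nonnegative. I would follow \citet{Rogers1997} and define $\mu_x+\mu_y:=g(v_t)/f(v_t)$ with $g=(\alpha-\mathcal{L})f$. Because $f$ is affine, $\mathcal{L}f=\textup{tr}[u_0(\omega+mv+vm^\top)]$, so $g$ is affine and positive for $\alpha$ large. Itô's formula then shows that $e^{-\alpha t}f(v_t)+\int_0^t e^{-\alpha s}g(v_s)ds$ is a local martingale. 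The linear growth of $f$, together with the finite first moments of the Wishart process from Lemma~\ref{LEM:TraceMean}, upgrades it to a true martingale. This delivers the required density and completes the argument.
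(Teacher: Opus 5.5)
Your argument is the paper's proof: the Bayes step $\mathrm{SB}_0(t,T)=\mathbb{E}_t[\zeta_T/\zeta_t]$, substitution of \eqref{EQ:PricingKernel}, then Lemma~\ref{LEM:TraceMean} applied to $u_0$, with your linearity remark spelling out the split $a_0=a_1+a_2$, $b_0=b_1+b_2$ that the paper leaves implicit. Your closing paragraph goes beyond the paper, which simply takes \eqref{EQ:Radon} as given, and it has a slip: it shows the additive process $e^{-\alpha t}f(v_t)+\int_0^t e^{-\alpha s}g(v_s)ds$ is a true martingale, but that does not by itself make the density $\zeta_t e^{\int_0^t(\mu_x(s)+\mu_y(s))ds}/\zeta_0$ one; the density is a stochastic exponential whose driver has quadratic-variation rate $4\tr[u_0v_tu_0\sigma^2]/(1+\tr[u_0v_t])^2$, which is bounded since $\tr[u_0vu_0\sigma^2]\le C\,\tr[u_0v]$, so Novikov settles that point.
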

We should note that the above proposition is identical to the result derived in \cite{DaFonseca2024}. This is to be expected given the result is independent of the parameterisation and the modelling choices we make on the Wishart process.\myfootnote{Note that any symmetric positive-definite matrix-valued process sharing the Wishart drift, regardless of its diffusion term, yields the same survival-bond expression, because the expression depends only on the drift of the process.} From now on, we will avoid mentioning that the expectations conditional on a given time \(t\) are only valid on the sets \(\lbrace \tau_x>t, \tau_y>t \rbrace \) or \(\lbrace \tau_x>t\rbrace \), depending on the expectation considered.
 
According to \citet[Eq.~(1.5)]{Rogers1997} or \citet[Eq. (28.42)]{Bjorck2009} and \citet[p. 134]{Cairns2004book}, we have
\begin{align*} 
\mu_x(s)+\mu_y(s) 	&= \frac{(\alpha - \mathcal{L})f}{f}= \frac{\alpha +\alpha \textup{tr}[u_0v_s]-\textup{tr}[u_0\omega] - 2 \textup{tr}[u_0mv_s]}{1+\textup{tr}[u_0v_s]}.
\end{align*} 
Through our judicious choice of $f$, we are able to isolate $\alpha$. Below we define the mortality intensities to be
\begin{align} 
\mu_x(s) 		&= \frac{\alpha/2 +\alpha \textup{tr}[u_1v_s]-\textup{tr}[u_1\omega] - 2 \textup{tr}[u_1mv_s]}{1+\textup{tr}[u_0v_s]}, \label{EQ:IntensityX}\\
\mu_y(s) 		&= \frac{\alpha/2 +\alpha \textup{tr}[u_2v_s]-\textup{tr}[u_2\omega] - 2 \textup{tr}[u_2mv_s]}{1+\textup{tr}[u_0v_s]}, \label{EQ:IntensityY}
\end{align} 
where we have chosen to apportion the term $\alpha$ equally to both annuitants.\myfootnote{These mortality intensities are structurally the same to the short rate / mortality intensity (35)-(36) in \cite{DaFonseca2024}. Alternative specifications would involve either changing $f$ as mentioned in Remark~\ref{REM:alt_f}, or on how the intensities are defined in \eqref{EQ:IntensityX}-\eqref{EQ:IntensityY}, or through the choices of $u_1, u_2$.}
From now on, we suppose that there exists $\alpha$ such that $\mu_x(s)$ given by \eqref{EQ:IntensityX} and $\mu_y(s)$ given by \eqref{EQ:IntensityY} are positive. The existence of an $\alpha$ such that the intensities are positive is dependent on the choice of the matrices $u_1, u_2$ and the parameters of the Wishart process. We provide a sufficient requirement in the case when $n=2$ subsequently.

\begin{remark}
Let us reiterate the fact that the potential approach of \citet{Rogers1997} proceeds differently from the traditional mortality modelling approaches of \citet{Dahl2004} and \citet{Biffis2005}, which consist of specifying the dynamics of the mortality intensity, thereby determining the state-price density. In contrast, \citet{Rogers1997} specifies the state-price density, which in turn determines the intensity. Conditions on the state-price density ensure that the intensity remains positive. Note also that, since the dynamics \eqref{EQ:Wishart} of \((v_t)_{t \geq 0}\) are given under \(\mathbb{P}\), the dynamics of the intensities are also specified under this probability measure, and therefore pricing is conducted under \(\mathbb{P}\). Following \citet[p. 164]{Rogers1997}, it is possible to determine the dynamics of \((v_t)_{t \geq 0}\) under the risk-neutral probability measure, and thus also the dynamics of the intensities (see \citet{DaFonseca2024} for an example applied to the Wishart process). %Applying this result to our problem leads for the state variable an infinitesimal generator \(\mathcal{L}^*\) of the form
%\begin{align}
		%\mathcal{L}^*f(v) = \mathcal{L}f(v) + \frac{\textup{tr}[2\sigma^2u_0vDf] + \textup{tr}[2vu_0\sigma^2Df]}{1+\textup{tr}[u_0v]}\,, \label{EQ:infinite_gen_Q_detail}
	%\end{align}
	%with \(\mathcal{L}\) given by \eqref{EQ:InfinitesimalGenerator}, so that the dynamic of $(v_t)_{t\geq 0}$ under $\mathbb{Q}$ is known.
	However, since these are not needed for our purposes, we omit the details.
\end{remark}

\begin{remark}
 Note, we can easily extend our model to the case of an $k$-joint annuity as follows. First, we should have a dimension of the Wishart process to be at least $n \geq k$. We then set \(u_0 = \sum_{i=1}^k u_i\), and similar to the above intensities in \eqref{EQ:IntensityX} and \eqref{EQ:IntensityY}, we could define the $i$'th annuitant's mortality intensity as
\[
	\mu_i(s) = \frac{\alpha/k +\alpha \textup{tr}[u_iv_s]-\textup{tr}[u_i\omega] - 2 \textup{tr}[u_imv_s]}{1+\textup{tr}[u_0v_s]}.
\]
\end{remark}

In order to understand how the positivity enters into the problem let us consider the case of $n=2$ (with noting that $n$ could possibly be larger, and thus be a multi-factor mortality model). By assumption $u_1\in \mathbb{S}_n^{+}$ and $u_2\in \mathbb{S}_n^{+}$ so that $\textup{tr}[u_1\omega]$, $\textup{tr}[u_2\omega]$, $\textup{tr}[u_1v_s]$ and  $\textup{tr}[u_2v_s]$ are  positive as $\omega \in \mathbb{S}_n^{++}$ while $v_s$ belongs to $\mathbb{S}_n^{++}$. If $\textup{tr}[u_1mv_s]<0$ and $\textup{tr}[u_2mv_s]<0$ then as long as $\alpha/2>\max(\textup{tr}[u_1\omega],\textup{tr}[u_2\omega])$ then $\mu_x(s)$ and $\mu_y(s)$ are positive. To gain a little bit more of understanding of the positivity problem, it is worth considering the simple case $n=2$, $m$ diagonal, $u_1=e_{11}$ and $u_2=e_{22}$ then in that particular case $\mu_x(s) $ and $\mu_y(s) $ rewrite as
\begin{align} 
\mu_x(s)			&= \frac{\alpha/2 +\alpha v_{11,s}-\omega_{11} - 2 m_{11}v_{11,s}}{1 + v_{11,s} + v_{22,s}}, \label{EQ:mu_x_intensity_simple} \\
\mu_y(s) 			&= \frac{\alpha/2 +\alpha v_{22,s}-\omega_{22} - 2 m_{22}v_{22,s}}{1 + v_{11,s} + v_{22,s}}, \label{EQ:mu_y_intensity_simple}
\end{align} 
where the condition $\alpha/2>\max(\omega_{11},\omega_{22})$ clearly shows that it is sufficient to ensure the positivity of both $\mu_x(s)$ and $\mu_y(s)$ as the matrix $m$ has negative eigenvalues (which means the diagonal terms are negative since $m$ is diagonal). Furthermore, the correlation between these two variables is driven by $d\langle v_{11,.},v_{22,.}\rangle_s$ given by \eqref{EQ:v11v22}, which means it is controlled by the coefficient $(\sigma^2)_{12}$  that can take any sign and is zero if $\sigma_{12}=0$. As such, $\sigma_{12}$ is the coefficient through which we control the dependency between the two intensities. Analysing the scaled mortality intensities $(1+\textup{tr}[u_0v_s])\mu_x(s)$ and $(1+\textup{tr}[u_0v_s])\mu_y(s)$, we can find their instantaneous quadratic covariation which is given by  $d\langle \textup{tr}[h_1v_.],\textup{tr}[h_2v_.]\rangle_s$ with $h_1=\alpha u_1-2u_1m$ and $h_2=\alpha u_2-2u_2m$.  Using Proposition~\ref{Prop:QuadraticVariation} we get 
\begin{align}
	d\langle (1+\textup{tr}[u_0v_.])\mu_x(.),(1+\textup{tr}[u_0v_.])\mu_y(.)\rangle_s &= d\langle \textup{tr}[h_1v_.],\textup{tr}[h_2v_.]\rangle_s \nonumber\\
	&= \textup{tr}[(h_1+h_1^\top) v_s(h_2+h_2^\top)\sigma^2] ds\,, \label{EQ:intensity_covariation}
\end{align}
which demonstrates the impact of the parameters on the covariance between the mortality intensities. Note that when $u_1 = e_{11}$ and $u_2 = e_{22}$ then the covariation given by \eqref{EQ:intensity_covariation} will be zero if $\sigma_{12}=0$.\myfootnote{Note that while the scaled intensities will have a quadratic covariation of zero if $\sigma_{12}=0$, the quadratic covariation between the intensities will not zero due to the common denominators in \eqref{EQ:IntensityX} and \eqref{EQ:IntensityY}.} Finally, analysing the mortality intensities in \eqref{EQ:mu_x_intensity_simple} and \eqref{EQ:mu_y_intensity_simple}, given that $\alpha > 0$, \(\omega_{ii}>0\), and \(m_{ii} < 0\), we find \(\frac{\partial \mu_x(s)}{\partial v_{11}} > 0\) and \(\frac{\partial \mu_x(s)}{\partial v_{22}} > 0\). This provides a natural link between the Wishart states and mortality: as $v_{11,s}$ and $v_{22,s}$ increase, the mortality intensities of $x$ and $y$ also increase, respectively. Thus, the Wishart process can replicate an increasing mortality trend over time in expectation, while remaining stationary.\\

The expression \eqref{EQ:IntensityX} combined with the analytical results of Propositions \ref{Prop:MGFBru} as well as the results in \citet{Gurland1948} enables the computation of the mortality density as the following proposition shows.\\
\begin{proposition}\label{Prop:MortalityIntensityDistribution}
Let $\mu_x(T)$ be the mortality intensity of the annuitant $x$ that is given by \eqref{EQ:IntensityX} and denote $c_1=\alpha/2 -\textup{tr}[u_1\omega]$, $h_1 = \alpha u_1-2u_1m$, then the cumulative distribution function (conditional to time \(t\)) of $\mu_x(T)$  is 
\begin{align}
\mathbb{P}(\mu_x(T) \leq z) = \frac{1}{2}-\frac{1}{\pi}\int_0^{+\infty}\Im \left(\frac{e^{\mathrm{i} s(c_1-z)}\Phi(T-t,\mathrm{i} s(h_1-zu_0),v_t)}{s}\right) ds,\label{EQ:IntensityDistribution}
\end{align}
where $\Im(.)$ stands for the imaginary part of a complex number and $\Phi(.,.,.)$ is given in Proposition \ref{Prop:MGFBru}.  
\end{proposition}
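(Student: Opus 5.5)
Since $1+\textup{tr}[u_0v_T]>0$ almost surely, I would first rewrite the event $\{\mu_x(T)\le z\}$ as the event that a single random variable, affine in $v_T$, is nonpositive. Then I would apply the Gil-Pelaez / Gurland inversion formula to that variable, evaluating its characteristic function with Proposition~\ref{Prop:MGFBru}.

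\textbf{Step 1 (linearisation).} In \eqref{EQ:IntensityX} the numerator is $c_1+\textup{tr}[h_1v_T]$, with $c_1=\alpha/2-\textup{tr}[u_1\omega]$ and $h_1=\alpha u_1-2u_1m$. Multiplying by the positive denominator gives
\begin{align*}
\{\mu_x(T)\le z\}=\{X_z\le 0\},\qquad X_z:=c_1-z+\textup{tr}[(h_1-zu_0)v_T].
\end{align*}
So $\mathbb{P}(\mu_x(T)\le z)=F_{X_z}(0)$, where $F_{X_z}$ is the distribution function of $X_z$ conditional on $\mathcal{F}_t$.

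Here $h_1$ need not be symmetric. Since $v_T$ is symmetric, $\textup{tr}[hv_T]=\textup{tr}[\tfrac12(h+h^\top)v_T]$, so one may replace $h_1$ by its symmetric part. This puts the argument in $\mathbb{S}_n$, as Proposition~\ref{Prop:MGFBru} requires; I would read the statement's $h_1-zu_0$ in that sense.

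\textbf{Step 2 (characteristic function).} By the Markov property and time homogeneity of the Wishart process,
\begin{align*}
\mathbb{E}_t\big[e^{\mathrm{i}sX_z}\big]=e^{\mathrm{i}s(c_1-z)}\,\Phi\big(T-t,\mathrm{i}s(h_1-zu_0),v_t\big).
\end{align*}
This uses Proposition~\ref{Prop:MGFBru} extended to purely imaginary arguments $\theta_1=\mathrm{i}s\theta$ with $\theta\in\mathbb{S}_n$. I expect this extension to be the main technical point. My plan for it:
\begin{itemize}
\item For real $\theta$ near $0$, both sides of \eqref{EQ:mgfBru} are analytic, and the expectation is finite.
\item The matrix $I_n-2\mathrm{i}s\,\varsigma_{T-t}\theta$ is always invertible. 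Indeed $\varsigma\succ0$, so $\varsigma\theta$ is similar to $\varsigma^{1/2}\theta\varsigma^{1/2}$, which has real eigenvalues; hence $I_n-2\mathrm{i}s\varsigma\theta$ has eigenvalues $1-2\mathrm{i}s\lambda$, none of which vanish.
\item By analytic continuation in the scalar variable $w$ of $\Phi(\cdot,w\theta,\cdot)$, the formula therefore holds on the imaginary axis.
\item The power $\det(\cdot)^{\beta/2}$ is taken along the continuous branch starting at $1$, which is well defined because the determinant never vanishes along the path.
\end{itemize}
This continuation is standard for non-central Wishart laws, see \citet{GuptaNagar2000}.

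\textbf{Step 3 (inversion).} By Gurland's inversion theorem \citep{Gurland1948}, for a random variable $X$ with characteristic function $\varphi$,
\begin{align*}
F_X(0)=\tfrac12-\tfrac1\pi\int_0^\infty \Im\big(\varphi(s)/s\big)\,ds,
\end{align*}
understood at continuity points and as an improper integral. Two checks are needed:
\begin{itemize}
\item \emph{Continuity at $0$.} $\mathbb{P}(X_z=0)=0$ because $v_T$ has a non-central Wishart density, so the hyperplane $\{X_z=0\}$ is null whenever $h_1-zu_0\neq0$. The degenerate case $h_1-zu_0=0$ is trivial.
\item \emph{Integrability.} Near $s=0$ the integrand is bounded, since $\Im\varphi(s)/s\to\mathbb{E}[X_z]$, which is finite by Lemma~\ref{LEM:TraceMean}.
\end{itemize}
Substituting the characteristic function from Step~2 yields \eqref{EQ:IntensityDistribution}.
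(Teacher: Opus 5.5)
Your proposal is correct and is essentially the paper's proof. The paper applies Gurland's ratio-inversion formula to the pair $\mathrm{x}_1=c_1+\textup{tr}[h_1v_T]$, $\mathrm{x}_2=1+\textup{tr}[u_0v_T]$, and their joint moment generating function $e^{z_1c_1+z_2}\Phi(T-t,z_1h_1+z_2u_0,v_t)$, evaluated at $(\mathrm{i}s,-\mathrm{i}sz)$, is exactly the characteristic function of your $X_z=\mathrm{x}_1-z\mathrm{x}_2$. Your extra checks supply details the paper leaves implicit: replacing $h_1$ by its symmetric part, continuing \eqref{EQ:mgfBru} to imaginary arguments with the continuous branch of $\det(\cdot)^{\beta/2}$, and ruling out an atom at $0$ (which, strictly, needs the symmetric part of $h_1-zu_0$ to be nonzero).
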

From the above proposition, we can derive the mortality intensity density by simply taking the derivative of the cumulative distribution, which confirms  the usefulness of Proposition  \ref{Prop:MGFBruDerivative}, as the following result shows.
\begin{corollary}\label{Prop:MortalityIntensityDensity}
Let $\mu_x(T)$ be the mortality intensity of the annuitant $x$ that is given by \eqref{EQ:IntensityX}, and thus its cumulative distribution function is given by Proposition \ref{Prop:MortalityIntensityDistribution}, then the density (conditional to time t) of $\mu_x(T)$ is given by
\begin{align}
\frac{d}{dz} \mathbb{P}(\mu_x(T) \leq z) &= -\frac{1}{\pi}\int_0^{+\infty}\Im \left(\frac{(-\mathrm{i}s + g_1+g_2+g_3)e^{\mathrm{i}s(c_1-z)}\Phi(T-t,\mathrm{i}s(h_1-zu_0),v_t)}{s} \right)ds,\label{EQ:IntensityDensity}
\end{align}  
with $g_j=g_j(T-t,\mathrm{i} sh_1,-\mathrm{i} su_0,z)$ for $j\in \lbrace 1,2,3 \rbrace$ as in Proposition \ref{Prop:MGFBruDerivative}.
\end{corollary}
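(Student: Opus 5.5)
The plan is to differentiate the Gil-Pelaez type representation \eqref{EQ:IntensityDistribution} of Proposition~\ref{Prop:MortalityIntensityDistribution} with respect to $z$ under the integral sign, and to evaluate the $z$-derivative of the moment generating function with Proposition~\ref{Prop:MGFBruDerivative}. The constant $\tfrac12$ drops out, so only the integrand needs to be differentiated.

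The integrand depends on $z$ in two places. First, through the factor $e^{\mathrm{i}s(c_1-z)}$, whose derivative is $-\mathrm{i}s\,e^{\mathrm{i}s(c_1-z)}$. Second, through the argument of $\Phi$, which I write as $\theta_1+\nu\theta_2$ with $\theta_1=\mathrm{i}sh_1$, $\theta_2=-\mathrm{i}su_0$ and $\nu=z$.

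Since $\textup{tr}[hv]=\textup{tr}[\tfrac{h+h^\top}{2}v]$ for symmetric $v$, I first replace $h_1$ by its symmetric part, so that $\theta_1,\theta_2$ are (complex multiples of) symmetric matrices. Proposition~\ref{Prop:MGFBruDerivative} is stated for real $\theta_1,\theta_2\in\mathbb{S}_n$. I extend it to these purely imaginary arguments by analytic continuation. Both sides of \eqref{EQ:mgfBruDerivative} are analytic in the entries of $\theta_1,\theta_2$ on the domain where $\det(I_n-2\varsigma_{T-t}(\theta_1+\nu\theta_2))\neq 0$. This domain contains all purely imaginary arguments, because $\varsigma_{T-t}\in\mathbb{S}_n^{++}$ and $\varsigma^{1/2}\theta\varsigma^{1/2}$ is then skew-Hermitian times a real scalar. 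Alternatively, one can simply note that the derivation of Proposition~\ref{Prop:MGFBruDerivative} is a purely algebraic differentiation of \eqref{EQ:mgfBru} that remains valid verbatim.

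Applying the product rule then gives the new integrand
\[
\frac{(-\mathrm{i}s+g_1+g_2+g_3)\,e^{\mathrm{i}s(c_1-z)}\,\Phi(T-t,\mathrm{i}s(h_1-zu_0),v_t)}{s}, \qquad g_j=g_j(T-t,\mathrm{i}sh_1,-\mathrm{i}su_0,z).
\]
Taking the imaginary part, which commutes with $d/dz$, yields \eqref{EQ:IntensityDensity}.

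The main obstacle is justifying the interchange of $d/dz$ and $\int_0^{+\infty}$. I would use dominated convergence on $z$ ranging over a compact interval, checking the two ends of the integral separately.

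Near $s=0$, each $g_j$ is linear in $\theta_2=-\mathrm{i}su_0$ times bounded factors, so $g_j=O(s)$. Hence $(-\mathrm{i}s+g_1+g_2+g_3)/s$ is bounded, and $|\Phi|\le 1$ because it is a characteristic function.

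For $s\to\infty$, I would use the bound $|\det(I_n-2\mathrm{i}s\varsigma\kappa)|^{-\beta/2}$ with $\kappa=h_1-zu_0$. Whenever $\kappa\neq 0$ this decays polynomially, roughly like $s^{-\beta\,\mathrm{rank}(\kappa)/2}$, with $\beta\ge n+1\ge 3$. The factors $g_j$ grow at most like $O(1)$, since $(I_n-2\mathrm{i}s\varsigma\kappa)^{-1}s\theta_2$ stays bounded on the range of $\kappa$. The exponential factor has modulus at most one, because its real part is nonpositive for imaginary arguments.

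The delicate case is when $\kappa$ has low rank, so that the decay is weak. There the integral may need to be interpreted as an improper or conditionally convergent one. For that case I would first try integration by parts in $s$, or rely on the same convergence argument already used for Proposition~\ref{Prop:MortalityIntensityDistribution} via \citet{Gurland1948}, applied to the derivative.
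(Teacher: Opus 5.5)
Your proposal is correct and follows the paper's argument. The paper simply differentiates \eqref{EQ:IntensityDistribution} under the integral sign, using the product rule on $e^{\mathrm{i}s(c_1-z)}$ and Proposition~\ref{Prop:MGFBruDerivative} with $\theta_1=\mathrm{i}sh_1$, $\theta_2=-\mathrm{i}su_0$, $\nu=z$, and it never justifies the interchange, so your symmetrisation of $h_1$, analytic continuation and dominated-convergence step are added rigour.

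One correction inside that step. When $h_1-zu_0$ is singular the $g_j$ are in general $O(s)$, not $O(1)$. This does not matter, because each $g_j/s$ is uniformly bounded (since $|(1-2\mathrm{i}sa)^{-1}|\le 1$ for real $a$), and $|\Phi|\le(1+cs^2)^{-\beta/4}$ holds uniformly for $z$ in compacts whenever the symmetric part of $h_1$ is not a multiple of $u_0$. Since $\beta/2\ge 3/2>1$, even the rank-one case is integrable, and your "delicate" low-rank case needs no further argument.
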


%\begin{remark}
%When it comes to model implementation, \eqref{EQ:IntensityDistribution} may require to specify the limit at zero of the integrand. Performing a Taylor expansion (of the exponential function) one gets
%\begin{align*}
%\lim_{s\to 0}\Im \left(\frac{e^{\mathrm{i} s(c_1-z)}\Phi(t,\mathrm{i} s(h_1-zu_0),v_0)}{s}\right)=c_1-z + \mathbb{E}\left[\textup{tr}[(h_1-zu_0)v_t] \right],
%\end{align*}
%that is computed thanks to Lemma \ref{LEM:TraceMean}, while for \eqref{EQ:IntensityDensity}, deriving that function leads to $-1 - \mathbb{E}\left[\textup{tr}[u_0v_t]\right]$.
%\end{remark}

\begin{remark}
	 Note that the results above require an explicit expression for the moment generating function, which is not available under the assumptions made in \cite{DaFonseca2024}. 
\end{remark}

Having an explicit form of the mortality density is of significant practical importance. For example, one can compute the conditional means or higher order moments of the mortality intensities, which are modelled under the historical measure $\mathbb{P}$, that can guide in the operational use of the model.\\

Thanks to \eqref{EQ:SBZC3} the joint survival bond $\mathrm{SB}(t,T)$ (\textit{i.e.}, \eqref{EQ:SBZC}) is known and therefore the joint survival annuity \eqref{EQ:SB3} is also known as the next proposition shows.
\begin{proposition}\label{Prop:JointSurvivalAnnuity} 
Let $T_1<\ldots<T_N$ be a set of yearly spaced dates such that $t<T<T_1$ and $T_N=\min(x^*-(x+T+ t)-1,y^*-(y+T+t)-1)$ with $x^*$, resp. $y^*$, the maximum age the $x$, resp. $y$, insured can reach. The joint survival annuity pays 1 conditional on the survival of both annuitants at those dates, its value at time $T$ is given by
\begin{align}
\sum_{i=1}^N \mathrm{SB}(T,T_i)=\frac{b_3(T,T_N)+\textup{tr}[a_3(T,T_N)v_T]}{1 + \textup{tr}[u_0v_T]}, \label{EQ:JointSurvivalAnnuity}
\end{align}
with
\begin{align}
b_3(T,T_N)  &:= \sum_{i=1}^N P(T,T_i)e^{-\alpha(T_i-T)} \left( 1 +  b_0(T_i-T)\right), \label{EQ:B3}\\
a_3(T,T_N)  &:= \sum_{i=1}^N  P(T,T_i)e^{-\alpha(T_i-T)} a_0(T_i-T), \label{EQ:A3}
\end{align}
with $b_0(.), a_0(.)$ given in Proposition \ref{Prop:SBZCPricing}.
\end{proposition}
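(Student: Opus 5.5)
The plan is to obtain \eqref{EQ:JointSurvivalAnnuity} by summing the expressions of Proposition~\ref{Prop:SBZCPricing} term by term. The whole point is that every joint survival bond valued at time $T$ has the \emph{same} denominator $1+\textup{tr}[u_0v_T]$. First, for each payment date $T_i$ I would apply Proposition~\ref{Prop:SBZCPricing} with the valuation time $t$ replaced by $T$ and maturity $T_i$; this is legitimate because $T<T_1<\cdots<T_N$. On the set $\lbrace \tau_x>T,\tau_y>T\rbrace$ this gives
\begin{align*}
\mathrm{SB}(T,T_i)=P(T,T_i)e^{-\alpha(T_i-T)}\frac{1+b_0(T_i-T)+\textup{tr}[a_0(T_i-T)v_T]}{1+\textup{tr}[u_0v_T]}.
\end{align*}
Here $a_0=a_1+a_2$ and $b_0=b_1+b_2$ come from Lemma~\ref{LEM:TraceMean} applied to $u_1$ and $u_2$, which is possible because the trace is linear in $u$. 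Underlying this is the identity $\mathbb{E}_T[\zeta_{T_i}/\zeta_T]$ with $\zeta$ given by \eqref{EQ:PricingKernel}, together with the Markov property of $v$.

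Next I would sum over $i=1,\ldots,N$. The denominator $1+\textup{tr}[u_0v_T]$ does not depend on $i$, so it factors out of the sum. In the numerator, the scalar weights $P(T,T_i)e^{-\alpha(T_i-T)}$ are deterministic, since the interest rate is taken deterministic. The terms then split into two parts. The scalar part collects into $b_3(T,T_N)$ as in \eqref{EQ:B3}. The matrix part collects into $\sum_i P(T,T_i)e^{-\alpha(T_i-T)}\textup{tr}[a_0(T_i-T)v_T]=\textup{tr}[a_3(T,T_N)v_T]$, using linearity of the trace, which gives \eqref{EQ:A3}.

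There is no real obstacle. The only points needing care are bookkeeping ones:
\begin{itemize}
\item Lemma~\ref{LEM:TraceMean} is applied with horizon $T_i-T$ and started from $v_T$, which requires time-homogeneity of the Wishart dynamics \eqref{EQ:Wishart}.
\item The statement holds on the survival set at time $T$, consistent with the convention the paper adopted after Proposition~\ref{Prop:SBZCPricing}.
\item The index range of the $T_i$ should match the definition of $T_N$ in terms of the maximal ages.
\end{itemize}
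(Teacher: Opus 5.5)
Your proposal is correct and takes the same route as the paper. The paper gives no separate appendix proof; it only remarks that the result follows from \eqref{EQ:SBZC3}, i.e.\ applying Proposition~\ref{Prop:SBZCPricing} at valuation time $T$ for each $T_i$, factoring out the common denominator $1+\textup{tr}[u_0v_T]$, and collecting the deterministic weights $P(T,T_i)e^{-\alpha(T_i-T)}$ into $b_3$ and, by linearity of the trace, into $a_3$.
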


Note that the joint survival annuity \eqref{EQ:JointSurvivalAnnuity} also has a linear-rational form so its density can be computed explicitly (up to a one-dimensional numerical integration).

\begin{proposition}\label{Prop:JointSurvivalAnnuityDensity} 
Let $\sum_{i=1}^N\mathrm{SB}(T,T_i)$ be a joint survival annuity defined in Proposition~\ref{Prop:JointSurvivalAnnuity}. Then the cumulative distribution function (conditional to time t) of $\sum_{i=1}^N\mathrm{SB}(T,T_i)$ is given by
\begin{align}
	\mathbb{P}\left(\sum_{i=1}^N\mathrm{SB}(T,T_i) \leq z \right) = \frac{1}{2}-\frac{1}{\pi}\int_0^{+\infty}\Im \left(\frac{e^{\mathrm{i}s(b_3 - z)} \Phi(T-t, \mathrm{i}s(a_3 - z u_0), v_t)}{s}\right) ds,\label{EQ:AnnuityDistribution}
\end{align}
where $a_3$ and $b_3$ are the expressions in Proposition~\ref{Prop:JointSurvivalAnnuity} with their dependencies dropped on $(T,T_N)$ and $\Phi(.,.,.)$ is given in Proposition \ref{Prop:MGFBru}. 

\end{proposition}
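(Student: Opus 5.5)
The plan is to reduce the event to a sign condition on a random variable that is affine in $v_T$, and then apply the Gil-Pelaez inversion formula of \citet{Gurland1948}, exactly as in Proposition~\ref{Prop:MortalityIntensityDistribution}. Since $u_0\in\mathbb{S}_n^+$ and $v_T\in\mathbb{S}_n^{++}$, we have $1+\textup{tr}[u_0v_T]>0$ almost surely. By Proposition~\ref{Prop:JointSurvivalAnnuity} the event $\{\sum_{i=1}^N\mathrm{SB}(T,T_i)\le z\}$ therefore coincides with
\begin{align*}
\{Y_z\le 0\},\qquad Y_z := b_3 - z + \textup{tr}[(a_3-zu_0)v_T].
\end{align*}
So the whole problem becomes computing $\mathbb{P}_t(Y_z\le 0)$ for a scalar random variable $Y_z$ that is affine in $v_T$.

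The first step is the characteristic function of $Y_z$. Because $u_0$ is symmetric, $\textup{tr}[u_0v_T]=\textup{tr}[\tfrac12(u_0+u_0^\top)v_T]$ holds automatically. The same symmetrisation applies to $a_3$ (built from the $a_0(T_i-T)$ of Lemma~\ref{LEM:TraceMean}), so without loss of generality the argument $\mathrm{i}s(a_3-zu_0)$ may be taken in the (complexified) symmetric matrices. By the Markov property of the Wishart process, conditioning on $\mathcal{F}_t$ and using time-homogeneity,
\begin{align*}
\mathbb{E}_t\left[e^{\mathrm{i}sY_z}\right]=e^{\mathrm{i}s(b_3-z)}\,\Phi\left(T-t,\mathrm{i}s(a_3-zu_0),v_t\right),
\end{align*}
with $\Phi$ from Proposition~\ref{Prop:MGFBru}. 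We need \eqref{EQ:mgfBru} to hold for purely imaginary arguments. This is justified by analytic continuation: the expression is analytic in $\theta_1$ wherever $\det(I_n-2\varsigma_t\theta_1)\neq0$. For $\theta_1=\mathrm{i}\theta$ with $\theta$ real symmetric, $\varsigma_t^{1/2}\theta\varsigma_t^{1/2}$ has real spectrum, so $I_n-2\mathrm{i}\varsigma_t\theta$ is invertible. The branch of the power $\beta/2$ is fixed by continuity from $\theta=0$.

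The second step applies the Gil-Pelaez formula
\begin{align*}
\mathbb{P}(Y\le 0)=\frac12-\frac1\pi\int_0^\infty \frac{\Im\left(\varphi_Y(s)\right)}{s}\,ds,
\end{align*}
which is valid at $0$ whenever $0$ is a continuity point of the law of $Y$. Substituting the characteristic function from the first step gives \eqref{EQ:AnnuityDistribution} directly.

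The main obstacle is the technical justification, namely:
\begin{itemize}
\item that $\mathbb{P}(Y_z=0)=0$. This should follow because $v_T$ has a (non-central Wishart) density on $\mathbb{S}_n^{++}$, so an affine functional of $v_T$ that is not almost surely constant has no atom. The degenerate case $a_3=zu_0$ must be handled separately and can be excluded generically.
\item the convergence of the improper integral near $s=0$ and at $\infty$. Near $0$ the integrand tends to $\mathbb{E}_t[Y_z]$, which is finite by Lemma~\ref{LEM:TraceMean}. At infinity, Gurland's result provides the principal-value interpretation.
\end{itemize}
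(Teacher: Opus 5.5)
Your proposal is correct and follows essentially the paper's argument. The paper writes the annuity as the ratio of $\mathrm{x}_1=b_3+\textup{tr}[a_3v_T]$ and $\mathrm{x}_2=1+\textup{tr}[u_0v_T]$, evaluates their joint moment generating function $e^{z_1b_3+z_2}\Phi(T-t,z_1a_3+z_2u_0,v_t)$ at $(z_1,z_2)=(\mathrm{i}s,-\mathrm{i}sz)$, and cites Gurland's ratio inversion; with a positive denominator, that is exactly your reduction to $\{Y_z\le 0\}$ followed by Gil-Pelaez, and your added justifications (positivity of $\mathrm{x}_2$, continuation of \eqref{EQ:mgfBru} to imaginary arguments, no atom at $0$ except in the degenerate case $a_3=zu_0$, $b_3=z$) are sound points the paper leaves implicit.
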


Since we have the density of the future annuity value, we can compute certain risk management quantities such as the lower tail that gives the value at risk or the expected shortfall. This sharply contrasts with the mortality model based on the exponential affine framework for which these quantities are very hard to obtain. Further along this line, an option on the joint survival annuity, also named the guaranteed joint survival annuity option, can be priced very efficiently in the linear-rational Wishart mortality model as the next section shows.

%%%%%%%%%%%%%%%%%%%%%%%%%%%%%%%%%%%%%%%%%%%%%%%%%%%%%%%%%%
%
% The joint survival annuity derivative and related quantities.
%
%%%%%%%%%%%%%%%%%%%%%%%%%%%%%%%%%%%%%%%%%%%%%%%%%%%%%%%%%%
\section{Guaranteed joint survival annuity option valuation}\label{PricingAnnuityDerivative}

With the joint survival annuity being priced, the next step is to calculate the value of an option on that contract and also to develop option price approximations that simplify the model implementation as well as the risk management of the product. First, we show that an option on the joint survival annuity is surprisingly simple to price in the linear-rational Wishart mortality model. 
\begin{proposition}\label{GAOpricing}
The value at time $t$ of a guaranteed joint survival annuity option \eqref{EQ:GAO2} with maturity $T$ is given by
\begin{align}
\bar{C}(t,T,T_N) := P(t,T)e^{-\alpha (T-t)}\frac{\mathbb{E}_t\left[(b_4(T,T_N) + \textup{tr}[a_4(T,T_N)v_T])_+\right] }{1+\textup{tr}[u_0v_0]},\label{EQ:GAOpricingFormula}
\end{align}
with $b_4(T,T_N)$ a constant and $a_4(T,T_N)$ a matrix such that
\begin{align}
b_4(T,T_N)  &:= b_3(T,T_N) - \frac{1}{g}, \label{EQ:B4}\\
a_4(T,T_N)  &:= a_3(T,T_N)   - \frac{1}{g}u_0, \label{EQ:A4}
\end{align}
where $b_3(.)$, $a_3(.)$ are given in Proposition \ref{Prop:JointSurvivalAnnuity}.
\end{proposition}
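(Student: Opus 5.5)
The plan is to start from the definition \eqref{EQ:GAO2} and move the expectation from $\mathbb{Q}$ to $\mathbb{P}$ via the state-price density \eqref{EQ:PricingKernel}--\eqref{EQ:Radon}. Since the interest rate is deterministic, $e^{-\int_t^T r(s)ds}=P(t,T)$ factors out of the expectation. The mortality discount factor together with the change of measure then gives, for any $\mathcal{F}_T$-measurable payoff $X$,
\begin{align*}
\mathbb{E}_t^{\mathbb{Q}}\left[e^{-\int_t^T(\mu_x(s)+\mu_y(s))ds}X\right]=\mathbb{E}_t\left[\frac{\zeta_T}{\zeta_t}X\right]=e^{-\alpha(T-t)}\frac{\mathbb{E}_t\left[(1+\textup{tr}[u_0v_T])X\right]}{1+\textup{tr}[u_0v_t]}.
\end{align*}
This is the same abstract Bayes-rule step that underlies Proposition~\ref{Prop:SBZCPricing}, now applied with a general payoff rather than $X=1$. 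I will justify it in exactly the same way: by \eqref{EQ:Radon}, the density process $\frac{d\mathbb{Q}}{d\mathbb{P}}|_{\mathcal{F}_s}$ equals $\zeta_s e^{\int_0^s(\mu_x+\mu_y)du}$, and the conditional Bayes formula then yields the display.

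Next I take $X=\bar{C}(T,T,T_N)=\left(\sum_{i=1}^N \mathrm{SB}(T,T_i)-1/g\right)_+$ and substitute the linear-rational representation of Proposition~\ref{Prop:JointSurvivalAnnuity}. This gives
\begin{align*}
X=\left(\frac{b_3+\textup{tr}[a_3v_T]}{1+\textup{tr}[u_0v_T]}-\frac1g\right)_+=\frac{\left(b_3-\frac1g+\textup{tr}[(a_3-\frac1g u_0)v_T]\right)_+}{1+\textup{tr}[u_0v_T]}.
\end{align*}
The second equality holds because $1+\textup{tr}[u_0v_T]>0$: indeed $u_0\in\mathbb{S}_n^+$ and $v_T\in\mathbb{S}_n^{++}$, so $\textup{tr}[u_0v_T]\ge 0$. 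A positive factor can therefore be pulled outside the positive part. The denominator then cancels exactly with the factor $(1+\textup{tr}[u_0v_T])$ coming from $\zeta_T$. What remains is $\mathbb{E}_t[(b_4+\textup{tr}[a_4v_T])_+]$, with $b_4,a_4$ as in \eqref{EQ:B4}--\eqref{EQ:A4}, multiplied by $P(t,T)e^{-\alpha(T-t)}/(1+\textup{tr}[u_0v_t])$. This is the stated formula; the $v_0$ in the denominator of \eqref{EQ:GAOpricingFormula} is read as $v_t$, and the two coincide at $t=0$.

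The main obstacle is not the algebra but the rigorous justification of the measure change with the death indicators. Reducing from $\mathcal{G}_t$ to $\mathcal{F}_t$ on $\{\tau_x>t,\tau_y>t\}$ follows \citet[Appendix C]{Biffis2005}, as was already used in \eqref{EQ:SBZC2b}. I will also need integrability: $X$ is bounded by the annuity value, which is in turn bounded by a linear-rational function, and $\mathbb{E}_t[\textup{tr}[u_0 v_T]]<\infty$ by Lemma~\ref{LEM:TraceMean}. This makes every expectation above finite, so the cancellation is legitimate.
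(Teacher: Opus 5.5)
Your proposal is correct and follows essentially the same route as the paper's proof. Both change measure to get $\zeta_T/\zeta_t$ and substitute the linear-rational annuity value from Proposition~\ref{Prop:JointSurvivalAnnuity}. Both then cancel the positive factor $1+\textup{tr}[u_0v_T]$ against the numerator of $\zeta_T$, a step the paper compresses into ``some simplifications.'' Your reading of the $v_0$ in the denominator as $v_t$ is also correct, since that factor comes from $\zeta_t$.
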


As formula \eqref{EQ:GAOpricingFormula} shows, the pricing of a guaranteed joint survival annuity option can be computed using an integration. Indeed, if $Y_T = b_4(T,T_N) + \textup{tr}[a_4(T,T_N)v_T] $ then its characteristic function is given by 
\begin{align}
\Phi_Y(z) &:=\mathbb{E}_t\left[e^{\mathrm{i} z Y_T} \right]= e^{\mathrm{i} z b_4(T,T_N)}\Phi(T-t,\mathrm{i} z a_4(T,T_N),v_0), \label{EQ:MGFz}
\end{align}
with $z\in \mathbb{C}$, $\mathrm{i}=\sqrt{-1}$ and the function $\Phi(.,.,.)$ given by \eqref{EQ:mgfBru}. The expectation in \eqref{EQ:GAOpricingFormula} can be computed as
\begin{align}  
\mathbb{E}_t\left[(Y_T)_+\right]=\frac{1}{\pi}\int_0^{+\infty} \Re\left( \frac{\Phi_Y(z+\mathrm{i}z_i)}{(\mathrm{i}(z+\mathrm{i}z_i))^2} \right) dz\,, \label{EQ:GAOFft}
\end{align}
with $z_i<0$ and $\Re(.)$ the real part.\\

As the result above shows, the valuation of the guaranteed joint survival annuity option only requires one-dimensional integration. Notice that the result holds whether $\omega =\beta \sigma^2$ (\textit{i.e.}, Bru case) is satisfied or not (see \cite{DaFonseca2024} for an implementation in the non-Bru case). In that latter case, computing the moment generating function involves further numerical integration. Even if the Bru case condition can be a bit too restrictive in practice, it offers nonetheless certain advantages in terms of option price approximations that provide quick alternatives to the exact formula. In what follows we derive three different approximations to the guaranteed joint survival annuity option price. These analytic approximations are new and would not be available in \cite{DaFonseca2024}.\\

To compute an approximation to the expectation \eqref{EQ:GAOpricingFormula}, one needs the moments of the variable $Y_T = b_4(T,T_N) + \textup{tr}[a_4(T,T_N)v_T]$, which by definition can be obtained by taking successive derivatives of the moment generating function. These derivatives can be tedious to perform in the general case, see for example \citet{LetacMassam2008}. Fortunately, for our Bru case, it is possible to obtain the moments by a direct series expansion.

\begin{proposition}\label{Prop:CumulantsExpansion}
Suppose that $\omega$ in  \eqref{EQ:Wishart} is such that $\omega=\beta \sigma^2$ with $\beta \in \mathbb{R}$ and $\beta \geq n+1$ and let $Y_T = b_4(T,T_N) + \textup{tr}[a_4(T,T_N)v_T] $ with $ b_4(T,T_N)$ and $ a_4(T,T_N)$ (written $ b_4$ and $ a_4$ for simplicity)  as in ~\eqref{EQ:GAOpricingFormula}. The cumulants of  $Y_T$ denoted $(\kappa_j)_{j \in \mathbb{N}_+}$ are given by
\begin{align}
\kappa_j 	&= b_4 \delta_{j1} +\beta (j-1)!2^{j-1}\textup{tr}[(\varsigma_Ta_4)^{j}]+ j! 2^{j-1} \textup{tr}[\vartheta_T^\top(\varsigma_Ta_4)^{j}],
\end{align}
with $\delta_{j1}=1$ if $j=1$ and $0$ otherwise, $\varsigma_T$ and $\vartheta_T$ given in Proposition \ref{Prop:MGFBru}. The moments of $Y_T$, denoted $(\mu_j)_{j\in \mathbb{N}_+}$, can be obtained from the cumulants thanks to the relation
\begin{align} 
\mu_{j}=\sum _{k=1}^{j}B_{j,k}(\kappa_{1},\ldots ,\kappa_{j-k+1}),\label{eq:MomentsFromCumulants}
\end{align}
with $B_{j,k}$ the incomplete Bell polynomials.
\end{proposition}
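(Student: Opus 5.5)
The plan is to compute the cumulant generating function of $Y_T$ directly from Proposition~\ref{Prop:MGFBru} and expand it as a power series in the argument. Set $K(s):=\log \mathbb{E}_t[e^{sY_T}] = s b_4 + \log \Phi(T-t, s a_4, v_t)$ for real $s$ near zero. Since $a_4$ is symmetric (it is a combination of $a_0$'s and $u_0$; if $a_0$ is not symmetric we first replace $a_4$ by $(a_4+a_4^\top)/2$, which leaves $\textup{tr}[a_4 v_T]$ unchanged because $v_T$ is symmetric), Proposition~\ref{Prop:MGFBru} applies with $\theta_1 = s a_4$. Writing $M:=2\varsigma_T a_4$, with $\varsigma_T,\vartheta_T$ evaluated at the horizon $T-t$ and started from $v_t$ as in the statement, this gives
\begin{align*}
K(s) = s b_4 + \textup{tr}\left[\frac{\vartheta_T^\top}{2} sM (I_n - sM)^{-1}\right] - \frac{\beta}{2}\log\det(I_n - sM).
\end{align*}

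Next I expand each piece for $|s|\,\rho(M)<1$, where $\rho(M)$ is the spectral radius. For the determinant term I use $\log\det(I_n - sM) = \textup{tr}[\log(I_n - sM)] = -\sum_{j\geq1} s^j \textup{tr}[M^j]/j$. This can be justified via the eigenvalues of $M$; since $M$ is similar to $\varsigma_T^{1/2}a_4\varsigma_T^{1/2}$, they are real. For the rational term I use the Neumann series $sM(I_n-sM)^{-1}=\sum_{j\geq1} s^jM^j$. Collecting powers gives
\begin{align*}
K(s) = s b_4 + \sum_{j\geq 1} s^j\left(\frac{\beta}{2j}\textup{tr}[M^j] + \frac12\textup{tr}[\vartheta_T^\top M^j]\right).
\end{align*}
Since $\kappa_j = j!\,[s^j]K(s)$ and $M^j = 2^j(\varsigma_T a_4)^j$, we get
\begin{align*}
\kappa_j = b_4\delta_{j1} + \beta (j-1)!\,2^{j-1}\textup{tr}[(\varsigma_Ta_4)^j] + j!\,2^{j-1}\textup{tr}[\vartheta_T^\top(\varsigma_Ta_4)^j],
\end{align*}
which is the claimed formula.

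The main care point is to justify identifying the Taylor coefficients of $K$ with the cumulants. The moment generating function is finite on a neighbourhood of $0$, namely whenever $I_n - sM$ is positive definite, so $Y_T$ has all moments and $K$ is analytic there; the series above converge on that same neighbourhood. The branch of $\log\det$ is the one equal to $0$ at $s=0$, consistent with the power $\det(\cdot)^{\beta/2}$ in \eqref{EQ:mgfBru}.

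Finally, the moment relation \eqref{eq:MomentsFromCumulants} is the standard exponential-formula identity. From $\mathbb{E}[e^{sY_T}]=\exp(K(s))$ and Faà di Bruno's formula, $\mu_j = \sum_{k=1}^j B_{j,k}(\kappa_1,\ldots,\kappa_{j-k+1})$, so nothing model-specific is needed at this step.
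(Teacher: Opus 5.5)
Your proof is correct and follows the paper's argument: take the logarithm of the Bru-case moment generating function \eqref{EQ:mgfBru}, expand $\textup{tr}[\log(I_n-2s\varsigma_Ta_4)]$ and the Neumann series of $2s\varsigma_Ta_4(I_n-2s\varsigma_Ta_4)^{-1}$, and read off $\kappa_j$ as $j!$ times the coefficient of $s^j$. Your additional remarks are details the paper leaves implicit: the real spectrum of $\varsigma_Ta_4$, the neighbourhood of convergence, and the Fa\`a di Bruno justification of \eqref{eq:MomentsFromCumulants}.
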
 

The cumulants/moments enable the use of \citet{JarrowRudd1982}'s option price approximation that was also applied in \citet{CollinDufresneGoldstein2002} to the problem of swaption pricing, which is similar  to the pricing of an option on a coupon bearing bond or, equivalently, to an option on an annuity. We follow \citet{CollinDufresneGoldstein2002} in our presentation below. 
\begin{proposition}\label{GAOPricingApproximation}
Let $Y_T = b_4(T,T_N) + \textup{tr}[a_4(T,T_N)v_T] $ with $ b_4(T,T_N)$ and $ a_4(T,T_N)$ as in ~\eqref{EQ:GAOpricingFormula}, and denote the first three cumulants of $Y_T$, given Proposition \ref{Prop:CumulantsExpansion}, to be $(\kappa_1,\kappa_2,\kappa_3)$. Assume that the density of $Y_T$ can be approximated by
\begin{align}
  \frac{1}{\sqrt{2\pi \kappa_2}} e^{-\frac{(z-\kappa_1)^2}{2\kappa_2}}\left(\sum_{j=0}^3\eta_j (z-\kappa_1)^j\right)\,, \label{EQ:ApproximationDensity}
\end{align}
with $\lbrace \eta_j;j\in 0,\ldots, 3\rbrace$ some constants related to the first three cumulants of $Y_{T}$. Then
\begin{align}
\mathbb{E}_t\left[   \left(  Y_{T}\right)_{+}\right] \sim \sum_{j = 0}^3 \eta_j \xi_{j+1}+ \kappa_1\sum_{j = 0 }^{3} \eta_j \xi_{j} \label{EQ:ApproximationSwaption}\,,
\end{align}
where $\lbrace \xi_j;j= 0,\ldots, 4 \rbrace$  are some constants that can be computed explicitly and depend on the first three cumulants of $Y_{T}$.
\end{proposition}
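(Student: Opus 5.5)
The plan is to follow the Edgeworth/Gram--Charlier route of \citet{JarrowRudd1982} and \citet{CollinDufresneGoldstein2002}. First I fix the constants in the approximating density \eqref{EQ:ApproximationDensity}. Write $\phi_{\kappa}(z)=(2\pi\kappa_2)^{-1/2}e^{-(z-\kappa_1)^2/(2\kappa_2)}$. The density of $Y_T$ is expanded in Hermite polynomials around this Gaussian with the same mean and variance, and the expansion is truncated at the third-order term. Setting $x=z-\kappa_1$, this gives
\begin{align*}
\phi_\kappa(z)\Bigl(1+\tfrac{\kappa_3}{6\kappa_2^{3}}\bigl(x^3-3\kappa_2 x\bigr)\Bigr),
\end{align*}
so that $\eta_0=1$, $\eta_1=-\kappa_3/(2\kappa_2^2)$, $\eta_2=0$ and $\eta_3=\kappa_3/(6\kappa_2^3)$. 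I would check directly that this density integrates to one and reproduces $\kappa_1,\kappa_2,\kappa_3$, using the Gaussian moments $\int x^{2k}\phi_\kappa=(2k-1)!!\,\kappa_2^k$ and the vanishing of odd moments. The cumulants themselves are taken from Proposition~\ref{Prop:CumulantsExpansion} with $j=1,2,3$.

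Next I substitute this density into $\mathbb{E}_t[(Y_T)_+]=\int_0^\infty z\,p(z)\,dz$. Writing $z=x+\kappa_1$, the integrand becomes $(x+\kappa_1)\sum_j\eta_j x^j\phi_\kappa$, integrated over $x\in(-\kappa_1,\infty)$. This immediately gives the form \eqref{EQ:ApproximationSwaption}, with
\begin{align*}
\xi_j:=\int_{-\kappa_1}^{+\infty}x^j\,\frac{e^{-x^2/(2\kappa_2)}}{\sqrt{2\pi\kappa_2}}\,dx,\qquad j=0,\ldots,4.
\end{align*}

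It remains to show that the $\xi_j$ are explicit and depend only on the first three cumulants; in fact they depend only on $\kappa_1$ and $\kappa_2$. Let $N$ be the standard normal cdf and $n$ its density, and set $d=\kappa_1/\sqrt{\kappa_2}$. Then $\xi_0=N(d)$ and $\xi_1=\sqrt{\kappa_2}\,n(d)$. Integrating by parts gives the recursion $\xi_j=(-\kappa_1)^{j-1}\kappa_2\,\phi_0(\kappa_1)+(j-1)\kappa_2\,\xi_{j-2}$, where $\phi_0$ is the centred Gaussian density with variance $\kappa_2$. This recursion yields $\xi_2,\xi_3,\xi_4$ in closed form.

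The step I expect to be the main obstacle is only bookkeeping: matching the truncated Hermite expansion to the $\eta_j$ convention used by \citet{CollinDufresneGoldstein2002}. I would first try the plain Gram--Charlier truncation above and verify the moment conditions. Since the statement only asserts an approximation ``$\sim$'', no error bound is required. Positivity of $\kappa_2$ should be noted, which holds when $a_4\neq 0$, since $\kappa_2=2\beta\,\textup{tr}[(\varsigma_Ta_4)^2]+4\,\textup{tr}[\vartheta_T^\top(\varsigma_Ta_4)^2]$ is the variance of $Y_T$.
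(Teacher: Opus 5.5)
Your proposal is correct and follows the paper's own proof. The paper substitutes the same Gram--Charlier density, with $\eta_0=1$, $\eta_1=-\kappa_3/(2\kappa_2^2)$, $\eta_2=0$, $\eta_3=\kappa_3/(6\kappa_2^3)$, shifts by $\kappa_1$ to split the integral, and defines $\xi_j$ as the same truncated Gaussian moments; your integration-by-parts recursion reproduces the paper's explicit values $\xi_2=\kappa_2N(d)-\kappa_1\xi_1$, $\xi_3=\xi_1(\kappa_1^2+2\kappa_2)$ and $\xi_4=3\kappa_2^2N(d)-\xi_1(\kappa_1^3+3\kappa_2\kappa_1)$, where $d=\kappa_1/\sqrt{\kappa_2}$.
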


The above proposition approximates the density of $Y_T$, the variable involved in the guaranteed joint survival annuity option, by a perturbation of the Gaussian distribution. Suppose that the eigenvalues of $a_4(T,T_N) + a_4(T,T_N)^\top$ are non-negative, $a_4(T,T_N) + a_4(T,T_N)^\top \in \mathbb{S}_n^{+}$ and hence $\mathrm{tr}[a_4(T,T_N)v_T]>0$. In this case, the support of the density of $Y_T$ is on the half line and using a Gaussian approximation might not be the most natural choice.\\

In the Bru case, the Wishart process has a marginal distribution that follows a non-central matrix chi-squared distribution or non-central Wishart distribution. Further can be said when that distribution is projected on a rank one matrix, it appears in \citet[Proposition 3.1]{LetacMassam2008} and reads as follows.
\begin{proposition}\label{Prop:MGFChiSquared}
Suppose that $\omega$ in  \eqref{EQ:Wishart} is such that $\omega=\beta \sigma^2$ with $\beta \in \mathbb{R}$ and $\beta \geq n+1$ and given a vector $\gamma\in \mathbb{R}^n$ such that $\|\gamma\|=1$ then the (scalar) variable $\gamma^\top v_T \gamma/(\varsigma_T)_{11}$ admits a non-central Chi-squared distribution with degrees of freedom $\beta$ and non-centrality parameter $(\vartheta_T^\top)_{11}$.
\end{proposition}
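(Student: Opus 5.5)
The plan is to identify the law of the scalar $X:=\gamma^\top v_T\gamma=\textup{tr}[\gamma\gamma^\top v_T]$ through its moment generating function. By Proposition~\ref{Prop:MGFBru} with $\theta_1=s\gamma\gamma^\top\in\mathbb{S}_n$ and $s$ real and small enough that $I_n-2s\varsigma_T\gamma\gamma^\top$ is invertible, we have
\begin{align*}
\mathbb{E}[e^{sX}]=\frac{\textup{etr}\left(\frac{\vartheta_T^\top}{2}(2s\varsigma_T\gamma\gamma^\top)(I_n-2s\varsigma_T\gamma\gamma^\top)^{-1}\right)}{\det(I_n-2s\varsigma_T\gamma\gamma^\top)^{\beta/2}}.
\end{align*}
Since the perturbation $2s\varsigma_T\gamma\gamma^\top$ has rank one, both the determinant and the inverse can be computed in closed form. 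It then remains to match the result with the moment generating function of a scaled non-central chi-squared variable, $\mathbb{E}[e^{sZ}]=(1-2s)^{-\beta/2}\exp\left(\lambda s/(1-2s)\right)$.

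\textbf{Determinant.} Write $c:=\gamma^\top\varsigma_T\gamma>0$, which is positive because $\varsigma_T\in\mathbb{S}_n^{++}$ as an integral of positive definite matrices. The matrix determinant lemma gives $\det(I_n-2s\varsigma_T\gamma\gamma^\top)=1-2sc$.

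\textbf{Exponent.} The Sherman--Morrison formula gives
\begin{align*}
(I_n-2s\varsigma_T\gamma\gamma^\top)^{-1}=I_n+\frac{2s\varsigma_T\gamma\gamma^\top}{1-2sc},
\end{align*}
and therefore $\varsigma_T\gamma\gamma^\top(I_n-2s\varsigma_T\gamma\gamma^\top)^{-1}=\varsigma_T\gamma\gamma^\top/(1-2sc)$. The exponent becomes $s\,\gamma^\top\vartheta_T^\top\varsigma_T\gamma/(1-2sc)$. With $M_T:=e^{mT}v_0e^{m^\top T}$, which is symmetric, we have $\vartheta_T^\top\varsigma_T=M_T\varsigma_T^{-1}\varsigma_T=M_T$. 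Hence
\begin{align*}
\mathbb{E}[e^{sX}]=(1-2sc)^{-\beta/2}\exp\left(\frac{s\,\gamma^\top M_T\gamma}{1-2sc}\right).
\end{align*}

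\textbf{Matching.} Replacing $s$ by $s/c$ gives the moment generating function of $X/c$, namely $(1-2s)^{-\beta/2}\exp(\lambda s/(1-2s))$ with $\lambda=\gamma^\top M_T\gamma/c$. This is exactly the non-central chi-squared moment generating function with $\beta$ degrees of freedom and non-centrality $\lambda$. Both functions are finite on an open interval around $0$, so equality of the moment generating functions there identifies the law.

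\textbf{Notation (main obstacle).} The step I expect to need the most care is reconciling this with the indices used in the statement, $(\varsigma_T)_{11}$ and $(\vartheta_T^\top)_{11}$. Following \citet{LetacMassam2008}, I would pick an orthogonal matrix $O$ with $O\gamma=e_1$. Under this change of basis $\tilde v_T=Ov_TO^\top$ is again a Wishart process in the Bru case, with transformed parameters $OmO^\top$, $O\sigma^2O^\top$ and $Ov_0O^\top$. In the rotated coordinates $c=(\varsigma_T)_{11}$ and the non-centrality is $(M_T)_{11}/(\varsigma_T)_{11}$. I would state the result in this invariant form and check that it agrees with the paper's $(\vartheta_T^\top)_{11}$ under its conventions. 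In particular, the normalisation of $\vartheta_T$ in the paper must be consistent with this ratio, which is the point to verify against \citet[Proposition 3.1]{LetacMassam2008}.
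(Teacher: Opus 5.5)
Your computation is correct and follows essentially the same route as the paper. The paper also substitutes $\theta_1 = z\gamma\gamma^\top$ into \eqref{EQ:mgfBru}. Instead of the matrix determinant lemma and Sherman--Morrison, it passes to an orthonormal basis in which $\gamma\gamma^\top = e_{11}$, so its $(\varsigma_T)_{11}$ is your $c=\gamma^\top\varsigma_T\gamma$. Your basis-free version is, if anything, cleaner.

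The reconciliation you leave open cannot be carried out, and the error is on the paper's side. Let $M_T = e^{mT}v_0e^{m^\top T}$.
\begin{itemize}
\item In the rotated basis, $(\vartheta_T^\top)_{11} = \gamma^\top M_T\varsigma_T^{-1}\gamma$. This equals your $\lambda = \gamma^\top M_T\gamma/\gamma^\top\varsigma_T\gamma$ only in special cases, e.g.\ when $\gamma$ is an eigenvector of $\varsigma_T$.
\item The paper's proof reaches $(\vartheta_T^\top)_{11}$ by evaluating $\textup{tr}[\vartheta_T^\top\varsigma_T e_{11}]$ as $(\vartheta_T^\top)_{11}(\varsigma_T)_{11}$. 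That trace is in fact $(\vartheta_T^\top\varsigma_T)_{11} = (M_T)_{11}$, which is exactly what your Sherman--Morrison step gives.
\item The first moment confirms your value. The linear drift gives $\mathbb{E}[v_T] = M_T + \beta\varsigma_T$, so $\mathbb{E}[\gamma^\top v_T\gamma]/c = \beta + \gamma^\top M_T\gamma/c$. A non-central chi-squared variable has mean equal to its degrees of freedom plus its non-centrality, which forces the non-centrality to be $\lambda$.
\item Concretely, take $\gamma = e_1$, $\varsigma_T = \begin{pmatrix}1&\rho\\ \rho&1\end{pmatrix}$ and $M_T = \textup{diag}(1,\epsilon)$. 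This is attainable with $m=-I_2$, $\sigma^2$ proportional to $\varsigma_T$ and $v_0$ diagonal. Then $(\vartheta_T^\top)_{11} = 1/(1-\rho^2)$, whereas the true non-centrality is $1$.
\end{itemize}
So do not try to force agreement. Your argument proves the proposition with non-centrality parameter $\gamma^\top M_T\gamma/\gamma^\top\varsigma_T\gamma$, i.e.\ $(M_T)_{11}/(\varsigma_T)_{11}$ in the rotated basis. The version with $(\vartheta_T^\top)_{11}$ holds only under an extra condition, such as $\gamma$ being an eigenvector of $\varsigma_T$.
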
 

The previous proposition gives us an approximation to the characteristic function of the Wishart process for a given $T$ when it is projected onto a rank-one matrix. In fact, in order to apply this proposition, one needs to specify a vector along which the state variable (\textit{i.e.}, the Wishart process) is projected. The following proposition utilises Proposition~\ref{Prop:MGFChiSquared} to obtain an approximation by projecting the state variable in the direction of the eigenvectors associated with the spectral decomposition of $\left(a_4(T,T_N) + a_4(T,T_N)^{\top}\right)/2$ and then summing the distributions obtained from the previous proposition under the assumption of independence. This assumption is inconsistent with the state variable distribution, but can still be used to obtain a numerical approximation as detailed below.

\begin{proposition}\label{Prop:ProjectionOnEig}
Let $Y_T = b_4(T,T_N)+\textup{tr}[a_4(T,T_N)v_T]$ of Proposition~\ref{GAOpricing} involved in the pricing of the guaranteed annuity option through~\eqref{EQ:MGFz} and~\eqref{EQ:GAOFft}. Denote the spectral decomposition of 
\begin{align}
\frac{a_4+a_4^\top }{2} = \sum_{i=1}^n \lambda_i \gamma_i\gamma_i^\top, \label{EQ:SpectralA4}
\end{align}  
with $(\gamma_i)_{i=1,\ldots, n}$ some orthonormal vectors and $(\lambda_i)_{i=1,\ldots, n}$ the corresponding (real) eigenvalues (we drop the dependency of $a_4$ and $b_4$ on $(T,T_N)$). Rewrite $Y_T$ as
\begin{align}
Y_T = b_4+\sum_{i=1}^n \lambda_i  \gamma_i^\top v_T \gamma_i,
\end{align}
then according to Proposition \ref{Prop:MGFChiSquared} for each $i=1,\ldots, n$ $\gamma_i^\top v_T \gamma_i/(\varsigma_T)_{ii}$ has a non-central Chi-squared distribution with degrees of freedom $\beta$ and non-centrality parameter $(\vartheta_T^\top)_{ii}$. Then  we approximate $Y_T$ with
\begin{align}
\tilde{Y}_T = b_4+\sum_{i=1}^n \lambda_i (\varsigma_T)_{ii} \chi_i^2, \label{eq:tildeY_T}
\end{align}
where $\left(\chi_i^2(\beta,(\vartheta_T^\top)_{ii})\right)_{i=1,\ldots, n}$ is a set of independent non-central Chi-squared random variables (where $\beta$ is the degrees of freedom and each variable has a specific non-centrality parameter $(\vartheta_T^\top)_{ii}$). In this case, $\mathbb{E}_t[(Y_T)_+]\sim \mathbb{E}_t[(\tilde{Y}_T)_+]$ and this latter expectation can be computed using \eqref{EQ:GAOFft} with the characteristic function of the variable $\tilde{Y}_T $ that depends (linearly) on a generalized Chi-squared distribution.
\end{proposition}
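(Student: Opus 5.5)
The statement is really a construction followed by a computational claim, so the argument has three parts. First, the rewriting of $Y_T$. Second, the marginal law of each projection, obtained from Proposition~\ref{Prop:MGFChiSquared}. Third, the characteristic function of $\tilde{Y}_T$, which feeds into \eqref{EQ:GAOFft}. The approximation sign $\sim$ is a modelling choice (replacing the dependent projections by independent ones), so nothing needs to be proved about its accuracy. What must be shown is that each ingredient is well defined and that the resulting expectation is computable in closed form up to one integral.

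For the rewriting, use that $v_T$ is symmetric, so $\textup{tr}[a_4v_T]=\textup{tr}[a_4^\top v_T]=\textup{tr}[\tfrac{a_4+a_4^\top}{2}v_T]$. Since $(a_4+a_4^\top)/2\in\mathbb{S}_n$, it has a real spectral decomposition \eqref{EQ:SpectralA4} with an orthonormal basis $(\gamma_i)$. Inserting it and using $\textup{tr}[\gamma_i\gamma_i^\top v_T]=\gamma_i^\top v_T\gamma_i$ gives $Y_T=b_4+\sum_i\lambda_i\gamma_i^\top v_T\gamma_i$.

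Next, apply Proposition~\ref{Prop:MGFChiSquared} with $\gamma=\gamma_i$. Here I would verify the claim directly from \eqref{EQ:mgfBru} rather than rely on the indexing. Take $\theta_1=\mathrm{i}s\gamma_i\gamma_i^\top$, which has rank one, and use $\det(I_n-2\varsigma_T\theta_1)=1-2\mathrm{i}s\,\gamma_i^\top\varsigma_T\gamma_i$. The Sherman--Morrison formula then gives $(2\varsigma_T\theta_1)(I_n-2\varsigma_T\theta_1)^{-1}=2\mathrm{i}s\,\varsigma_T\gamma_i\gamma_i^\top/(1-2\mathrm{i}s\,\gamma_i^\top\varsigma_T\gamma_i)$. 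The result is the characteristic function of a scaled non-central $\chi^2$ with $\beta$ degrees of freedom, scale $\gamma_i^\top\varsigma_T\gamma_i$, and non-centrality $\gamma_i^\top\varsigma_T\vartheta_T\gamma_i/\gamma_i^\top\varsigma_T\gamma_i$. The scale equals $(\varsigma_T)_{ii}$ and the non-centrality reduces to $(\vartheta_T^\top)_{ii}$ when the quantities are expressed in the rotated basis $\Gamma^\top\cdot\,\Gamma$ with $\Gamma=(\gamma_1,\dots,\gamma_n)$. So I will interpret the indices in the statement as entries of $\Gamma^\top\varsigma_T\Gamma$ and $\Gamma^\top\vartheta_T^\top\Gamma$, noting that this is the content of Proposition~\ref{Prop:MGFChiSquared}.

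Finally, define $\tilde Y_T$ by \eqref{eq:tildeY_T} with independent $\chi_i^2$. Independence factorizes the characteristic function:
\begin{align*}
\mathbb{E}_t\big[e^{\mathrm{i}z\tilde Y_T}\big]=e^{\mathrm{i}zb_4}\prod_{i=1}^n\frac{\exp\!\big(\mathrm{i}z\lambda_i(\varsigma_T)_{ii}(\vartheta_T^\top)_{ii}/(1-2\mathrm{i}z\lambda_i(\varsigma_T)_{ii})\big)}{(1-2\mathrm{i}z\lambda_i(\varsigma_T)_{ii})^{\beta/2}},
\end{align*}
which is the characteristic function of a generalized (weighted non-central) chi-squared variable shifted by $b_4$. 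To use \eqref{EQ:GAOFft}, which requires complex $z+\mathrm{i}z_i$ with $z_i<0$, I must check analyticity of this product on a strip. The function is analytic as long as $1-2\mathrm{i}(z+\mathrm{i}z_i)\lambda_i(\varsigma_T)_{ii}\neq0$, i.e.\ as long as $1+2z_i\lambda_i(\varsigma_T)_{ii}>0$ for all $i$. This holds for $|z_i|$ small because $(\varsigma_T)_{ii}>0$, since $\varsigma_T\in\mathbb{S}_n^{++}$. This strip condition is the main obstacle, and the plan is to choose $z_i<0$ with $|z_i|<\min_{i:\lambda_i>0}1/(2\lambda_i(\varsigma_T)_{ii})$. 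The standard damped-payoff Fourier argument behind \eqref{EQ:GAOFft} then applies to $\tilde Y_T$ verbatim, which yields $\mathbb{E}_t[(\tilde Y_T)_+]$ as a one-dimensional integral.
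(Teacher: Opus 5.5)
Your route is the paper's: symmetrize $a_4$, diagonalize, take the law of each $\gamma_i^\top v_T\gamma_i$ from Proposition~\ref{Prop:MGFChiSquared}, replace the dependent projections by independent ones, and invert the product characteristic function with \eqref{EQ:GAOFft}. The symmetrization, the Sherman--Morrison computation and the strip condition are all correct; the paper does not discuss the strip condition at all.

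The step that fails is the sentence reconciling your computation with the statement. You correctly obtain the non-centrality
\[
\nu_i=\frac{\gamma_i^\top\varsigma_T\vartheta_T\gamma_i}{\gamma_i^\top\varsigma_T\gamma_i}=\frac{\gamma_i^\top e^{mT}v_0e^{m^\top T}\gamma_i}{\gamma_i^\top\varsigma_T\gamma_i}.
\]
However, the $(i,i)$ entry of $\vartheta_T^\top$ in the rotated basis is $\gamma_i^\top\vartheta_T\gamma_i=\gamma_i^\top\varsigma_T^{-1}e^{mT}v_0e^{m^\top T}\gamma_i$. This depends only on $\gamma_i$, so no basis or indexing convention reconciles the two quantities. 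They coincide when, for example, $\gamma_i$ is an eigenvector of $\varsigma_T$, but not in general.

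Take $\gamma_i=e_1$, $\varsigma_T=\begin{pmatrix}1&\rho\\\rho&1\end{pmatrix}$ and $e^{mT}v_0e^{m^\top T}=\textup{diag}(1,\epsilon)$. Then $\nu_i=1$ but $(\vartheta_T^\top)_{11}=1/(1-\rho^2)$. This configuration is reachable in the model. A diagonal $m$ with distinct entries, together with $u_1=e_{11}$ and $u_2=e_{22}$, makes $a_4$ diagonal with distinct eigenvalues, which forces $\Gamma=I_2$. Meanwhile $\sigma_{12}\neq0$ makes $\varsigma_T$ non-diagonal.

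The paper's proof has the same defect, inherited from the proof of Proposition~\ref{Prop:MGFChiSquared}. There the exponent $z\,\textup{tr}[\vartheta_T^\top\varsigma_Te_{11}]/(1-2z(\varsigma_T)_{11})=z(\vartheta_T^\top\varsigma_T)_{11}/(1-2z(\varsigma_T)_{11})$ is written with $(\vartheta_T^\top)_{11}(\varsigma_T)_{11}$ in the numerator. That holds, for instance, if the first column of $\varsigma_T$ in that basis is $(\varsigma_T)_{11}e_1$, but not in general.

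So the statement as written cannot be proved in general, and you should not assert the identification. What your computation does prove is the corrected claim $\gamma_i^\top v_T\gamma_i/(\gamma_i^\top\varsigma_T\gamma_i)\sim\chi^2(\beta,\nu_i)$. Build $\tilde Y_T$ from independent $\chi^2(\beta,\nu_i)$ variables, so that its marginals match those of the true projections. Your factorized characteristic function, with $(\vartheta_T^\top)_{ii}$ replaced by $\nu_i$, and your damping choice $|z_i|<\min_{k:\lambda_k>0}1/(2\lambda_k\gamma_k^\top\varsigma_T\gamma_k)$ then go through unchanged.

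In the paper's numerical example, $m=-I_2$ and $u_0=I_2$ make $a_4$ a multiple of $I_2$. There $\Gamma$ may be chosen to diagonalize $\varsigma_T$, and with that choice the stated parameters are correct.
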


Note that if one eigenvalue in \eqref{EQ:SpectralA4} dominates all the others, then the above proposition can be simplified, since the computation of the guaranteed joint survival annuity option will then only require the integration of the non-central Chi-squared distribution, instead of requiring a Fourier transform.

\begin{corollary}\label{Prop:ProjectionOnOneEig}
Let $Y_T = b_4(T,T_N)+\textup{tr}[a_4(T,T_N)v_T]$ of Proposition~\ref{GAOpricing} involved in the pricing of the guaranteed annuity option through~\eqref{EQ:MGFz} and~\eqref{EQ:GAOFft}. Assume the spectral decomposition \eqref{EQ:SpectralA4} and suppose that we have $\lambda_1\geq \lambda_2 \geq \ldots \geq 0 \geq \ldots \geq \lambda_n$ and $\lambda_1\gg  |\lambda_j|,\, j = 2,\ldots n$ then approximate $Y_T$ with  
\begin{align}
\tilde{Y}_T = b_4+\lambda_1 \gamma_1^\top v_T \gamma_1\,, \label{EQ:tildeY}
\end{align}
and according to Proposition~\ref{Prop:MGFChiSquared} the scalar variable $\gamma_1^\top v_T \gamma_1/(\varsigma_T)_{11}$ follows a non-central Chi-squared distribution with degrees of freedom $\beta$ and non-centrality parameter $(\vartheta_T^\top)_{11}$.
%\begin{align}
%\mathbb{E}[(Y_T)_+]\sim \int_{z_*}^{+\infty}(b_4 +\lambda_1 (\varsigma_T)_{11} z)\frac{1}{2}e^{-\frac{1}{2}(\sqrt{z}-\sqrt{(\vartheta_T^\top)_{11}})^2}\left(\frac{z}{(\vartheta_T^\top)_{11}}\right)^{\frac{\beta}{4} -\frac{1}{2}} \bar{I}_{\beta/2 -1}\left(\sqrt{z(\vartheta_T^\top)_{11}}\right) dz\,.
%\end{align}
%where $z_* =-b_4/((\varsigma_T)_{11}\lambda_1)$.\\

If $\lambda_1\geq \lambda_2 \geq \ldots \geq 0 \geq \ldots \geq \lambda_n$ and $|\lambda_n| \gg |\lambda_j|,\, j = 1,\ldots n-1$ then approximate $Y_T$ with
\begin{align}
\tilde{Y}_T = b_4 + \lambda_n \gamma_n^\top v_T \gamma_n\,, \label{EQ:tildeY2}
\end{align}
and according to Proposition~\ref{Prop:MGFChiSquared} the scalar variable  $\gamma_n^\top v_T \gamma_n/(\varsigma_T)_{11}$ follows a non-central Chi-squared distribution distribution with degrees of freedom $\beta$ and non-centrality parameter $(\vartheta_T^\top)_{11}$.
%
%then 
%\begin{align}
%\mathbb{E}[(Y_T)_+]\sim \int_0^{z^*}(b_4 +\lambda_n (\varsigma_T)_{11} z)\frac{1}{2}e^{-\frac{1}{2}(\sqrt{z}-\sqrt{(\vartheta_T^\top)_{11}})^2}\left(\frac{z}{(\vartheta_T^\top)_{11}}\right)^{\frac{\beta}{4} -\frac{1}{2}} \bar{I}_{\beta/2 -1}\left(\sqrt{z(\vartheta_T^\top)_{11}}\right) dz\,,
%\end{align}
%where $z^* =-b_4/((\varsigma_T)_{11}\lambda_n)$ and $\bar{I}_{k}\left(z\right)$ the exponentially scaled modified Bessel function of the first kind.\footnote{The use of the  exponentially scaled modified Bessel function of the first kind is recommended as it leads to a more robust numerical integration.} 
\end{corollary}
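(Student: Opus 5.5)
The plan is to reduce Corollary~\ref{Prop:ProjectionOnOneEig} to the spectral rewriting already used in Proposition~\ref{Prop:ProjectionOnEig}, keep a single term, and then invoke Proposition~\ref{Prop:MGFChiSquared} for that one direction.

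First, since $v_T$ is symmetric, we have $\textup{tr}[a_4 v_T]=\textup{tr}[\tfrac{a_4+a_4^\top}{2}v_T]$. Inserting the decomposition \eqref{EQ:SpectralA4} gives the exact identity
\[
Y_T=b_4+\sum_{i=1}^n\lambda_i\gamma_i^\top v_T\gamma_i .
\]
Each quadratic form $\gamma_i^\top v_T\gamma_i$ is nonnegative because $v_T\in\mathbb{S}_n^{++}$, and it has finite moments (for instance from Lemma~\ref{LEM:TraceMean} applied with $u_0=\gamma_i\gamma_i^\top$). Hence, in the regime $\lambda_1\gg|\lambda_j|$, the discarded part $\sum_{j\geq2}\lambda_j\gamma_j^\top v_T\gamma_j$ is small relative to the retained term $\lambda_1\gamma_1^\top v_T\gamma_1$, at least in $L^1$ given comparable scales of the quadratic forms. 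Since $z\mapsto z_+$ is $1$-Lipschitz,
\[
\bigl|\mathbb{E}_t[(Y_T)_+]-\mathbb{E}_t[(\tilde Y_T)_+]\bigr|\le\sum_{j\ge2}|\lambda_j|\,\mathbb{E}_t[\gamma_j^\top v_T\gamma_j],
\]
and this justifies the approximation \eqref{EQ:tildeY} in the sense of Proposition~\ref{Prop:ProjectionOnEig}. The case where $|\lambda_n|$ dominates is handled identically by keeping the $i=n$ term, which gives \eqref{EQ:tildeY2}.

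Second, for the distributional claim, apply Proposition~\ref{Prop:MGFChiSquared} with $\gamma=\gamma_1$, which has unit norm by orthonormality. The scaling and non-centrality constants must be those attached to $\gamma_1$, namely $\gamma_1^\top\varsigma_T\gamma_1$ and the corresponding entry of $\vartheta_T^\top$. I would verify this directly from \eqref{EQ:mgfBru} by taking $\theta_1=s\gamma_1\gamma_1^\top$ and using the rank-one structure. By the determinant lemma, $\det(I_n-2s\varsigma_T\gamma_1\gamma_1^\top)=1-2s\,\gamma_1^\top\varsigma_T\gamma_1$. By Sherman--Morrison, $(I_n-2s\varsigma_T\gamma_1\gamma_1^\top)^{-1}$ is explicit, so the exponent reduces to $\frac{s\,c}{1-2s\,\gamma_1^\top\varsigma_T\gamma_1}$ with $c=\gamma_1^\top e^{mT}v_0e^{m^\top T}\gamma_1$. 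This is exactly the moment generating function of $(\gamma_1^\top\varsigma_T\gamma_1)\chi^2(\beta,\cdot)$, with $\beta$ degrees of freedom.

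Reading the statement's $(\varsigma_T)_{11}$ and $(\vartheta_T^\top)_{11}$ therefore requires working in the rotated basis in which $\gamma_1$ is the first basis vector, as Proposition~\ref{Prop:MGFChiSquared} implicitly does; likewise for $\gamma_n$. I expect this identification of the parameters to be the main obstacle, and I would settle it with the explicit rank-one mgf computation above. The approximation step itself is only heuristic, controlled by the Lipschitz bound.
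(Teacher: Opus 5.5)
Your route is essentially the paper's. The symmetrisation $\tr[a_4v_T]=\tr[\tfrac{a_4+a_4^\top}{2}v_T]$ followed by the spectral rewriting is the proof of Proposition~\ref{Prop:ProjectionOnEig}. Your determinant-lemma/Sherman--Morrison computation is a basis-free version of the rotation used in the proof of Proposition~\ref{Prop:MGFChiSquared}. The Lipschitz bound on $|\mathbb{E}_t[(Y_T)_+]-\mathbb{E}_t[(\tilde Y_T)_+]|$ is a genuine addition, since the paper says it has no control of the error. It is, however, only an absolute bound and says nothing about relative accuracy when the option is far out of the money.

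The gap is your last step, where you say the rank-one computation will settle the identification with $(\varsigma_T)_{11}$ and $(\vartheta_T^\top)_{11}$ in the rotated basis. Your computation is correct, but it proves something different from the statement. Write $M_T:=e^{mT}v_0e^{m^\top T}=\vartheta_T^\top\varsigma_T$. Your computation shows that $\gamma_1^\top v_T\gamma_1/(\gamma_1^\top\varsigma_T\gamma_1)$ is non-central $\chi^2$ with $\beta$ degrees of freedom and non-centrality $\gamma_1^\top M_T\gamma_1/(\gamma_1^\top\varsigma_T\gamma_1)$. In the rotated basis this is $(\vartheta_T^\top\varsigma_T)_{11}/(\varsigma_T)_{11}$.

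The scale does match $(\varsigma_T)_{11}$, but the non-centrality does not match. Since $\vartheta_T=\varsigma_T^{-1}M_T$ transforms by conjugation, the rotated $(\vartheta_T^\top)_{11}$ equals $\gamma_1^\top M_T\varsigma_T^{-1}\gamma_1$. Moreover $(\vartheta_T^\top\varsigma_T)_{11}=(\vartheta_T^\top)_{11}(\varsigma_T)_{11}+\sum_{k\neq1}(\vartheta_T^\top)_{1k}(\varsigma_T)_{k1}$. The cross terms vanish when, for example, $\gamma_1$ is an eigenvector of $\varsigma_T$, but not in general. Take $n=2$, $\varsigma_T$ with unit diagonal and off-diagonal $\rho$, $M_T=\textup{diag}(1,\epsilon)$ and $\gamma_1=e_1$. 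The true non-centrality is $1$, whereas $(\vartheta_T^\top)_{11}=1/(1-\rho^2)$.

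The paper's proof of Proposition~\ref{Prop:MGFChiSquared} makes exactly this slip. It writes the exponent as $(\vartheta_T^\top)_{11}z(\varsigma_T)_{11}/(1-2z(\varsigma_T)_{11})$ instead of $z(\vartheta_T^\top\varsigma_T)_{11}/(1-2z(\varsigma_T)_{11})$, and the corollary inherits the error. So you cannot close the argument as planned. What your computation actually establishes is a corrected statement: scale $\gamma_1^\top\varsigma_T\gamma_1$ and non-centrality $\gamma_1^\top M_T\gamma_1/\gamma_1^\top\varsigma_T\gamma_1$. The stated $(\vartheta_T^\top)_{11}$ is valid only under an extra condition, such as $\gamma_1$ being an eigenvector of $\varsigma_T$. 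The same applies to the $\gamma_n$ case, where in addition the subscript $11$ must be read in a basis having $\gamma_n$ as its first vector.
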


%%%%%%%%%%%%%%%%%%%%%%%%%%%%%%%%%%%%%%%%%%%%%%%%%
%
% Gamma approximation
%
%%%%%%%%%%%%%%%%%%%%%%%%%%%%%%%%%%%%%%%%%%%%%%%%%
In the equation \eqref{EQ:ApproximationDensity} we approximate the density of $Y_T$ using a perturbation of the Gaussian distribution, but the marginal distribution of the Wishart process suggests that it is more natural to approximate the density with a perturbation of the (scalar) gamma distribution. Indeed, if $a_4(T,T_N)$ in Proposition \ref{Prop:CumulantsExpansion} has positive eigenvalues, then necessarily $b_4<0$ and the support of the density of $Y_T$ is in $[k,\; +\infty)$ with $k<0$ and if $a_4(T,T_N)$  has negative eigenvalues, then necessarily $b_4>0$ and the support of the density of $Y_T$ is in $(-\infty,\; k]$ for $k>0$. As such, the Gaussian distribution, whose support is in $\mathbb{R}$, might not be the most convenient distribution to approximate the density of $Y_T$. Further to this, taking the limit as \(t \to \infty\) in \eqref{EQ:mgfBru} shows that the asymptotic distribution of the Wishart process has a moment generating function of the form \(\frac{1}{\det(I_n - 2\varsigma_t\theta_1)^{\beta/2}}\), which corresponds to a matrix gamma distribution. \citet{FilipovicMayerhoferSchneider2013} show how to approximate the density of a random variable using a perturbation of the (scalar) gamma distribution, instead of the Gaussian distribution in \eqref{EQ:ApproximationDensity}, if the cumulants of the variable are available. As such, Proposition \ref{Prop:CumulantsExpansion} allows us to use their results as the following proposition shows. 
\begin{proposition}\label{GAOPricingApproximationGamma}
Let $Y_T = b_4(T,T_N) + \textup{tr}[a_4(T,T_N)v_T] $ with $ b_4(T,T_N)$ and $ a_4(T,T_N)$ as in ~\eqref{EQ:GAOpricingFormula}, and denote the first three cumulants of $Y_T$, given Proposition \ref{Prop:CumulantsExpansion}, to be $(\kappa_1,\kappa_2,\kappa_3)$. If the eigenvalues of $ a_4(T,T_N)$ are all positive, then we can obtain the following approximation 
\begin{align}
\mathbb{E}_t\left[   \left(  Y_{T}\right)_{+}\right]&= \sum_{j=0}^{3}\sum_{i=0}^{j}  \frac{c_j\gamma_{j,i}}{\bar{\beta}} ( \bar{\alpha}+1)_{i+1} Q(\bar{\alpha}+i+2,-\bar{\beta} k)  \nonumber \\
&+ \sum_{j=0}^{3}\sum_{i=0}^{j}  c_j\gamma_{j,i}k ( \bar{\alpha}+1)_{i} Q(\bar{\alpha}+i+1,-\bar{\beta} k)\,, \label{EQ:GammaApproximationJumpPositive}
\end{align}
with $\lbrace c_j; j =0,\ldots, 3 \rbrace$, $\lbrace  \gamma_{j,i}; j,i =0,\ldots, 3 \rbrace$, $\bar{\alpha}$ and $\bar{\beta}$ some parameters that depend on $k$ and the first three moments of $Y_{T}$, while $Q(s,x)$ stands for the upper regularised incomplete gamma function while $(x)_n$ is the Pochhammer function.\\

If the eigenvalues of $ a_4(T,T_N)$ are all negative, then we can obtain the following approximation 
\begin{align}
\mathbb{E}_t\left[   \left(  Y_{T}\right)_{+}\right]&= -\sum_{j=0}^{3}\sum_{i=0}^{j}  \frac{c_j\gamma_{j,i}}{\bar{\beta}} ( \bar{\alpha}+1)_{i+1} P(\bar{\alpha}+i+2,-\bar{\beta} k)  \nonumber\\
&- \sum_{j=0}^{3}\sum_{i=0}^{j}  c_j\gamma_{j,i}k ( \bar{\alpha}+1)_{i} P(\bar{\alpha}+i+1,-\bar{\beta} k)\,, \label{EQ:GammaApproximationJumpPositive}
\end{align}
with $\lbrace c_j; j =0,\ldots, 3 \rbrace$, $\lbrace  \gamma_{j,i}; j,i =0,\ldots, 3 \rbrace$, $\bar{\alpha}$ and $\bar{\beta}$ some parameters that depend on $k$ and the first three moments of $Y_{T}$, while $P(s,x)$ stands for the lower regularised incomplete gamma function while $(x)_n$ is the Pochhammer function.
\end{proposition}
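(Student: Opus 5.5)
The plan is to follow the \citet{FilipovicMayerhoferSchneider2013} construction: approximate the density of $Y_T$ by a gamma density times a polynomial. The expectation $\mathbb{E}_t[(Y_T)_+]$ then reduces to a finite sum of incomplete gamma integrals.

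\emph{Positive-eigenvalue case: support and shift.} Assume the eigenvalues of $a_4$ are positive. Then $\textup{tr}[a_4 v_T]>0$ because $v_T\in\mathbb{S}_n^{++}$, so $Y_T$ is supported on $[k,+\infty)$ with $k=b_4<0$. The shifted variable $X=Y_T-k\geq 0$ has cumulants $\kappa_1-k,\kappa_2,\kappa_3$, taken from Proposition~\ref{Prop:CumulantsExpansion}. Its first three moments follow from \eqref{eq:MomentsFromCumulants}.

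\emph{Density approximation.} Take as reference density the gamma density $w(x)=\bar\beta^{\bar\alpha+1}x^{\bar\alpha}e^{-\bar\beta x}/\Gamma(\bar\alpha+1)$ on $\mathbb{R}_+$. Choose $\bar\alpha,\bar\beta$ by matching the first two moments of $X$:
\[
(\bar\alpha+1)/\bar\beta=\kappa_1-k, \qquad (\bar\alpha+1)/\bar\beta^2=\kappa_2 .
\]
Let $L_j(x)=\sum_{i=0}^j\gamma_{j,i}x^i$ be the polynomials orthonormal in $L^2(w)$, namely rescaled generalized Laguerre polynomials $L_j^{(\bar\alpha)}(\bar\beta x)$. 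The density of $X$ is then approximated by $w(x)\sum_{j=0}^3 c_jL_j(x)$. The coefficients are $c_j=\mathbb{E}[L_j(X)]=\sum_i\gamma_{j,i}\mathbb{E}[X^i]$, computed from the moments of $X$. By the moment matching, $c_0=1$ and $c_1=c_2=0$ up to normalisation, but I will keep all $c_j$ to match the stated form. Strictly, $c_3$ needs the third moment of $X$, which is why the statement refers to the first three moments.

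\emph{Integration.} Write $(Y_T)_+=(X+k)_+$ and substitute the expansion:
\[
\mathbb{E}_t[(Y_T)_+]\approx\sum_{j=0}^3\sum_{i=0}^j c_j\gamma_{j,i}\int_{-k}^{\infty}(x+k)\,x^i w(x)\,dx .
\]
Each integral is evaluated with the identity
\[
\int_{a}^\infty x^{p}w(x)\,dx=\frac{(\bar\alpha+1)_p}{\bar\beta^{p}}\,Q(\bar\alpha+1+p,\bar\beta a),
\]
which follows from $\Gamma(\bar\alpha+1+p)/\Gamma(\bar\alpha+1)=(\bar\alpha+1)_p$ and the definition of the upper regularised incomplete gamma function. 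Using $p=i+1$ and $p=i$ produces the two double sums in \eqref{EQ:GammaApproximationJumpPositive}. The stated form carries $\bar\beta^{-1}$ rather than $\bar\beta^{-(i+1)}$, and the argument of $Q$ is written as $-\bar\beta k$. To match this I will work with the rescaled variable $\bar\beta X$ for the polynomial powers, and absorb the remaining powers of $\bar\beta$ into the $\gamma_{j,i}$. I expect this bookkeeping (sign of $k$, where the scaling of $\bar\beta$ sits) to be the main obstacle. Aligning it exactly with the paper's constants is the part I would check most carefully against \citet{FilipovicMayerhoferSchneider2013}.

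\emph{Negative-eigenvalue case.} If the eigenvalues of $a_4$ are negative, then $b_4>0$ and $Y_T\le k$. Set $X=k-Y_T\ge0$ and apply the same expansion. Now $(Y_T)_+=(k-X)\mathbf 1_{X\le k}$, so the integrals run over $[0,k]$ and give lower regularised functions $P$. With the sign conventions adjusted, this yields the second formula, including its overall minus signs.
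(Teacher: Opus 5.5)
Your proposal is correct and follows essentially the same route as the paper. You shift by $k=b_4$ and approximate the shifted variable by a gamma reference density with the same moment-matched $\bar\alpha,\bar\beta$. You then expand in orthonormal Laguerre polynomials in the variable $\bar\beta x$, which produces the single $\bar\beta^{-1}$ factor and gives $c_0=1$, $c_1=c_2=0$. Finally you integrate termwise into regularised incomplete gamma functions and treat the negative case by reflection, as the paper does.

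One bookkeeping remark on the negative case. With your $k=b_4>0$, the computation gives $\sum_{j,i}c_j\gamma_{j,i}\bigl[k(\bar\alpha+1)_iP(\bar\alpha+i+1,\bar\beta k)-\bar\beta^{-1}(\bar\alpha+1)_{i+1}P(\bar\alpha+i+2,\bar\beta k)\bigr]$. The stated form, with two minus signs and argument $-\bar\beta k$ in $P$, is this same expression rewritten with $k=-b_4$, the lower bound of $-Y_T$. That is what the paper's reflection step $(-(Z+k))_+$ implicitly uses, even though its text sets $k=b_4$.
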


\section{Model implementation}\label{Section:num_experiments}

In this section we present numerical experiments on the linear-rational Wishart mortality model. Our numerical experiments consist of sensitivity analyses of the guaranteed joint survival annuity option price on selected parameters, it also compares the three approximation methods developed in Section~\ref{PricingAnnuityDerivative} to the characteristic function pricing approach, and we also demonstrate the potential consequences of incorrectly not accounting for the dependency between annuitants demonstrated in the guaranteed joint survival annuity option price. For simpler exposition, we will refer to the guaranteed joint survival annuity option price as the option price. \\

The parameters utilised in our numerical experiments are presented in Tables~\ref{tab:model_param_values} and \ref{tab:GSA_param_values}. Table~\ref{tab:model_param_values} encapsulates model-specific parameters. These parameters were judiciously chosen to be consistent with values empirically observed in the extant literature.\myfootnote{We take $r=0$ to analyse the effect of the Wishart parameters on the option price without contaminating influence from the interest rate discounting factor.} Meanwhile Table~\ref{tab:GSA_param_values} contains the foundational values on which our subsequent sensitivity analyses are predicated.\\

\begin{center}
	[ Insert Table~\ref{tab:model_param_values} here ]
\end{center}

\begin{center}
	[ Insert Table~\ref{tab:GSA_param_values} here ]
\end{center}

To elucidate our model parameter selection, let us consider Table~\ref{tab:moments_mux_muy}. This table delineates the moments for $\mu_x(s)$ and $\mu_y(s)$ at various time points $s$. These moments were derived using Corollary~\ref{Prop:MortalityIntensityDensity}. Specifically, when $s=2$, we observe that the anticipated mortality rates for annuitants $x$ and $y$ stand at \( 0.0199 \) and \( 0.0198 \) respectively. In a recent study by \citet{LiLiuTangYuan2023}, the authors utilised U.S. mortality data spanning 2017-2022 and find that the excess mortality rate is approximately 0.0132.\myfootnote{This computation is based on the one-year conditional expectation based on parameters reported in Table 4.1 of \citet{LiLiuTangYuan2023} assuming an initial excess mortality of 0.} Augmenting our analysis, the one-year conditional expected mortality rate in \citet{XuSherrisZiveyi2020b} is found to be 0.0107 and is based on data from the Human Mortality Database.\myfootnote{This calculation stems from the one-year conditional expectation of the one-factor mortality model reported in Table 1 of \citet{XuSherrisZiveyi2020b}.}\\

\begin{center}
	[ Insert Table~\ref{tab:moments_mux_muy} here ]
\end{center}

One compelling advantage of the linear-rational Wishart model is its capability to accommodate explicit dependence between annuitants. This essentially means that correlations in the longevity risks of two individuals can be accurately captured by the model.\myfootnote{An important note to make here is that our model can be seamlessly adapted to address larger groups, be it cohorts, entire populations or even distinct risk factors.} Drawing upon \eqref{EQ:v11v22} we can compute the instantaneous correlation between the normalised intensities $\mu_x^*(s) = (1+\textup{tr}[u_0v_s])\mu_x(s)$ and $\mu_y^*(s)= (1+\textup{tr}[u_0v_s])\mu_y(s)$. For clarity, this computation is explicated below and is derived using Proposition~\ref{Prop:QuadraticVariation} and \eqref{EQ:intensity_covariation}:

\begin{align}
	\text{Corr}(d\mu_x^*(s), d\mu_y^*(s)) &= \frac{\Cov(d\mu_x^*(s), d\mu_y^*(s))}{\sqrt{\Cov(d\mu_x^*(s), d\mu_x^*(s)) \Cov(d\mu_y^*(s), d\mu_y^*(s))}}\nonumber\\
	&= \frac{v_{12,s}(\sigma^2)_{12}}{\sqrt{v_{11,s}v_{22,s}(\sigma^2)_{11}(\sigma^2)_{22}}}ds.\label{EQ:instant_correlation}
\end{align}

Armed with the values provided in Table~\ref{tab:model_param_values}, we can discern an instantaneous correlation between the normalised intensities of 0.40 at the inception \( t\). Due to the choice of our matrix $m$, we can also investigate the average `asymptotic' instantaneous correlation between the two annuitants. By substituting the formula \( \lim_{s\rightarrow \infty} \mathbb{E}[v_{s}] \) from Lemma \ref{LEM:TraceMean} into \eqref{EQ:instant_correlation}, our model posits an average asymptotic correlation between the normalised intensities of 0.65. The values of our correlations are reasonable and well-founded. The work \citet{FreesCarriereValdez1996}, which pegged the correlation between annuitants in last-survivor annuities within a 95\% confidence band of (0.41, 0.56) based on last survivor annuity contracts provided by a Canadian insurer.\\

Lastly, we focus on the parameter \( g \) outlined in Table~\ref{tab:GSA_param_values}. It was carefully selected to make the guaranteed joint survival annuity option nearly `at-the-money' for values of \( T-t=2 \) and \( T_N -t= 5 \), considering the model parameters from Table~\ref{tab:model_param_values}. To determine this parameter, we calculated the expected value of a joint survival annuity spanning five yearly payments, denoted as:
\[
\sum_{i=1}^5 \mathrm{SB}(t, T_i),
\]
where \( T_i-t = i \) for \(i= 1,...,5 \). Using Proposition~\ref{Prop:JointSurvivalAnnuity}, the computed value of the joint survival annuity is 4.467. Consequently, we chose \( g = 1/4.467 \approx 0.225 \), aligning the value of the joint survival annuity starting in 2 years with that of a joint survival annuity beginning immediately.

\subsection{Sensitivity analyses}
In this section, we study the sensitivity of the guaranteed joint survival annuity option price in relation to changes in various model parameters. Figure~\ref{Fig:Sensitivity_graphs} displays the variations in option price as parameters $\alpha$, $\beta$, $\rho_{\sigma}$, $m_{11}$, $\sigma_{11}$, and $T$ are adjusted.\\
\begin{center}
 [ Insert Figure~\ref{Fig:Sensitivity_graphs} here ]
\end{center}
The option price displays a nearly linear increase when parameters $\beta$, $\rho_{\sigma}$, $m_{11}$, and $\sigma_{11}$ are changed. This is consistent with expectations and is attributable to the interplay of mortality intensities as described in \eqref{EQ:mu_x_intensity_simple}-\eqref{EQ:mu_y_intensity_simple}. For instance, an increase in $\beta$ directly increases $\omega$ since our model follows \eqref{EQ:BruOmegaSpecification}, leading to decreased instantaneous mortality intensities. This results in a greater likelihood of option exercise due to increased joint survival probability, thus raising the option price. Similar reasoning applies to the parameters $m_{11}$ and $\sigma_{11}$. Given the relationship $\omega = \beta \sigma^2$, we derive:
\begin{align*}
	\omega_{11} &= \beta \left((\sigma_{11})^2 + (\sigma_{12})^2 \right),\\
	\omega_{22} &= \beta \left((\sigma_{22})^2 + (\sigma_{12})^2 \right).
\end{align*}
Since $\sigma_{12} = \rho_{\sigma}\sqrt{\sigma_{11} \sigma_{22}}$, increasing $\rho_{\sigma}$ reduces the mortality intensity by increasing $\sigma_{12}$, which subsequently increases $\omega_{11}$ and $\omega_{22}$ and hence leads to an increase in the option price.\\

In stark contrast, the relationship between the option price and $\alpha$ is notably non-linear and decreasing. A mere 10\% drop in $\alpha$ from 0.04 to 0.036 results in an almost five-fold surge in the option price, underscoring $\alpha$ as a pivotal parameter in the linear-rational Wishart model. This parameter is from the potential approach of \citet{Rogers1997} which dictates to specify the state-price density first, and then get mortality dynamics that are consistent with the specified state-price density, contrasted to the standard method of specifying mortality intensities first and then derive the state-price density. Since the state-price density is specified first, we are able to enforce positive mortality intensities, a significant advantage of our model and approach. Previous models either struggled with ensuring positive mortality intensities or lacked a versatile correlation structure. Although at first glance, changes in the option price due to $\alpha$ should be linear, as seen in \eqref{EQ:mu_x_intensity_simple}-\eqref{EQ:mu_y_intensity_simple}, it is essential to realise that $\alpha$ influences the option price at multiple points, as evident from \eqref{EQ:GAOpricingFormula}-\eqref{EQ:A4}, amplifying its impact.\\

Lastly, the relationship with $T$ is non-linear and decreasing. This decrease is intuitive: annuitants have a higher mortality risk over extended periods. The non-linearity stems from the fact that the option's expiry date, $T$, affects the pricing equation at various stages, leading to a cumulative effect on the option price.

\subsection{Approximation methods}
Figure~\ref{Fig:Approximation_methods} showcases the effectiveness of the three approximation methods introduced in Proposition~\ref{GAOPricingApproximation}, Proposition~\ref{Prop:ProjectionOnEig} and Proposition~\ref{GAOPricingApproximationGamma} benchmarked against the characteristic function approach across various strike values. Notably, the gamma approximation from Proposition~\ref{GAOPricingApproximationGamma} outperforms the Gaussian approximation of Proposition~\ref{GAOPricingApproximation} and the spectral decomposition approximation outlined in Proposition~\ref{Prop:ProjectionOnEig}. The right panel of Figure~\ref{Fig:Approximation_methods} reveals that the absolute percentage error for the gamma method is consistently lower than the Gaussian and the spectral decomposition approximation errors. Logically, this makes sense. The asymptotic distribution of the Wishart process is a matrix gamma distribution, and thus for the scalar gamma approximation to do well is of no surprise. Interestingly, while the Gaussian approximation tends to over-estimate the option price, the spectral decomposition approximation tends to under-estimate it, while the gamma approximation initially over-estimates at low strikes and under-estimates at higher strikes, albeit very slightly. However, the Gaussian and spectral approximation methods display enhanced accuracy as the guaranteed survival annuity rate $g$ rises, \textit{i.e.}, when the option is moving in-the-money.\\
	
	Even though the gamma approximation method outperforms the Gaussian and spectral methods for these numerical examples, in other parameters the other methods may outperform the gamma method. For example, if one eigenvalue of $a_4$ from \eqref{EQ:A4} dominates its other eigenvalues then it is possible for the spectral method to perform the best. Similarly, if we are examining a problem where some of the eigenvalues are positive and some are negative, then the Gaussian approximation may perform the best.
	 Nevertheless, the predominant benefit of these approximation methods is their rapid computational capability, with the Gaussian and gamma approximations being especially efficient. These approximations only necessitate computations of constants and the use of the Gaussian cumulative distribution and density functions or the gamma functions, which are inherently swift.

\begin{center}
	[ Insert Figure~\ref{Fig:Approximation_methods} here ]
\end{center}

\subsection{Independence vs dependence}\label{Section:Independence vs dependence}
To underscore the model risk associated with assumptions about dependencies between annuitants, we examine the differences between the outcomes when assuming either independence or dependence. Figure~\ref{Fig:Independence_depdendence} showcases the option price under two scenarios:
\begin{enumerate}
	\item A positive correlation between the two annuitants' mortality rates, reflected by the model parameters in Table~\ref{tab:model_param_values} (shown as blue solid).
	\item An assumption of independence between the two annuitants' mortalities (shown as red dashed).
\end{enumerate}

In the second scenario, independence is represented by setting \(\rho_{\sigma}=0\). This results in the quadratic covariation between the normalised mortality intensities \(d\langle \mu_x^*(\cdot), \mu_y^*(\cdot) \rangle_s = 0\), as seen in \eqref{EQ:instant_correlation}.\myfootnote{While \(d\langle \mu_x^*(\cdot), \mu_y^*(\cdot) \rangle_s = 0\), it is crucial to understand that the mortality intensities of the annuitants will not be independent. As depicted in \eqref{EQ:mu_x_intensity_simple}-\eqref{EQ:mu_y_intensity_simple}, both individual mortalities rely on the common denominator $1 + \tr[u_0v_s]$. Thus, they cannot be mutually independent through the setting of \(\rho_{\sigma}=0\) alone. By adjusting $\rho_{\sigma}$, we can approximate independence by nullifying the quadratic covariation between \(\mu_x^*(s)\) and \(\mu_y^*(s)\).} But, setting \(\sigma_{12}\) to zero indirectly alters the quadratic variations of \(v_{11,s}\) and \(v_{22,s}\) as seen in \eqref{EQ:v11v22}. To account for this, we introduce a new process \(\tilde{v}\) built upon the parameters in Table~\ref{tab:model_param_values}, from which we compute our independent option price on, except for the matrix \(\sigma\), which is modified to \(\sigma^{ind}\), defined by:
\begin{align}
	\sigma_{11}^{ind} & = \sqrt{\left(\sigma_{11}\right)^2 + \left(\sigma_{12}\right)^2},\label{EQ:sig_11_ind}\\
	\sigma_{22}^{ind} & = \sqrt{\left(\sigma_{22}\right)^2 + \left(\sigma_{12}\right)^2},\label{EQ:sig_22_ind}\\
	\sigma_{12}^{ind} & = 0.\label{EQ:sig_12_ind}
\end{align}
With these adjustments, $d\langle \tilde{v}_{11,.}, \tilde{v}_{11,.} \rangle_s /\tilde{v}_{11,s} $ aligns with $d\langle v_{11,.}, v_{11,.} \rangle_s / v_{11,s} $, and $d\langle \tilde{v}_{22,.}, \tilde{v}_{22,.} \rangle_s / \tilde{v}_{22,s}$ aligns with $d\langle v_{22,.}, v_{22,.} \rangle_s / v_{22,s}$. The key distinction is that $d\langle \tilde{v}_{11,.}, \tilde{v}_{22,.} \rangle_s = 0$, unlike the positive, non-zero value of $d\langle v_{11,.}, v_{22,.} \rangle_s$.\\

From Figure~\ref{Fig:Independence_depdendence}, it is evident that the option price, when assuming independence, is consistently lower than when accounting for positive dependence between the annuitants' mortality rates. This is logically consistent: the probability of both annuitants being alive is higher under positive dependence than under independence. Furthermore, the discrepancy in prices can be substantial, with the absolute percentage error reaching up to 40\% when the guaranteed survival annuity rate \(g\) is minimal. This underscores the significance of precisely modelling dependency, given its potential economic implications. This result extends previous findings showing the importance of mortality dependence on longevity derivatives, see \citet{FreesCarriereValdez1996} for joint / last survivor survival annuities, \citet{WangYangHuang2015} for longevity bonds or \citet{YangWang2013} for the survivor swaps.\myfootnote{Needless to say that any work looking at reducing longevity risks through (cross) hedging relies on this dependence, see for example \citet{Zhou2019}.} 

\begin{center}
	[ Insert Figure~\ref{Fig:Independence_depdendence} here ]
\end{center}

\section{Conclusion}\label{Conclusion}

In this paper, we introduce the linear-rational Wishart mortality model, marking a significant departure from traditional mortality models. One of the prominent features of our model is the integration of the Wishart process coupled with the potential approach delineated by \citet{Rogers1997}. This combination provides a versatile framework, enabling a comprehensive dependency between mortality intensities. Notably, these intensities remain positive inherently, offering a clear advantage over prior models that either struggled to ensure this positivity or were lacking in a general  correlation structure.\smallpar

A pivotal contribution lies in the derivation of closed-form solutions for both the joint survival annuity and the guaranteed joint survival annuity option, made possible through leveraging the strong analytical properties of the Wishart process and the linear-rational structure of the model. The density of the joint survival annuity is known, so that risk indicators can be easily computed. For the guaranteed joint survival annuity option price, its computation only requires a one-dimensional integration. Further building on the analytical properties of the Wishart process, we introduce three unique approximation techniques tailored for pricing the guaranteed joint survival annuity option. The first employs the cumulants of the Wishart process and approximates its density by a perturbation of the Gaussian distribution, the second is rooted in the spectral decomposition and the third approximation is based on a perturbation of the gamma distribution. All three approximations work well, with them providing reasonably accurate results and their swift numerical implementations make them particularly valuable for practical applications in pricing of the guaranteed joint survival annuity option.\smallpar

Looking ahead, there are several avenues for extension and refinement. The first is certainly the estimation of the model, where the density as well as the moments of the joint survival annuity will certainly be important. Note that the model parameters are matrices with some algebraic properties that require adapted optimisation procedures. Moreover, given the analytical attributes of our model, deriving option Greeks presents as a logical next step that will complement the risk management indicators presented in this work, further strengthening the practical applicability of our model.\smallpar

The inherent flexibility of our linear-rational Wishart mortality model extends beyond the confines of our current investigation. Its generic framework is adaptable to multi-population, multi-cohort, and even cross-asset type challenges. The derivations and results showcased in this paper could be expanded to embrace these diverse frameworks.

\clearpage

\section*{Declarations}

\begin{itemize}
\item The authors contributed equally to this work.
\item The authors report there are no competing interests to declare.
\item Data availability: Not applicable.
\item Funding organizations: Not applicable.
\end{itemize}

\clearpage
%\footnotesize
\linespread{1}
\selectfont
\bibliographystyle{abbrvnat}
\bibliography{Biblio}
\clearpage
\section{Figures and Tables}

\begin{table}[htbp]
	\caption{Wishart model parameter values.}\label{tab:model_param_values}
	\begin{center}
		\begin{tabular}{|c c|}
			\hline
			Variable & Value \\
			\hline
			$\alpha$ & 0.04 \\
			$r$ & 0\\
			$\beta$ & 3.5 \\
			$\rho_{\sigma}$ & 0.5 \\
			$\sigma_{12}$ & $\rho_{\sigma}  \sqrt{0.06 \times 0.04}$ \\
			$\sigma$ & $\begin{pmatrix} 0.06 & \sigma_{12} \\ \sigma_{12} & 0.04 \end{pmatrix}$ \\
			$m$ & $\begin{pmatrix} -1 & 0 \\ 0 & -1 \end{pmatrix}$ \\
			$v_{12}$ & 0.5 $\sqrt{0.005 \times 0.0025}$ \\
			$v_0$ & $\begin{pmatrix} 0.005 & v_{12} \\ v_{12} & 0.0025 \end{pmatrix}$ \\
			$u_1$ & $\begin{pmatrix} 1 & 0 \\ 0 & 0 \end{pmatrix}$ \\
			$u_2$ & $\begin{pmatrix} 0 & 0 \\ 0 & 1 \end{pmatrix}$ \\
			\hline
		\end{tabular}
	\end{center}
	{\footnotesize \textit{Note.} This table contains the model parameters used for our numerical experiments in Section~\ref{Section:num_experiments}.}
	
\end{table}

\begin{table}[htbp]
	\caption{Guaranteed joint survival annuity option parameters.}\label{tab:GSA_param_values}
	\begin{center}
		\begin{tabular}{|cc|}
			\hline
			Variable & Value \\
			\hline
			$T-t$ & 2\\
			$T_N-t$ & 5\\
			$g$ & 0.225\\
			$z$ & $-0.025$\\
			\hline
		\end{tabular}
	\end{center}
	{\footnotesize \textit{Note.} This table contains the parameters that our analyses on the guaranteed joint survival annuity option prices are centered on. The option is exercisable in $T-t=2$ years, with a joint survival annuity payable yearly for $T_N-t = 5$ years and guaranteed joint survival annuity rate $g=0.225$ which is chosen so the option is approximately at-the-money based on the parameters in Table~\ref{tab:model_param_values}.}
\end{table}

\begin{table}[htbp]
	\caption{Moments of $\mu_x(T)$ and $\mu_y(T)$.}\label{tab:moments_mux_muy}
	\begin{center}
		\begin{tabular}{|c c c c c|}
			\hline
			$T-t$ & 1 & 2 & 5 & 10 \\
			\hline
			$\mathbb{E}_t[\mu_x(T)]$ & 0.0193 & 0.0199 & 0.0200 & 0.0200 \\
			$\Var(\mu_x(T))$ & $1.06 \times 10^{-4}$ & $1.20 \times 10^{-4}$ & $1.20 \times 10^{-4}$ & $1.20 \times 10^{-4}$ \\
			$\mathbb{E}_t[\mu_y(T)]$ & 0.0195 & 0.0198 & 0.0198 & 0.0198 \\
			$\Var(\mu_y(T))$ & $2.59 \times 10^{-5}$ & $2.83 \times 10^{-5}$ & $2.86 \times 10^{-5}$ & $2.86 \times 10^{-5}$\\
			\hline		
		\end{tabular}
	\end{center}
	\footnotesize{\textit{Note.} This table contains the expectation and variance of $\mu_x(T)$ and $\mu_y(T)$ computed at the values of $T-t=1,2,5,10$. The moments were numerically calculated using \eqref{EQ:IntensityDensity}}.
\end{table}

\begin{figure}[htbp]
	\caption{Sensitivity analyses.}\label{Fig:Sensitivity_graphs}
	\begin{center}
		\includegraphics[width=7.5cm]{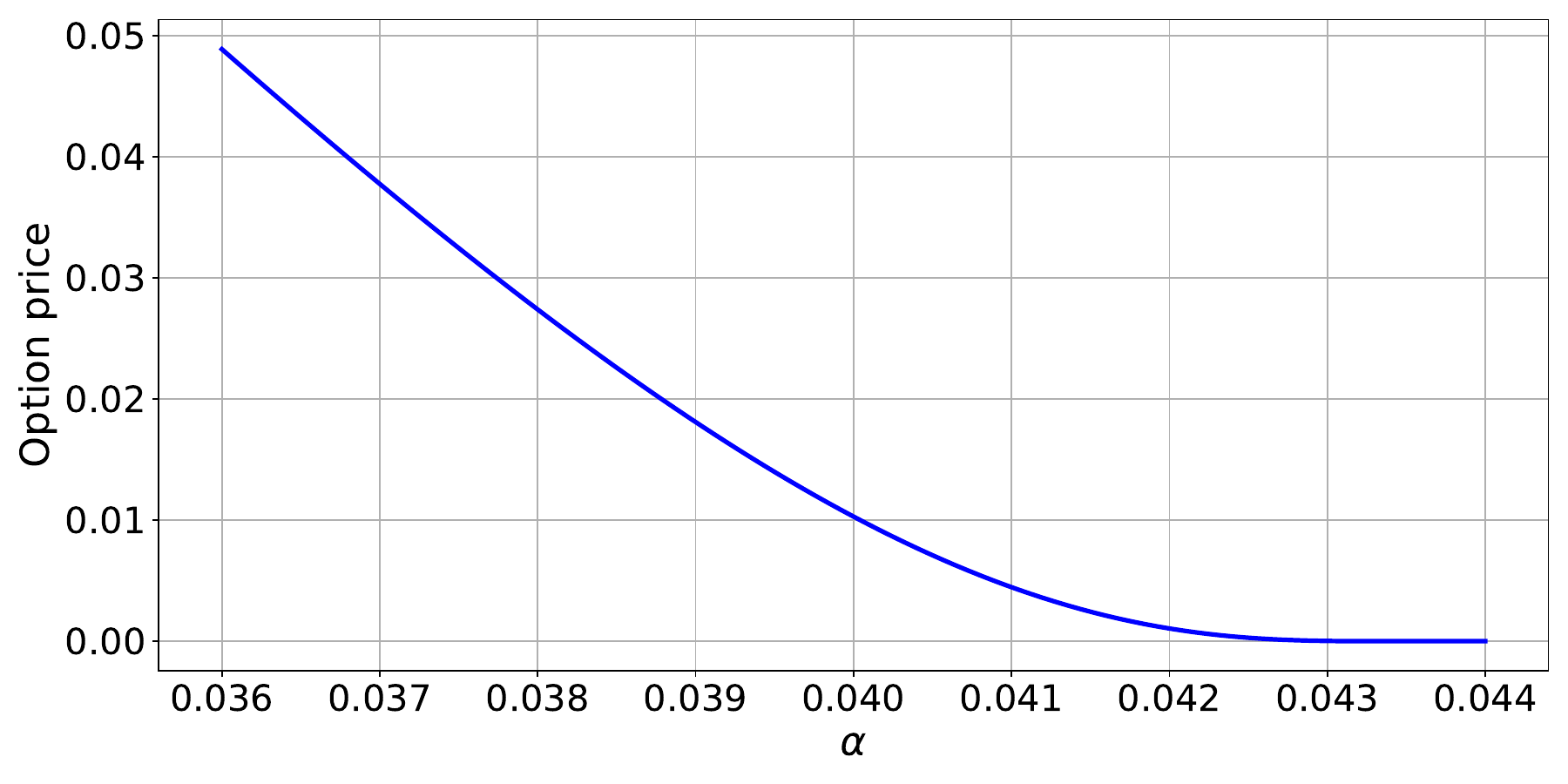}\includegraphics[width=7.5cm]{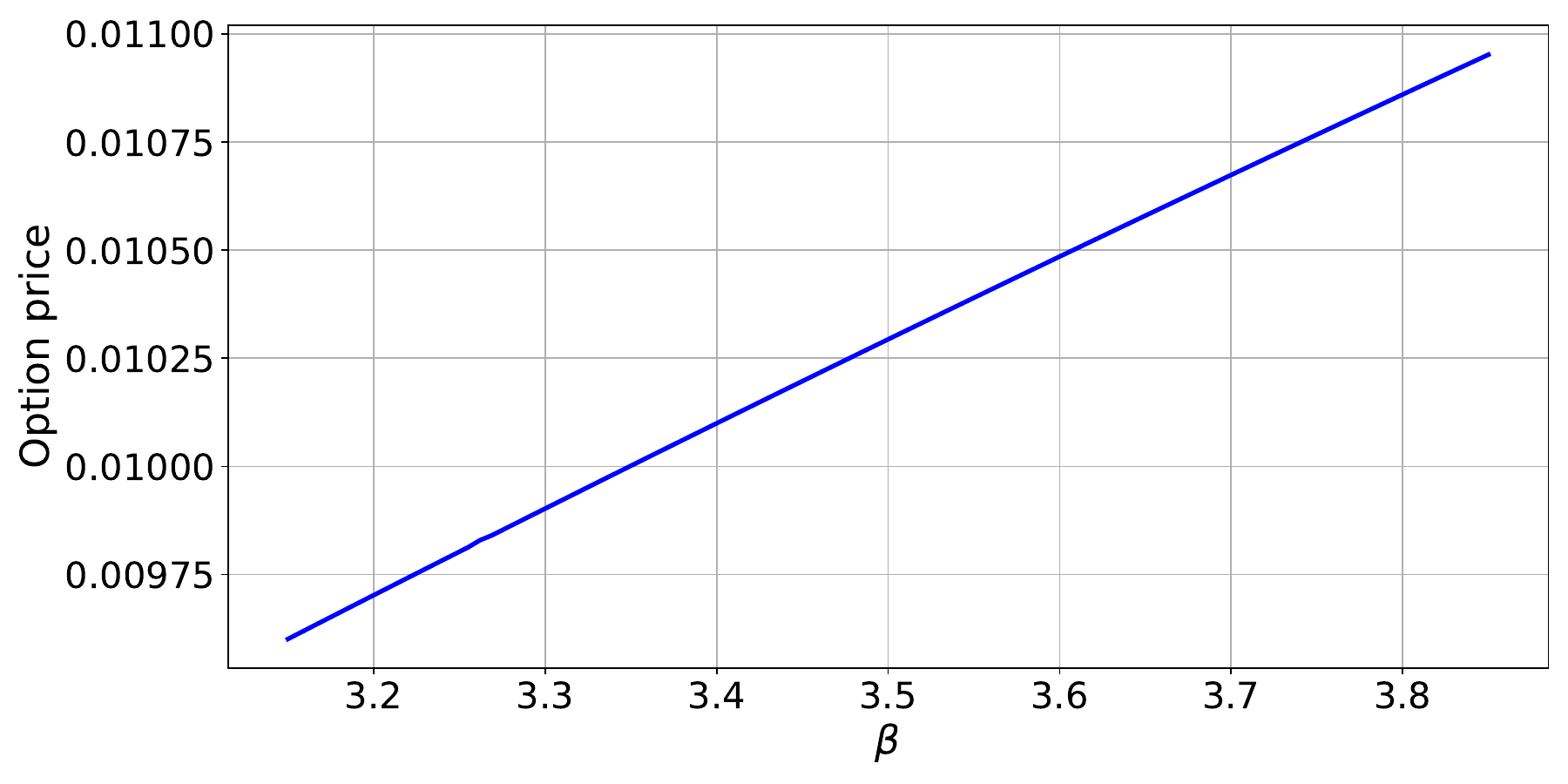}
		\includegraphics[width=7.5cm]{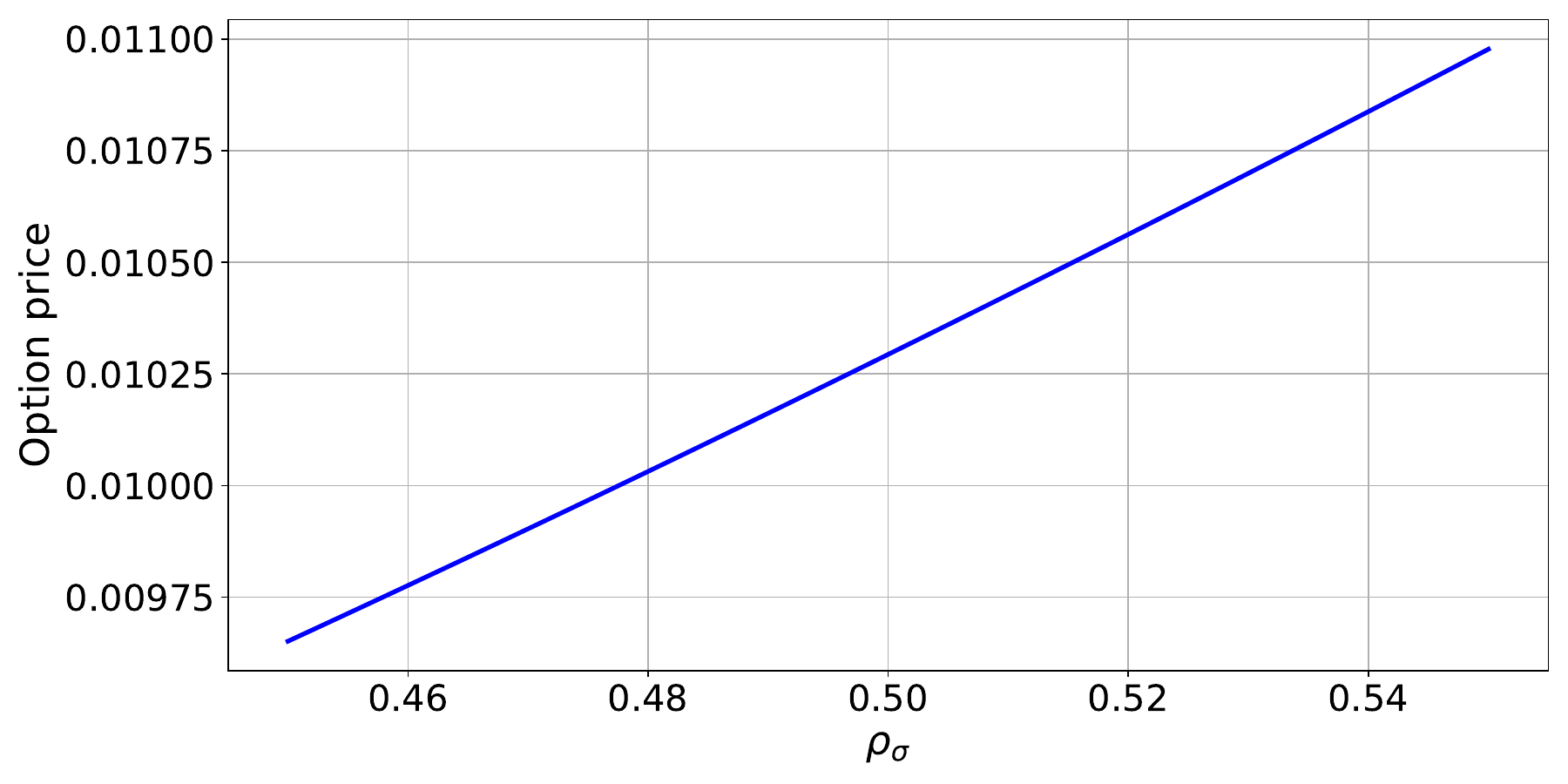}\includegraphics[width=7.5cm]{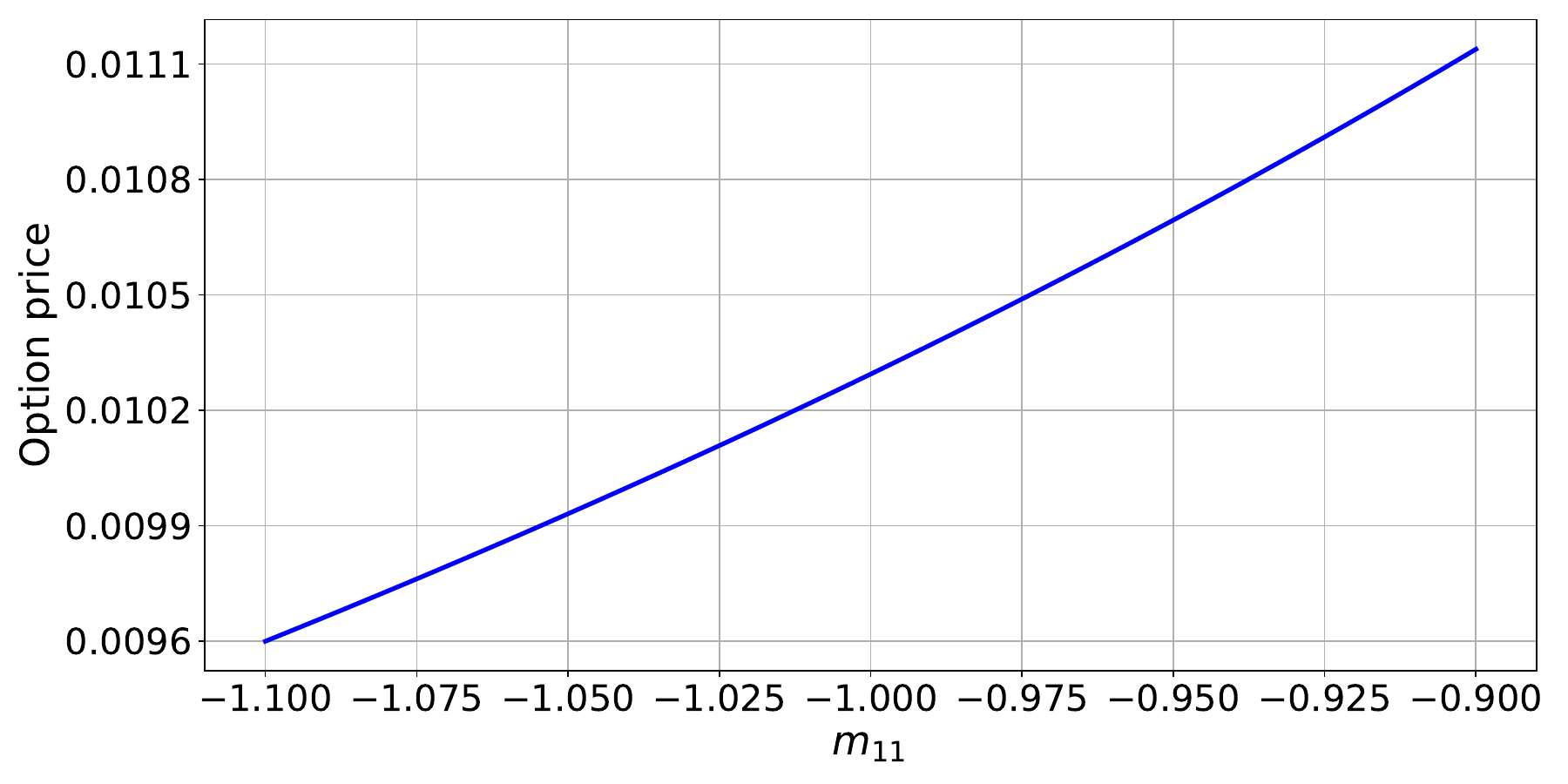}
		\includegraphics[width=7.5cm]{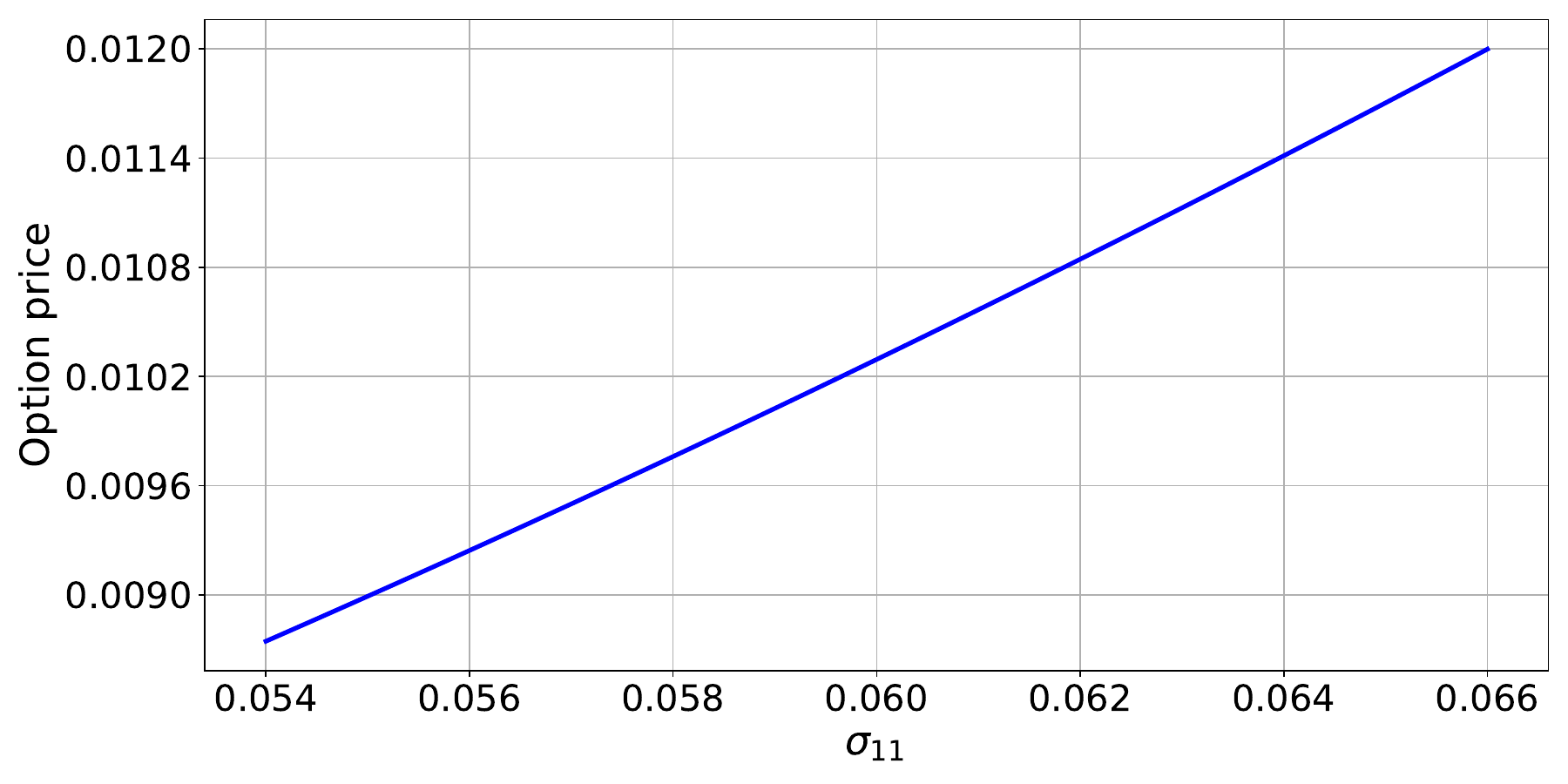}\includegraphics[width=7.5cm]{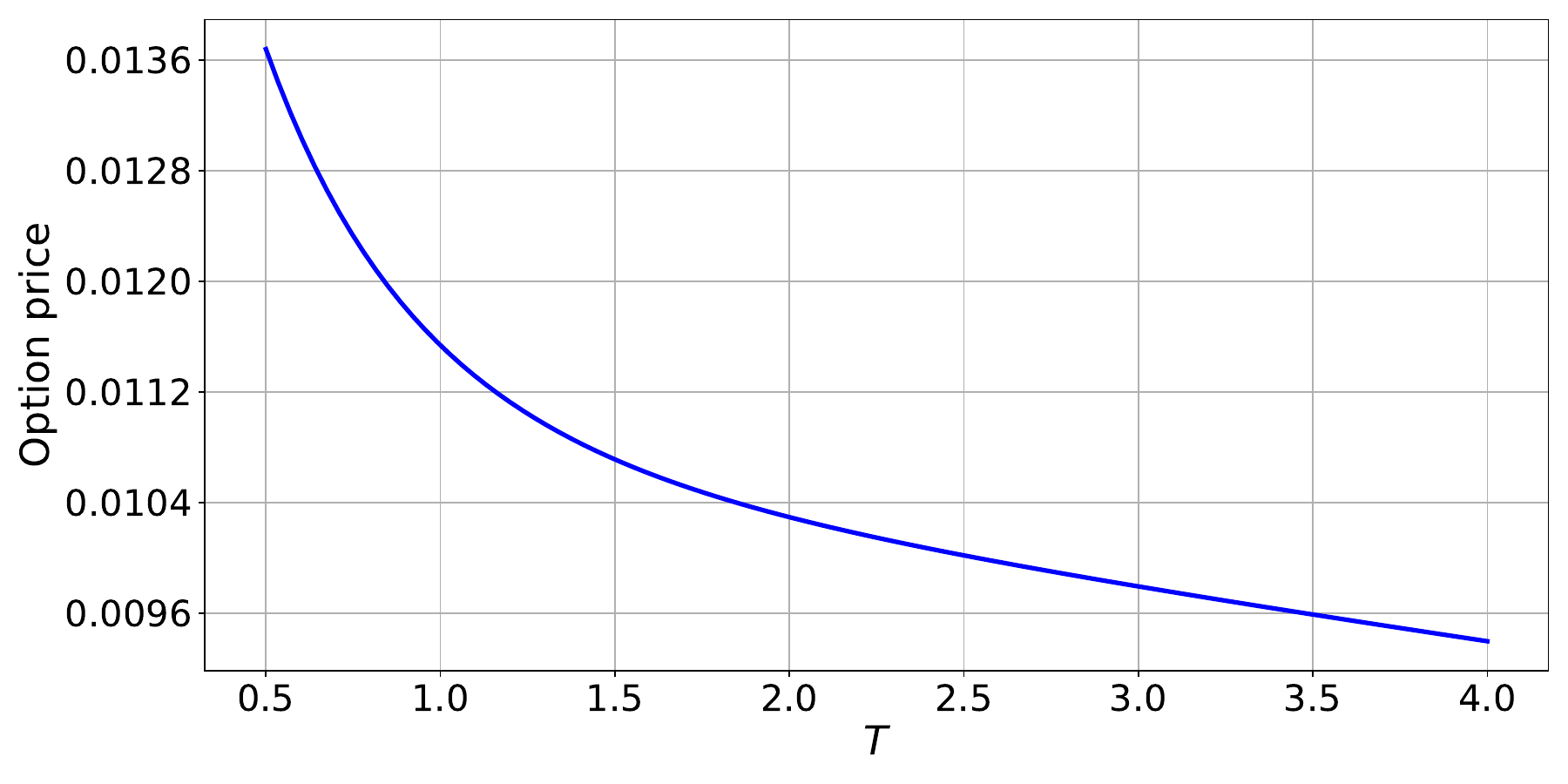}
	\end{center}

	{\footnotesize \textit{Note.} Each panel displays the sensitivity analysis of the guaranteed joint survival annuity option price with respect to different parameters. The top left panel displays $\alpha$. Top right panel displays $\beta$. Middle left panel displays $\rho$. Middle right panel displays $m_{11}$. Bottom left panel displays $\sigma_{11}$. Bottom right panel displays $T$, the exercise date of the guaranteed joint survival annuity option.}
\end{figure}

\begin{figure}[htbp]
	\caption{Comparison of approximation methods and errors.}\label{Fig:Approximation_methods}
	\begin{center}
		\includegraphics[width=7.5cm]{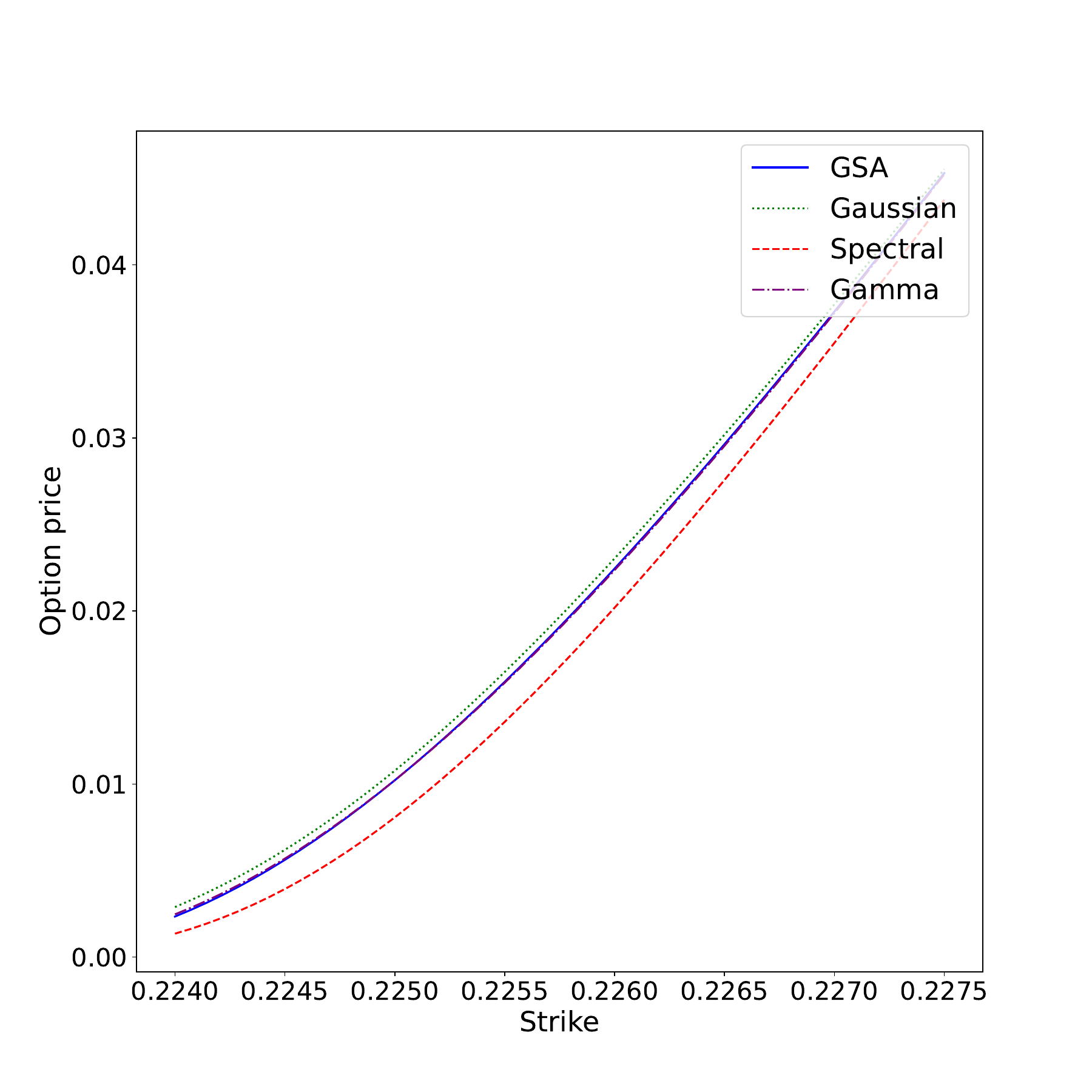}\includegraphics[width=7.5cm]{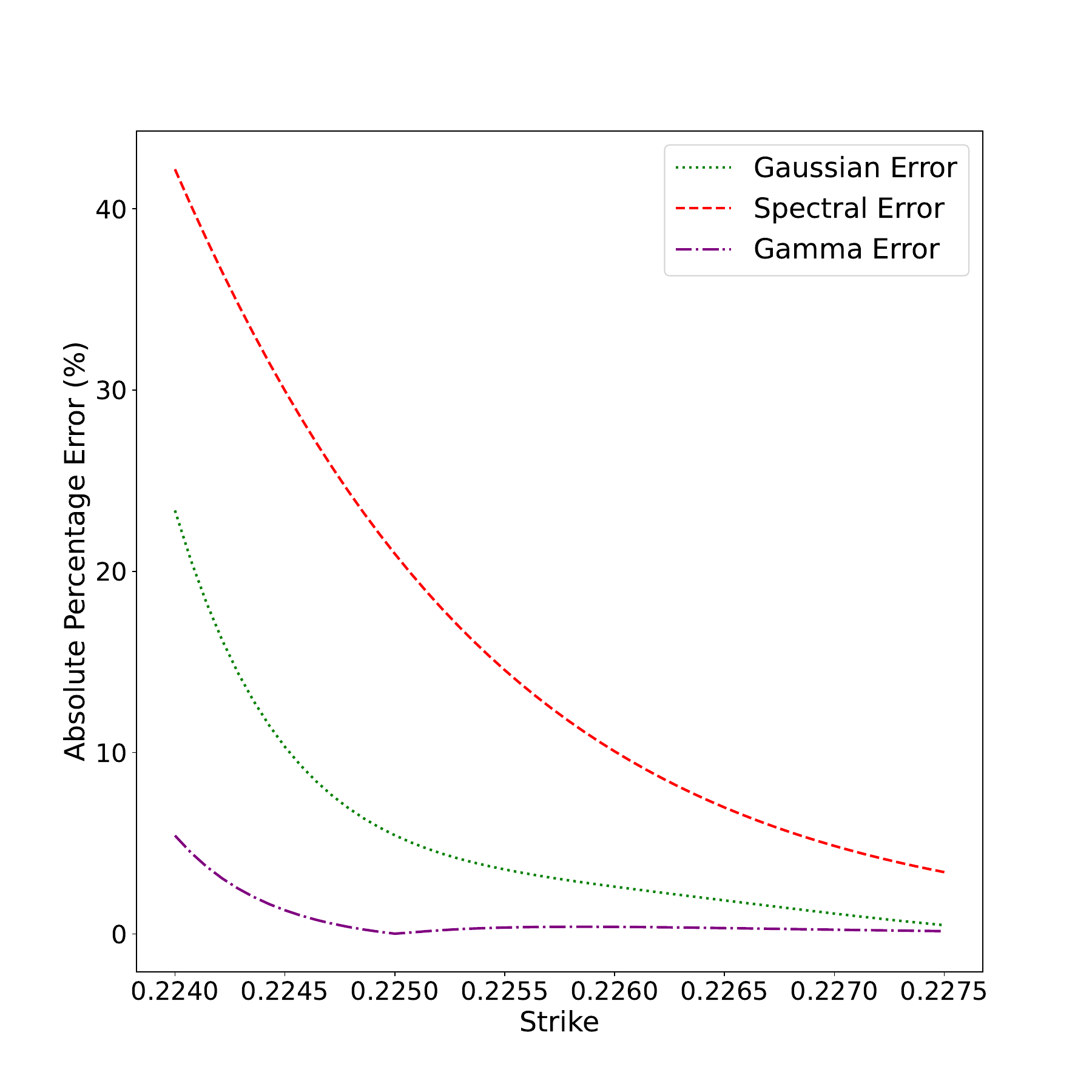}
	\end{center}
	{\footnotesize \textit{Note.} The left panel displays the guaranteed joint survival annuity option price across different strikes for different pricing methods: Characteristic function (blue solid), Gaussian approximation (green dotted), the spectral decomposition approximation (red dashed), and the gamma approximation (purple dotted-dashed). The right panel showcases the absolute percentage errors (\%) with respect to the characteristic function pricer across different strikes associated with the Gaussian approximation (green dotted), the spectral decomposition approximation (red dashed), and the gamma approximation (purple dotted-dashed).}
\end{figure}

\begin{figure}[htbp]
	\caption{Pricing difference for independence and dependence assumption.}\label{Fig:Independence_depdendence}
	\begin{center}
		\includegraphics[width=15cm]{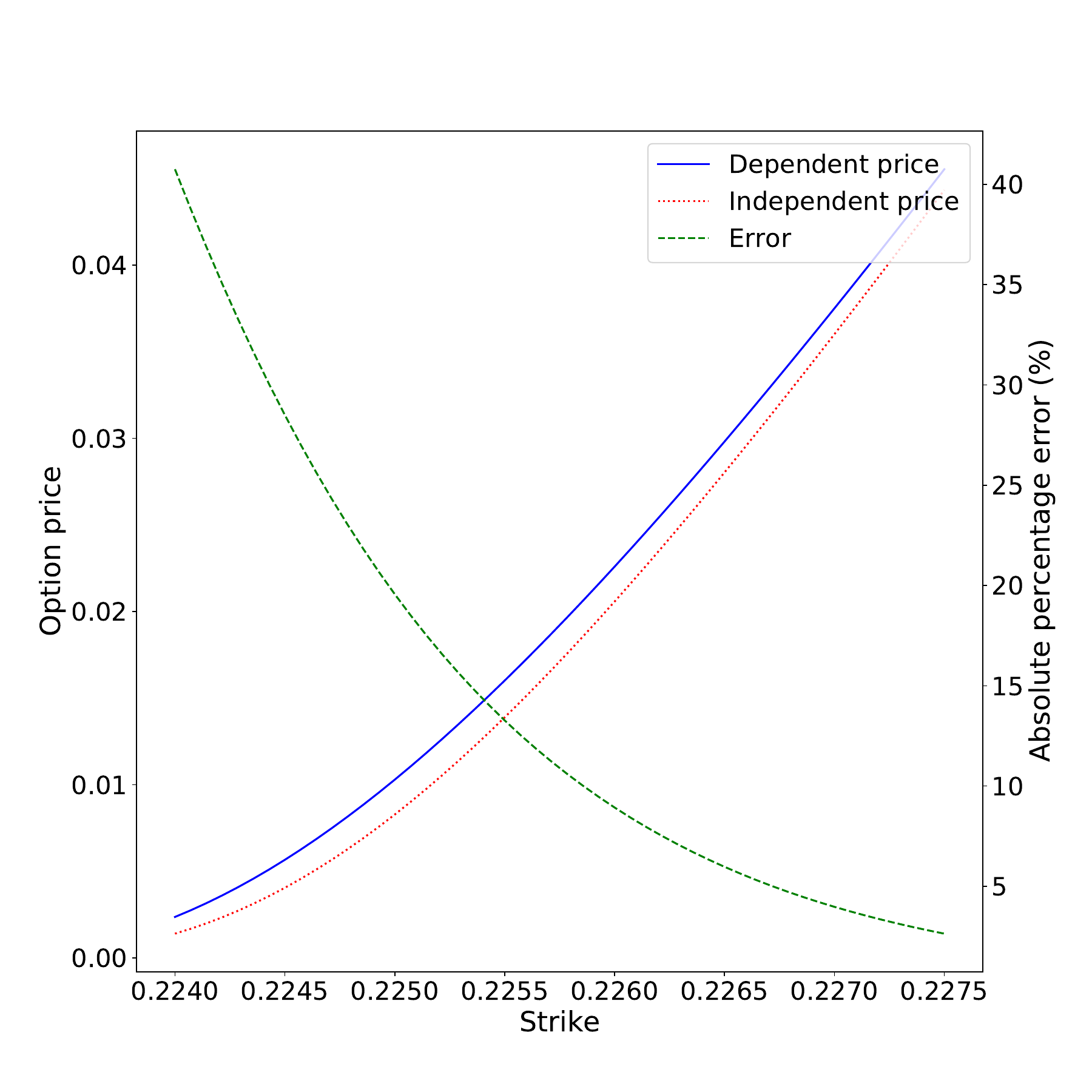}
	\end{center}
	{\footnotesize \textit{Note.} This panel displays the guaranteed joint survival annuity option price across different strikes under two different pricing assumptions: Dependence between annuitants given by the parameters outlined in Table~\ref{tab:model_param_values} and independence between annuitants which is also based on Table~\ref{tab:model_param_values} but with the matrix $\sigma$ defined by \eqref{EQ:sig_11_ind}-\eqref{EQ:sig_12_ind}.}
\end{figure}

%\footnotesize
%\linespread{1}
%\selectfont
\clearpage
\appendix
\section*{Supplementary Appendix}

%%%%%%%%%%%%%%%%%%%%%%%%%%%%%%%%%%%%%%%%%%%%%%%%%%%%%%%%%%%%%%%%%%%%%%%%%%%%%%%%%%%%%%%%%%%%%%%%%%%%%%%%%%%%%%%%%%%%%%%%%%%%%%%
\begin{proof}[Proof of Proposition \ref {Prop:SBZCPricing} ]
Using \eqref{EQ:SBZC} and the change of probability measure  \eqref{EQ:PricingKernel} and the expression for the state-price density \eqref{EQ:PricingKernel} we get 
\begin{align*}
\mathrm{SB}_0(t,T)	&=\mathbb{E}_t^\mathbb{Q}\left[e^{-\int_t^T(\mu_x(s)+\mu_y(s))ds} \right] \\
					&=\mathbb{E}_t\left[\frac{\zeta_T}{\zeta_t}\right] \\
					&=e^{-\alpha(T-t)}\frac{1+\mathbb{E}_t[\textup{tr}[u_0v_{T}]]}{1+\textup{tr}[u_0v_{t}]},
\end{align*}
and we conclude using Lemma \ref{LEM:TraceMean} with $u_0$.
\end{proof}
%%%%%%%%%%%%%%%%%%%%%%%%%%%%%%%%%%%%%%%%%%%%%%%%%%%%%%%%%%%%%%%%%%%%%%%%%%%%%%%%%%%%%%%%%%%%%%%%%%%%%%%%%%%%%%%%%%%%%%%%%%%%%%%
\begin{proof}[Proof of Proposition \ref{Prop:MortalityIntensityDistribution} ]
Define $\mathrm{x}_1=c_1 + \textup{tr}[h_1v_T]$ with $c_1:=\alpha/2 -\textup{tr}[u_1\omega]$, $h_1 = \alpha u_1-2u_1m$ and $\mathrm{x}_2=1 + \textup{tr}[u_0v_T]$ then the moment generating function of $(\mathrm{x}_1,\mathrm{x}_2)$ is 
\begin{align}
\Phi_{\mathrm{x}_1,\mathrm{x}_2}(z_1,z_2) 	&:=\mathbb{E}_t[e^{z_1\mathrm{x}_1 +z_2\mathrm{x}_2 } ]\nonumber\\
											&=e^{z_1c_1 + z_2}\Phi(T-t,z_1h_1+z_2u_0,v_t), \label{EQ:PhiX1X2}
\end{align}
with $\Phi(.,.,.)$ given in Proposition \ref{Prop:MGFBru}. As $\mu_x(T)$ is the ratio of these two variables, we can use \citet{Gurland1948} to obtain 
\begin{align}
\mathbb{P}(\mu_x(T)\leq z) = \frac{1}{2}-\frac{1}{\pi}\int_0^{+\infty}\Im \left(\frac{\Phi_{\mathrm{x}_1,\mathrm{x}_2}(\mathrm{i} s,-\mathrm{i} sz)}{s}\right) ds,
\end{align}
with $\mathrm{i}=\sqrt{-1}$. It gives the result after using \eqref{EQ:PhiX1X2}.
\end{proof}
%%%%%%%%%%%%%%%%%%%%%%%%%%%%%%%%%%%%%%%%%%%%%%%%%%%%%%%%%%%%%%%%%%%%%%%%%%%%%%%%%%%%%%%%%%%%%%%%%
\begin{proof}[Proof of Proposition \ref{Prop:JointSurvivalAnnuityDensity} ]
Define $\mathrm{x}_1 = b_3 + \tr\left[a_3 v_T\right]$ and $\mathrm{x}_2 = 1 + \tr\left[u_0 v_T\right]$. Then the moment generating function of $(\mathrm{x}_1, \mathrm{x}_2)$ is
\begin{align*}
	\Phi_{\mathrm{x}_1, \mathrm{x}_2}(z_1, z_2) &= \mathbb{E}_t\left[e^{z_1 \mathrm{x}_1 + z_2 \mathrm{x}_2} \right]\\
	&= e^{z_1 b_3 + z_2} \mathbb{E}_t\left[e^{\tr\left[\left(z_1 a_3 + z_2 u_0\right)v_T \right]} \right]\\
	&= e^{z_1 b_3 + z_2} \Phi(T-t, z_1 a_3 + z_2 u_0, v_t),
\end{align*}
with $\Phi(.,.,.)$ given in Proposition \ref{Prop:MGFBru}. Since $\sum_{i=1}^N\text{SB}(T,T_i)$ is the ratio of these two variables, we can use \citet{Gurland1948} to obtain 
\begin{align}
	\mathbb{P}\left(\sum_{i=1}^N\text{SB}(T,T_i)\leq z\right) = \frac{1}{2}-\frac{1}{\pi}\int_0^{+\infty}\Im \left(\frac{\Phi_{\mathrm{x}_1,\mathrm{x}_2}(\mathrm{i} s,-\mathrm{i} sz)}{s}\right) ds,
\end{align}
where, after using the above equation, we reach the result.
\end{proof}
%%%%%%%%%%%%%%%%%%%%%%%%%%%%%%%%%%%%%%%%%%%%%%%%%%%%%%%%%%%%%%%%%%%%%%%%%%%%%%%%%%%%%%%%%%%%%
\begin{proof}[Proof of Proposition \ref{GAOpricing}]
Starting from \eqref{EQ:GAO2}  and using \eqref{EQ:Radon} we have
\begin{align*}
\bar{C}(t,T,T_N) &= P(t,T) \mathbb{E}_t^{\mathbb{Q}}\left[ e^{-\int_{t}^{T}(\mu_x(s)+\mu_y(s))ds} \bar{C}(T,T,T_N) \right],\\   
				&= P(t,T)\mathbb{E}_t\left[\frac{\zeta_T}{\zeta_t} \left( \sum_{i=1}^N\mathrm{SB}(T,T_i)-1/g \right)_+  \right],
\end{align*}
and combining with \eqref{EQ:PricingKernel}, Proposition \ref{Prop:JointSurvivalAnnuity}  and some simplifications gives the result after defining $b_4(T,T_N)$ and $a_4(T,T_N)$ as in \eqref{EQ:B4} and \eqref{EQ:A4}.
\end{proof}
%%%%%%%%%%%%%%%%%%%%%%%%%%%%%%%%%%%%%%%%%%%%%%%%%%%%%%%%%%%%%%%%%%%%%%%%%%%%%%%%%%%%%%%%%%%%%%%%%%%%%
\begin{proof}[Proof of Proposition \ref{Prop:CumulantsExpansion}]
By definition the cumulants are $(\kappa_j)_{j \in \mathbb{N}}$ such that
\begin{align*}
\log \mathbb{E}_t[e^{zY_T}] = \sum_{j=1}^{+\infty} \kappa_j \frac{z^j}{j!}.
\end{align*}
From the expression \eqref{EQ:mgfBru} and using $\det(e^a) =\textup{etr}[a]$ it amounts to perform a series expansion of
\begin{align*}
z &\to \textup{tr}[\log(I_n -2z\varsigma_T a_4)]\\
z &\to \textup{tr}\left[\frac{\vartheta_T^\top}{2}(2z\varsigma_T a_4)(I_n-2z\varsigma_T a_4)^{-1}\right],
\end{align*}
and using $\log (I_n - x) = -\sum_{j=1}^{+\infty} \frac{x^j}{j}$ for $x\in \mathsf{M}(n)$ we get for the first function the series expansion
\begin{align*}
-\sum_{j=1}^{+\infty} (j-1)! 2^j\textup{tr}[(\varsigma_T a_4)^j] \frac{z^j}{j!},
\end{align*}
while $x(I_n-x)^{-1}=\sum_{j=1}^{+\infty}x^j$ leads for the other one to
\begin{align*}
\sum_{j=1}^{+\infty} j!2^{j-1}\textup{tr}[\vartheta_T^\top (\varsigma_T a_4)^j] \frac{z^j}{j!},
\end{align*}
and combining these two series expansions we get the result.
\end{proof}
%%%%%%%%%%%%%%%%%%%%%%%%%%%%%%%%%%%%%%%%%%%%%%%%%%%%%%%%%%%%%%%%%%%%%%%%%%%%%%%%%%%%%%%%%%%
\begin{proof}[Proof of Proposition \ref{GAOPricingApproximation}] 
If $Y_{T}=b_4(T,T_N) + \textup{tr}[a_4(T,T_N)v_T] $ with $ b_4(T,T_N)$ and $ a_4(T,T_N)$  as in ~\eqref{EQ:GAOpricingFormula} and suppose its density can be approximated by~\eqref{EQ:ApproximationDensity} then 
\begin{align*}
\mathbb{E}_t\left[   \left(  Y_{T}\right)_{+}\right] &\sim \sum_{j= 0}^3 \eta_j \int_0^{+\infty}\frac{1}{\sqrt{2\pi \kappa_2}} z(z-\kappa_1)^j e^{-\frac{(z-\kappa_1)^2}{2\kappa_2}}dz\,\\
&= \sum_{j = 0}^3 \eta_j \sqrt{\kappa_2}\kappa_2^{j/2} \int_{\frac{-\kappa_1}{\sqrt{\kappa_2}}}^{+\infty} z^{j+1} \frac{1}{\sqrt{2\pi }}e^{-\frac{z^2}{2}}dz + \sum_{j = 0}^3 \eta_j \kappa_1\kappa_2^{j/2} \int_{\frac{-\kappa_1}{\sqrt{\kappa_2}}}^{+\infty} z^j \frac{1}{\sqrt{2\pi }}e^{-\frac{z^2}{2}}dz\,\\
&= \sum_{j = 0}^3 \eta_j \xi_{j+1}+ \kappa_1\sum_{j = 0 }^{3} \eta_j \xi_{j} \,,
\end{align*}
where  $\eta_0 = 1$, $\eta_1 = -\frac{3}{\kappa_2^2}\frac{\kappa_3}{3!}$, $\eta_2 = 0$ and $\eta_3 = \frac{1}{\kappa_2^3}\frac{\kappa_3}{3!}$ while $\xi_j = \kappa_2^{j/2} \int_{\frac{-\kappa_1}{\sqrt{\kappa_2}}}^{+\infty}  \frac{z^j}{\sqrt{2\pi }}e^{-\frac{z^2}{2}}dz$ are given by $\xi_0 = N\left(\frac{\kappa_1}{\sqrt{\kappa_2}}\right)$, $\xi_1 =  \sqrt{\kappa_2} \varphi \left(\frac{\kappa_1}{\sqrt{\kappa_2}}\right)$, $\xi_2 = \kappa_2 N\left(\frac{\kappa_1}{\sqrt{\kappa_2}}\right) -\xi_1\kappa_1$, $\xi_3 = \xi_1(\kappa_1^2 +2\kappa_2)$ and $\xi_4 =  3\kappa_2^2 N\left(\frac{\kappa_1}{\sqrt{\kappa_2}}\right) - \xi_1(\kappa_1^3 +3\kappa_2\kappa_1)$ with $N(.)$ and $\varphi(.)$ the cumulative normal distribution and normal distribution density, respectively.
\end{proof}

%%%%%%%%%%%%%%%%%%%%%%%%%%%%%%%%%%%%%%%%%%%%%%%%%%%%%%%%%%%%%%%%%%%%%%%%%%%%%%%%%%%%%%%%%%%%%%%%%%%
\begin{proof}[Proof of Proposition \ref{Prop:MGFChiSquared}]
Let $\gamma$ as stated above and consider $\gamma^\top v_T \gamma=\textup{tr}[\gamma\gamma^\top v_T]$ then for $z\in \mathbb{R}$ using \eqref{EQ:mgfBru} we get
\begin{align*}
\mathbb{E}_t[e^{z\textup{tr}[\gamma\gamma^\top v_T]}]=\frac{\textup{etr}\left(\frac{\vartheta_T^\top}{2}(2z\varsigma_T \gamma\gamma^\top)(I_n-2z\varsigma_T\gamma\gamma^\top)^{-1}  \right)}{\det(I_n -2z\varsigma_T\gamma\gamma^\top)^{\beta/2}}\,,
 \end{align*}
and since $\gamma\gamma^\top$ is rank one there exists a basis such that $\gamma\gamma^\top$  in that basis is $e_{11}$ so that $\det(I_n -2z\varsigma_Te_{11})^{\beta/2}=(1-2z(\varsigma_T)_{11})^{\beta/2}$ ($\varsigma_T$ has to be expressed in that basis) and that the numerator above is
\begin{align*}
\textup{etr}\left(\frac{\vartheta_T^\top}{2}(2z\varsigma_T e_{11})(I_n-2z\varsigma_Te_{11})^{-1}  \right)=\exp\left(\frac{(\vartheta_T^\top)_{11}  z(\varsigma_T)_{11}}{1-2z(\varsigma_T)_{11}}\right)\,,
\end{align*}
with both $\varsigma_T$ and $\vartheta_T$ expressed in that basis. Combining the two it leads to
\begin{align*}
\mathbb{E}_t[e^{z\gamma^\top v_T \gamma}]=\frac{e^{\frac{(\vartheta_T^\top)_{11} z(\varsigma_T)_{11}}{1-2z(\varsigma_T)_{11}}}}{(1-2z(\varsigma_T)_{11})^{\beta/2}}\,,
 \end{align*}
from which we get the result.
\end{proof}
%%%%%%%%%%%%%%%%%%%%%%%%%%%%%%%%%%%%%%%%%%%%%%%%%%%%%%%%%%%%%%%%%%%%%%%%%%%%%%%%%%%%%%%%%%%%%%%%%%%%%%%

\begin{proof}[Proof of Proposition \ref{Prop:ProjectionOnEig}]
Let $Y_T = b_4(T,T_N)+\textup{tr}[a_4(T,T_N)v_T]$ and rewrite $\textup{tr}[a_4(T,T_N)v_T]$ as $\textup{tr}[(\frac{a_4+a_4^\top }{2})v_T]$, and since  $\frac{a_4+a_4^\top }{2}\in \mathbb{S}_n$, it admits the decomposition \eqref{EQ:SpectralA4}. But $\textup{tr}[\gamma_i\gamma_i^\top v_T] = \gamma_i^\top v_T \gamma_i$ and according to Proposition \ref{Prop:MGFChiSquared}, $\gamma_i^\top v_T \gamma_i/(\varsigma_T)_{ii}$ has a non-central Chi-squared distribution with degrees of freedom $\beta$ and non-centrality parameter $(\vartheta_T^\top)_{ii}$. Note that these variables are not independent, but we can approximate $Y_T$, although we do not have a control of the approximation error, with \(\tilde{Y}_T\) defined by \eqref{eq:tildeY_T}. This set of variables follows a generalized Chi-squared distribution (since it is a collection of independent scalar non-central Chi-squared variables).
\end{proof}

%%%%%%%%%%%%%%%%%%%%%%%%%%%%%%%%%%%%%%%%%%%%%%%%%%%%%%%%%%%%%%%%%%%%%%%%%%%%%%%%%%%%%%%
\begin{proof}[Proof of Proposition \ref{GAOPricingApproximationGamma} ] 
We have $Y_{T}=b_4  + \textup{tr}[a_4v_T] $ and if the eigenvalues of $a_4$ are all positive then $\textup{tr}[a_4v]>0$ for $v\in \mathbb{S}_n^{++}$ so that necessarily $b_4<0$ (otherwise $Y_{T}$ would be positive and the option would end always in the money) so that there exists $k<0$ such that $Y_{T}>k$. A natural choice for $k$ is $k=b_4$. Then if we denote $Z= Y_{T}-k$ we can rewrite  as 
\begin{align}
\mathbb{E}_t\left[   \left(  Y_{T}\right)_{+}\right] = \mathbb{E}_t\left[   \left(  Z + k \right)_{+}\right]\,,
\end{align}
with $\mu_i^Z=\mathbb{E}_t\left[   Z^i\right] =\sum_{j=0}^i \binom{i}{j}(-1)^j\mu_{i-j}k^j$ and $\mu_{i}= \mathbb{E}_t\left[   Y_{T}^i\right]$ the $i'{th}$ moment of $Y_{T}$ that can be deduced from the cumulants of $Y_{T}$ thanks to \eqref{eq:MomentsFromCumulants}. Note that as $Z>0$ by construction then $\mu_1^Z>0$ and we approximate its distribution by a perturbation of the (scalar) gamma distribution following \citet{FilipovicMayerhoferSchneider2013}. The density of $\frac{Z\mu_1^Z}{\mu_2^Z-(\mu_1^Z)^2}$ can be approximated at order three by 
\begin{align}
g^{3}(z) = w(z) \sum_{j=0}^{3} c_{j} H_{j}(z)\,,
\end{align}
with $w$ the auxiliary density function $w(z) = \frac{z^{\bar{\alpha}} e^{-z}}{\Gamma(1+\bar{\alpha})}$, that is the (scalar) gamma distribution with shape $1+\bar{\alpha}$ and rate 1, some known constants $c_{j}$, and  $\left\{H_{j} \left.\right| j =0,\ldots, 3 \right\}$ an orthonormal basis of polynomials. \citet{FilipovicMayerhoferSchneider2013} prove that the functions $H_{j}$ are given by $H_{j}=\frac{{\tilde{H}_{j}} }{{\left\|\tilde{H}_{j}(z) \right\|}}$, with ${\left\|\tilde{H}_{j}(z) \right\|} =\sqrt{\frac{\prod_{i=1}^{j}(i+\bar{\alpha})}{j!}} $ and
\begin{align}
\tilde{H}_{0}(z) &= 1 \,, \\
\tilde{H}_{1}(z) &= -z +\bar{\alpha} + 1 \,, \\
\tilde{H}_{2}(z) &=\frac{1}{2}( z^{2} -2z(\bar{\alpha}+2) +\bar{\alpha}^2 +3\bar{\alpha} +2) \,, \\
\tilde{H}_{3}(z) &=\frac{1}{6}( -z^{3} +3z^{2}(\bar{\alpha}+3)  -3z(\bar{\alpha}^2 +5\bar{\alpha} +6) +\bar{\alpha}^3 +6\bar{\alpha}^2 +11\bar{\alpha} +6) \,.
\end{align}
Following \citet[Section 7.1]{FilipovicMayerhoferSchneider2013}, we choose $\bar{\alpha}= \frac{(\mu_{1}^{Z})^{2}}{\mu_{2}^Z- (\mu_{1}^Z)^{2}} -1 $ and $\bar{\beta}= \frac{\mu_{1}^Z}{\mu_{2}^Z-(\mu_{1}^Z)^{2}}>0$ as $\mu_1^Z>0$ then the coefficients $\lbrace c_{i};i=0,\ldots, 3 \rbrace$ are given by: $c_{0} = 1$, $c_{1} = 0$, $c_{2} =0$ and
\begin{align}
c_{3} &=\frac{(\bar{\alpha}+1)\left( (\bar{\alpha}+2)(\bar{\alpha}+3)-\frac{(\bar{\alpha}+1)^2 \mu_{3}^Z }{(\mu_{1}^Z)^{3}}\right) }{\sqrt{6}\sqrt{(\bar{\alpha}+1)(\bar{\alpha}+2)(\bar{\alpha}+3)}} \,.
\end{align}
Eventually, the density of $Z$, taking into account the change of variable $\frac{Z\mu_1^Z}{\mu_2^Z-(\mu_1^Z)^2}$, can be approximated by 
\begin{align}
g^{3}(z) = w(\bar{\beta}z) \sum_{j=0}^{3} c_{j} H_{j}(\bar{\beta} z)\bar{\beta}\,,
\end{align}
and therefore, by rewriting $H_{j}(z)=\sum_{i=0}^{j} \gamma_{j,i} z^i$, we get 
\begin{align}
\mathbb{E}_t\left[   \left(  Y_{T}\right)_{+}\right]&=\mathbb{E}_t\left[   \left(  Z+k\right)_{+}\right] \sim \int_{-k}^{+\infty} (z+k)g^{3}(z)dz\\
&= \int_{-\bar{\beta} k}^{+\infty}  \left(\frac{y}{\bar{\beta}} + k\right)  w(y)\sum_{j=0}^{3}\sum_{i=0}^{j}  c_j\gamma_{j,i}y^{i}dy\\
&=  \sum_{j=0}^{3}\sum_{i=0}^{j}  c_j\gamma_{j,i}\frac{\Gamma(\bar{\alpha}+i+2,-\bar{\beta} k)}{\bar{\beta}\Gamma(1+\bar{\alpha})}  + \sum_{j=0}^{3}\sum_{i=0}^{j}  c_j\gamma_{j,i}k\frac{\Gamma(\bar{\alpha}+i+1,-\bar{\beta} k)}{\Gamma(1+\bar{\alpha})}\,,
\end{align}
where we used the condition $\bar{\beta}>0$ while $\Gamma(s,x)$ stands for the upper incomplete gamma function. To avoid overflows with the gamma function, we use its regularized version so that
\begin{align}
\frac{\Gamma(\bar{\alpha}+i+2,-\bar{\beta} k)}{\Gamma(1+\bar{\alpha})}=Q(\bar{\alpha}+i+2,-\bar{\beta} k)\frac{\Gamma(\bar{\alpha}+i+2)}{\Gamma(1+\bar{\alpha})}\,,
\end{align} 
and thank to the relations $Q(s,x)=\Gamma(s,x)/\Gamma(s)$ and $(x)_n=\Gamma(x+n)/\Gamma(x)$ where $(x)_n$ is the Pochhammer function (rising factorial) we deduce the result.\\

If the eigenvalues of $a_4$ are all negative then $\textup{tr}[a_4v]<0$ for $v\in \mathbb{S}_n^{++}$ so that necessarily $b_4>0$ and there exists $k>0$ such that $Y_{T}<k$. A natural choice for $k$ is $k=b_4$. To reuse previous computations, we rewrite the expectation as follows
\begin{align}
\mathbb{E}_t\left[   \left(  Y_{T}\right)_{+}\right] = \mathbb{E}_t\left[   \left( - (-b_4 - \textup{tr}[a_4v_T]) \right)_{+}\right]
\end{align}
and define $Z= (-b_4 - \textup{tr}[a_4v_T])+k$, then $\mathbb{E}_t\left[   \left(  Y_{T}\right)_{+}\right] = \mathbb{E}_t\left[   \left(-(Z+k)\right)_{+}\right]$ and using the previous case we reach the announced result.
\end{proof}

\end{document}